\documentclass{article}
\usepackage{supplemental/bostanci-nirkhe}

\title{Separating Quantum and Classical Advice with Good Codes}
\ifsubmission
\author{Anonymous Submission to FOCS 2026}
\date{}
\else
\author[1]{John Bostanci}
\affil[1]{\small Columbia University, \textit{New York, NY}}
\author[2]{Andrew Huang}
\affil[2]{Massachusetts Institute of Technology, \textit{Cambridge, MA}}
\author[2]{Vinod Vaikuntanathan}
\date{}
\fi

\begin{document}
\maketitle

\begin{abstract}
    We show an unconditional classical oracle separation between the class of languages that can be verified using a quantum proof ($\QMA$) and the class of languages that can be verified with a classical proof ($\QCMA$).  Compared to the recent work of Bostanci, Haferkamp, Nirkhe, and Zhandry (STOC 2026), our proof is conceptually and technically simpler, and readily extends to other oracle separations. In particular, our techniques yield the {\em first unconditional} classical oracle separation between the class of languages that can be decided with quantum advice ($\BQP/\qpoly$) and the class of languages that can be decided with classical advice ($\BQP/\poly$), improving on the quantum oracle separation of Aaronson and Kuperberg (CCC 2007) and the classically-accessible classical oracle separation of Li, Liu, Pelecanos and Yamakawa (ITCS 2024).

    \vspace{4mm}
  
    Our oracles are based on the code intersection problem introduced by Yamakawa and Zhandry (FOCS 2022), combined with codes that have extremely good list-recovery properties.
\end{abstract}

\newpage
\pagenumbering{roman}
\tableofcontents
\newpage
\pagenumbering{arabic}

\section{Introduction}

We study the fundamental question of whether quantum proofs and advice are more powerful than their classical counterparts.  In the language of complexity theory, we consider whether the complexity classes $\QMA$~\cite{Kit97} and $\QCMA$~\cite{AN02}, two quantum generalizations of $\NP$ with quantum and classical proofs respectively, are distinct; and whether the complexity classes $\BQP/\qpoly$ and $\BQP/\poly$, two generalization of $\BQP$ with quantum and classical advice respectively, are distinct. 

Recall that the class $\QMA$, first defined by Kitaev~\cite{Kit97}, is the class of all languages $\Ll_{\mathsf{YES}}, \Ll_{\mathsf{NO}} \subseteq \{0,1\}^*$ for which there is a quantum polynomial-time verifier $V$ and a polynomial function $t$ such that 
\begin{align*}
  x \in \Ll_{\mathsf{YES}} & \implies \exists \ket{\psi} \in (\CC^2)^{\otimes t(\lambda)}:\ \ \Pr[V(x, \ket{\psi}) = 1] \geq {2}/{3} \hspace*{.1in} \mbox{and} \\
  x \in \Ll_{\mathsf{NO}} &  \implies \forall \ket{\psi^*} \in (\CC^2)^{\otimes t(\lambda)},\ \ \Pr[V(x, \ket*{{\psi^*}}) = 1] \leq {1}/{3}.
\end{align*}
Replacing the quantum witness $\ket{\psi}$ by a classical witness $w$ (but retaining the quantum verifier) results in the class $\QCMA$, first defined by Aharonov and Naveh~\cite{AN02}. They posed the question of whether $\QMA$ and $\QCMA$ are distinct, and it has since then been a long-standing open problem in quantum complexity theory.

Analogously, the class $\BQP/\qpoly$~\cite{NY04} is the class of all languages $\Ll_{\mathsf{YES}}, \Ll_{\mathsf{NO}} \subseteq \{0,1\}^*$ for which there is a quantum polynomial-time algorithm $A$, a polynomial function $t$ and a family of advice states $\{\ket{\adv_{\lambda}} \in (\CC^2)^{\otimes t(\lambda)}\}_{\lambda \in \mathbb{N}}$ such that for every input $x$ of length $\lambda$,
\begin{align*}
  x \in \Ll_{\mathsf{YES}} & \implies \Pr[A(x, \ket{\adv_{\lambda}}) = 1] \geq {2}/{3}, \hspace*{.1in} \mbox{and} \\
 x \in \Ll_{\mathsf{NO}}&  \implies \Pr[A(x, \ket{\adv_{\lambda}}) = 1] \leq {1}/{3}.
\end{align*}
Replacing the quantum advice $\ket{\adv_{\lambda}}$ by a classical advice string $a_{\lambda}$ (but retaining the quantumness of the algorithm $A$) results in the class $\BQP/\poly$. 
Nishimura and Yamakami~\cite{NY04} first posed the question of whether $\BQP/\qpoly$ and $\BQP/\poly$ are distinct complexity classes, and it too has been a long-standing open problem in quantum complexity theory.

Given that unconditional separations of these classes seem far out of reach of current technology as they would imply breakthrough results like $\P \neq \PSPACE$ and $\PP \subsetneq \P/\poly$, research attention has shifted to providing oracular evidence. The first progress towards this came from the seminal work of Aaronson and Kuperberg \cite{AK07}, who introduced the weaker notion of a \emph{unitary}, or \emph{quantum}, oracle as a means to separate $\QMA$ and $\QCMA$, as well as $\BQP/\qpoly$ and $\BQP/\poly$, and asked whether their quantum oracle separations could be strengthened to a classical oracle separation. After a long series of works which demonstrated either conditional or non-standard separations \cite{FK15,LLPY23,NN24,BDK24,Zha24,LMY25}, the first unconditional classical oracle separation between $\QMA$ and $\QCMA$ appeared in the recent work of Bostanci, Haferkamp, Nirkhe and Zhandry~\cite{BHNZ25}. Even so, the case of $\BQP/\qpoly$ versus $\BQP/\poly$ has remained wide open. 

The oracle used in the separation of \cite{BHNZ25} (introduced in \cite{Zha24}) leverages one of the most natural properties to separate quantum from classical computers: whether two functions are related by the Hadamard transform. This idea was first exploited in the work of \cite{Aar10}, who presented the ``Forrelation'' problem, i.e., deciding whether two functions are related by the Hadamard transform, as a candidate separation between $\BQP$ and $\PH$. The works of \cite{Zha24} and \cite{BHNZ25} lift this problem to $\QMA$ with a variant dubbed ``spectral Forrelation''. The separation of \cite{BHNZ25} employs sophisticated techniques to analyze Forrelated oracles, including ideas inspired by mathematical physics and Hamiltonian learning, resulting in a rather complicated proof. It is therefore natural to wonder whether a classical oracle separation requires such heavy technical machinery. 
\begin{quote}
\begin{center}
\textbf{Question 1:} \emph{Is there a simple classical oracle separation between $\QMA$ and $\QCMA$?}
\end{center}
\end{quote}

A closely related problem is to separate $\BQP/\qpoly$, problems that can be solved with quantum advice, from $\BQP/\poly$, problems that can be solved with classical advice.  As opposed to the proof setting, where a verifier gets an untrusted proof that can depend on their specific instance, in the advice setting the verifier receives a \emph{trusted} witness, but is expected to use their witness to solve all instances of the same length. Thus, the questions of separating $\QMA$ from $\QCMA$ and $\BQP/\qpoly$ from $\BQP/\poly$, while related, are not formally equivalent. The original work of Aaronson and Kuperberg~\cite{AK07} also gave a unitary oracle separation for $\BQP/\qpoly$ and $\BQP/\poly$ and asked if this too could be lifted to the classical oracle model. Thus, we investigate the following question:

\begin{quote}
\begin{center}
\textbf{Question 2:} \emph{Is there a (standard) classical oracle separation between $\BQP/\qpoly$ and $\BQP/\poly$?}
\end{center}
\end{quote}

The only other advice separation follows from the work of Li, Liu, Pelecanos and Yamakawa~\cite{LLPY23}, in a non-standard ``classically-accessible classical oracle'' model where all queries to the oracles are restricted to be classical. A true classical oracle separation between $\BQP/\qpoly$ and $\BQP/\poly$ remains unresolved.  


\subsection{Our Work}

We re-examine a line of works \cite{Liu23, LLPY23, BDK24} which have made progress towards a full separation between $\QMA$ and $\QCMA$, based on the code intersection problem of Yamakawa and Zhandry~\cite{YZ24}. We begin by showing that Question 1 can indeed be resolved using a modification of the code intersection problem when instantiated with codes that have extremely good list-recovery properties.

\begin{theorem}[Informal]
    There exists a classical oracle $\Oo$ such that $\QMA^{\Oo} \not\subseteq \QCMA^{\Oo}$.
\end{theorem}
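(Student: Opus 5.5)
The plan is to build the oracle from the Yamakawa–Zhandry code intersection problem, as the abstract indicates. Fix a code $C \subseteq \Sigma^n$ with large alphabet $\Sigma$ and very strong list-recovery guarantees; folded Reed–Solomon or similar codes should suffice. For each security parameter $\lambda$ the oracle $\Oo$ contains a random function $H:[n]\times\Sigma\to\{0,1\}$, or more generally a random subset $S_i\subseteq\Sigma$ of density $1/2$ for each coordinate $i$. A YES instance is one where $\Oo$ encodes a \emph{planted} structure: a hidden random shift or subspace such that the set of codewords $c\in C$ with $c_i\in S_i$ for all $i$ is nonempty and accessible. A NO instance is one where this set is empty, or structured so that no verifier test passes.

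The quantum witness will be the uniform superposition $\ket{\psi}=\sum_{c\in C\cap (S_1\times\cdots\times S_n)}\ket{c}$, possibly twisted by oracle-dependent phases as in the Yamakawa–Zhandry / \cite{Liu23,LLPY23,BDK24} line. The verifier checks membership via $\Oo$ queries. It then checks that the state is spread over exponentially many codewords, using a Fourier or dual-code measurement (measuring in the Fourier basis yields an element of $C^\perp$ plus noise), which a classical cheater's collapsed or low-entropy state cannot pass. Completeness follows from the Yamakawa–Zhandry analysis: the intersection set is large, and the Fourier-domain test succeeds with constant probability.

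Soundness against a classical witness $w$ of $\poly(\lambda)$ bits is the heart of the argument. I would use a compression/presampling argument: conditioned on a fixed classical witness, the oracle is a random oracle with at most $\poly$ bits of leakage. By the bit-fixing / presampling technique of Unruh and Chung–Guo–Liu–Qian, it can be replaced by an oracle fixed on $\poly$ points and uniform elsewhere, at $1/\poly$ loss. It then suffices to show that no $\poly$-query quantum algorithm, given such a partially fixed oracle, can produce a state passing the verifier. This reduces to showing it cannot output (or coherently find) codewords in the intersection beyond those determined by the fixed points. Here list recovery enters: the fixed points constrain only $\poly$ symbols per coordinate, so by list-recoverability only $\poly$ codewords are consistent with them. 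The algorithm must therefore find new intersecting codewords on a fresh random oracle. Yamakawa–Zhandry (via the compressed oracle method) shows this is hard beyond $\poly$ many, and in particular hard to produce the exponential superposition the verifier demands.

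The main obstacle is the last step. The verifier must distinguish a genuine exponentially large superposition from one prepared from a $\poly$-size list of codewords plus oracle queries. The presampling loss must also be made small enough, which is where the "extremely good" list recovery parameters are needed: list size must stay polynomial even when the input lists have size $\poly(\lambda)$ at every coordinate. I would first try to make the verifier's test be "measure in the Fourier basis and check the outcome lies in $C^\perp$ with prescribed oracle-dependent structure." I would then bound, via the compressed oracle, the probability that a state supported on a $\poly$-dimensional codeword span (plus $q$ queries) passes, aiming for a bound like $\poly\cdot 2^{-\Omega(n)}$.
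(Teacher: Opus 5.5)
There is a genuine gap, and it is in the construction itself, not only in the final step you flag.

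\textbf{The problem collapses into $\mathsf{BQP}$.} As you describe it, the verifier has quantum query access to $H$, i.e.\ to the sets $S_i$. Then the Yamakawa--Zhandry algorithm prepares the superposition over codewords $c\in C$ with $H(c)=x$ by itself, with polynomially many queries and no witness. Oracle-computable phases can be added by phase kickback. YZ's hardness result is only against \emph{classical} query algorithms, so your claim that YZ ``shows this is hard beyond $\mathrm{poly}$ many'' inverts it. Consequently any test your verifier applies to $\ket{\psi}$ is passed equally well by a state it can build on its own, and the problem lies in $\mathsf{BQP}^{\mathcal{O}}\subseteq\mathsf{QCMA}^{\mathcal{O}}$. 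For the same reason, the presampling step can never give soundness: a fresh random oracle is exactly where YZ succeeds.

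The YES/NO instances are also never actually defined. For random $S_i$ of density $1/2$, the intersection is with overwhelming probability exponentially large, so ``NO $=$ empty intersection'' is incompatible with $H$ being random. ``A planted shift or subspace'' is not a promise problem anyone can analyze.

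\textbf{The missing idea.} Hide $H$ behind a verification-only oracle: $O[H,E](x,\mathbf{v})=1$ iff $\mathbf{v}\in C$, $H(\mathbf{v})=x\|0^{q-\lambda}$, and $(x,\mathbf{v})\in E$. YES means $E$ is everything; NO means $|E|\le 2^{\lambda}/3$.

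The witness is not the intersection state. It is $\bigotimes_i\ket{\phi_{i,0}}\ket{\phi_{i,1}}\otimes\sum_{c\in C}\ket{c}$, i.e.\ preimage states of each $H_i$ plus the code state. From this, the query-free second phase of YZ produces a valid $\mathbf{v}$ for \emph{any} $x$. The verifier picks a random $x$ and makes one query. Soundness is then trivial, and no ``spread'' test is needed.

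The $\mathsf{QCMA}$ lower bound then uses rewinding, not presampling plus compressed oracles. Once $\Delta$ consists only of valid pairs, $O[H,\Delta]$ can be simulated from $\Delta$ alone. The hybrid argument therefore turns a verifier with a $t$-bit classical witness into a query-free guesser. This guesser outputs $\ell$ distinct valid pairs with probability at least $2^{-t}(1/144Q^2)^{\ell}$. Near-optimal $(\ell,2\ell)$ list recovery then forces at least $q\ell/2$ symbol values of $H$ to be guessed correctly, which is far less likely.

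This needs list recovery much stronger than ``polynomial output lists for polynomial input lists,'' and folded Reed--Solomon codes do not provide it. That is why the paper uses multiplicity codes. Their low-rate duals in turn force a $1/\lambda^4$-biased $H$ and targets of the form $x\|0^{q-\lambda}$, so that dual decoding still succeeds.
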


Our separation, in addition to admitting a much simpler proof, has an additional benefit over the result of \cite{BHNZ25}: our more structured oracle allows us to lift our result to provide the \textbf{first unconditional classical oracle separation} between $\BQP/\qpoly$ from $\BQP/\poly$ in a straightforward manner.\footnote{A similar idea appeared in \cite{LLPY23}, where a variant of the Yamakawa-Zhandry problem was used to give a separation between quantum and classical advice for algorithms which are only allowed classical access to all oracles. As we will see later, our separation strictly improves on this result since we rule out all $\BQP/\poly$ algorithms with \emph{quantum} oracle access while only requiring a single classical query given quantum advice.} 
\begin{theorem}[Informal]
    There exists a classical oracle $\Oo$ such that $\BQP^{\Oo}/\qpoly \not\subseteq \BQP^{\Oo}/\poly$.
\end{theorem}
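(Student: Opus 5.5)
We would build the oracle from the Yamakawa--Zhandry code intersection problem, instantiated with a code $C \subseteq \Sigma^n$ that has very strong list-recovery properties. For each input length $\lambda$ we sample random functions $H_1,\dots,H_n : \Sigma \to \{0,1\}$. We also fix a secret, such as a random shift or a random subset structure, and use it to define the language bit. The quantum advice is a state from the natural family: a superposition over codewords $c \in C$ with $H_i(c_i)=0$ for all $i$. Any single copy can be prepared with the oracle through the Yamakawa--Zhandry procedure, namely $n$ independent Fourier-domain preimage states followed by decoding the dual code. The oracle's job is to let us test, with one classical query, whether a string $c$ is a valid codeword preimage. Then $\BQP/\qpoly$ solves the problem: on input $x$ of length $\lambda$, the algorithm measures the advice to obtain a valid $c$ and queries $\Oo(x,c)$ once, which returns the answer bit $L(x)$ only on valid $c$. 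Since a measured codeword is a valid certificate for every $x$, one advice state serves all inputs. To get success probability $2/3$ we use $\poly$ copies, or a fresh copy per run. This gives the easy direction.

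For the lower bound against $\BQP^{\Oo}/\poly$, we diagonalize over $L$ chosen uniformly at random. Any $\BQP/\poly$ algorithm with $s=\poly(\lambda)$ bits of advice that decides $L$ on most inputs yields a query algorithm that outputs valid codewords. Concretely, the bit $L(x)$ is information-theoretically hidden unless a valid $c$ is queried, so by the O2H lemma some query to $\Oo(x,\cdot)$ must have non-negligible weight on valid $c$. Measuring a random query therefore produces a valid $c$ with probability $1/\poly$. This must hold for a $1/\poly$ fraction of the $x$'s, so we can extract many valid codewords from a single $s$-bit advice string.

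The core step is a classical-advice-to-no-advice reduction. We use the standard presampling (bit-fixing) technique of Unruh and Chung--Guo--Liu--Qian, adapted to quantum queries: an $s$-bit advice algorithm succeeding with probability $\delta$ implies an algorithm, with no advice and an oracle with $\poly(s/\delta)$ fixed positions, succeeding with probability about $\delta 2^{-s}$, or alternatively a multi-instance/direct-product amplification. We would prefer the multi-instance form, because it gives the strongest bound. We must show that producing $k=\omega(s)$ valid codewords for $k$ independent instances (independent $H$'s, or independently shifted codes) is hard with probability $2^{-\Omega(k)}$, even given poly many fixed oracle points. Once this holds, a union bound over the $2^s$ advice strings finishes the argument.

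The main obstacle is proving this hardness for quantum-query algorithms, and this is where list recovery enters. A polynomial-query algorithm learns, for each coordinate $i$, only a polynomial-size list $S_i \subseteq \Sigma$ of symbols it has found in the preimages of $H_i$ (via a compressed-oracle argument). Outputting a valid codeword essentially requires $c_i \in S_i$ for most $i$. Extreme list recovery says that only $\poly$, or even $\le |S|^{O(1)}$, codewords agree with such lists, and each one is valid only with probability $2^{-n}$ over the unqueried randomness. We would formalize this with Zhandry's compressed oracle and a progress measure bounding the amplitude on databases containing a full codeword. We would then tensorize the measure across the $k$ instances to get the $2^{-\Omega(k)}$ bound, handling the presampled points by absorbing them into the lists $S_i$. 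Getting this bound to hold uniformly, robust to the fixed points and valid for quantum queries, is the step we expect to need the most care.
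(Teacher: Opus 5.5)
There is a genuine gap, and it sits in the first step: your certificate does not depend on the input. You take the valid strings to be codewords $c \in C$ with $H_i(c_i)=0$ for all $i$, and you note yourself that ``a measured codeword is a valid certificate for every $x$.'' A single such codeword $c_0$ is a $\poly(\lambda)$-bit string, so it is already perfectly good \emph{classical} advice. A machine holding $c_0$ queries $\Oo(x,c_0)$ and reads off $L(x)$ for every $x$ of length $\lambda$. Your language therefore lies in $\BQP^{\Oo}/\poly$, even in $\mathsf{P}^{\Oo}/\poly$. For the same reason your lower-bound step ``we can extract many valid codewords from a single $s$-bit advice string'' is false for this oracle: every $x$ can be served by the same certificate. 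No presampling, compressed-oracle or list-recovery analysis can fix this, because the statement you would be proving is untrue. Separately, the compressed-oracle argument about lists $S_i$ of symbols ``found in the preimages of $H_i$'' does not apply as stated, since your algorithm never gets query access to $H$, only to the validity-checking oracle.

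The missing idea is to tie the certificate to $x$ while keeping the quantum advice input-independent. In the paper, the oracle returns $G(x)$ on $(x,\bfv)$ only if $\bfv \in C_\lambda$ and $H(\bfv)=x\|0^{q-\lambda}$. The advice is $\ket{\adv_H}=\bigotimes_i \ket{\phi_{i,0}}\otimes\ket{\phi_{i,1}}\otimes\ket{\psi}$, containing both preimage states of every $H_i$ plus the code state. On input $x$ the algorithm picks out the states $\ket{\phi_{i,x_i}}$ and runs Yamakawa--Zhandry to get an $x$-specific codeword. Distinct inputs now force distinct valid pairs, and that is what makes the paper's lower bound work:
\begin{itemize}
\item A Yao-box bound (Chung--Guo--Liu--Qian), which also handles advice that depends on $G$, forces any good $\BQP/\poly$ decider to put query mass on valid pairs.
\item This is converted into a query-free guesser that outputs $\ell$ distinct valid pairs with probability $2^{-\poly}\cdot(1/\poly)^{\ell}$. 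A separate punctured-oracle argument covers the case where the query mass concentrates on at most $2^{\lambda/2}$ points.
\item List recovery then forces $\ell$ distinct codewords to touch at least $q\ell/2$ distinct symbols, each of whose hash must be guessed correctly.
\end{itemize}
Once you fix this, you hit a second obstacle your proposal does not address. Codes with list recovery this strong (multiplicity codes with $k=\lambda^3$, $s=\lambda$, $q\approx\lambda^5$) have low rate. Their dual distance is about $(k+1)/s\approx\lambda^2$, far below the roughly $q/2$ errors that Yamakawa--Zhandry incurs with unbiased $H$. The paper resolves this by biasing $H$ to output $1$ with probability $1/\lambda^4$ and asking only for hashes of the form $x\|0^{q-\lambda}$, which keeps the error weight $O(\lambda)$.
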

In fact, our oracle separates $\NP^{\Oo} \cap \coNP^{\Oo} \cap \BQP^{\Oo}/\qpoly$ from $\BQP^{\Oo}/\poly$, indicating a structural difference between our oracle and those based on spectral Forrelation \cite{Zha24,BHNZ25} or expander mixing \cite{Lut11,NN24,LMY25}, which appear to originate from $\QMA$-complete problems.

\section{Technical Overview}
\paragraph{The Yamakawa-Zhandry Algorithm.} Our separation begins with the code intersection problem~\cite{YZ24}: given a code $C \subseteq \Sigma^n = (\FF_q^s)^n$, function $H: [n] \times \Sigma \to \{0, 1\}$, and hash $x \in \{0, 1\}^n$, find a codeword $c \in C$ such that $H(i, c_i) = x_i$ for all $i \in [n]$ (we will use the shorthand $H(c) = x$ to refer to this constraint). Yamakawa and Zhandry show that when given oracle access to $H$, this problem has an efficient quantum algorithm but no (uniform) classical ones, giving a relativized separation between $\FP$ and $\FBQP$. As our result will require modifying the Yamakawa-Zhandry algorithm, we begin by briefly explaining how it works.

We begin by noting that it suffices to be able to produce the state 
\begin{equation*}\ket{\psi} \propto \sum_{\substack{\bfv: \bfv \in C \\H(\bfv) = x}} \ket{\bfv} = \sum_{\bfv \in \Sigma^n} \Id_{H, x}(\bfv) \cdot \Id_C(\bfv) \ket{\bfv}\,,
\end{equation*}
where $\Id_{H, x}(\cdot)$ and $\Id_C(\cdot)$ are indicator functions for the event $H(\bfv) = x$ and for the event $\bfv \in C$, respectively. Taking inspiration from Regev's reduction from SIS to LWE \cite{Reg09}, Yamakawa and Zhandry observe that $\ket{\psi}$ is the pointwise product of the states
\begin{equation*}
\ket{\phi_1} := \ket{\phi_1(x)} \propto \sum_{\substack{\bfv \in \Sigma^n\\ H(\bfv) = x}} \ket{\bfv} = \bigotimes_{i = 1}^n \sum_{\substack{v_i \in \Sigma\\ H(i, v_i) = x_i}} \ket{v_i} \quad \text{and} \quad \ket{\phi_2} \propto \sum_{\bfv \in C} \ket{\bfv}\,,
\end{equation*}
both of which can be prepared efficiently given access to $H$. By the convolution theorem, we know that
    \[ \QFT_q \ket{\psi} \propto \QFT_q (\ket{\phi_1} \odot \ket{\phi_2}) = \QFT_q \ket{\phi_1} \star \QFT_q \ket{\phi_2}, \]
where $\odot$ denotes the point-wise product of two vectors, and $\star$ their convolution. Therefore, it suffices to efficiently prepare the state $\ket{\mathsf{goal}} := \QFT_q \ket{\phi_1} \star \QFT_q \ket{\phi_2}$, as $\QFT_q^{-1} \ket{\mathsf{goal}} = \ket{\psi}$. If $C$ is a $\FF_q$-linear code, then $\QFT_q \ket{\phi_2}$ is simply the uniform superposition over the dual code $C^{\perp}$, so we can produce the states
\begin{align*}
   \QFT_q \ket{\phi_1} \otimes \QFT_q \ket{\phi_2} \otimes \ket{0} \propto & \sum_{\substack{\ee \in \Sigma^n\\ \bfv \in C^{\perp}}} \sqrt{\Dd_{H, x}(\ee)} \ket{\ee} \ket{\bfv} \ket{0}
    \\\stackrel{U_{\mathsf{add}}}{\mapsto} &\sum_{\substack{\ee \in \Sigma^n\\ \bfv \in C^{\perp}}} \sqrt{\Dd_{H, x}(\ee)} \ket{\ee} \ket{\bfv} \ket{\bfv+\ee}
    \\
    \stackrel{U_{\mathsf{sub}}}{\mapsto} & \sum_{\substack{\ee \in \Sigma^n\\ \bfv \in C^{\perp}}} \sqrt{\Dd_{H, x}(\ee)} \ket{0} \ket{\bfv} \ket{\bfv+\ee},
\end{align*}
while our desired state is
    \[ \ket{0} \ket{0} \ket{\mathsf{goal}} = \ket{0} \ket{0} \otimes \QFT_q \ket{\phi_1} \star \QFT_q \ket{\phi_2} \propto \sum_{\substack{\ee \in \Sigma^n\\ \bfv \in C^{\perp}}} \sqrt{\Dd_{H, x}(\ee)} \ket{0} \ket{0} \ket{\bfv+\ee}, \]
where $\Dd_{H, x}(\cdot)$ is the density function of $\QFT_q \ket{\phi_1}$ (here, we are ignoring phases for simplicity of exposition). For a completely random function $H$ and a fixed $x$, we observe that $\Dd_{H, x}(\cdot)$ will have roughly half of its weight on 0 in each coordinate (since around half of all symbols in each coordinate should hash to 0 or 1), while the remaining half of its weight will be close to uniform over nonzero symbols. By taking $C$ to be a folded Reed-Solomon (FRS) code with sufficiently high rate, Yamakawa and Zhandry show that one can efficiently decode $C^{\perp}$ from errors over $\Dd_{H, x}(\cdot)$ with high probability and consequently prepare $\ket{\mathsf{goal}}$ as desired.

\paragraph{Lifting the Separation.} The obvious issue in using the code intersection problem to separate $\QMA$ from $\QCMA$, however, is that both $\QMA$ and $\QCMA$ algorithms can make quantum oracle access to $H$, and our problem is already in $\BQP$! Thus, we must make some modifications to the problem at hand.

First, as observed by \cite{Liu23}, the first phase of this algorithm can be made non-adaptive: the state $\ket{\phi_2}$ depends only on $C$, while $\ket{\phi_1}$ depends only mildly on $x$, since we can simply prepare all $2n$ preimage states
\begin{equation*} \ket{H^{-1}_1(0)} := \sum_{\substack{x \in \Sigma\\ H(1, x) = 0}} \ket{x},\ \ldots \ , \ket{H^{-1}_n(1)} := \sum_{\substack{x \in \Sigma\\ H(n, x) = 1}} \ket{x}\,,
\end{equation*}
before selecting $\ket{\phi_1}$ when given $x$. On the other hand, the second phase does not require access to $H$, and should work equally well for all $x$. Therefore, if we are given the state
    \[ \ket{\adv_H} := \bigotimes_{i = 1}^n \ket{H^{-1}_i(0)} \otimes \bigotimes_{i = 1}^n \ket{H^{-1}_i(1)} \otimes \ket{\phi_2} \]
as our quantum proof or advice, we can produce $\ket{\phi_1(x)} := \bigotimes_{i = 1}^n \ket{H^{-1}_i(x_i)}$ for {\em any} $x$ and use it to produce a solution to the code intersection problem for $x$ without access to $H$! We can therefore replace $H$ with the much weaker oracle $O_H(x, v)$ that simply verifies if the vector $v$ hashes to $x$. 

Of course, this problem is still in $\NP \subseteq \QCMA$, since for a fixed $x$, the prover can always send any codeword $v$ that hashes to $x$! Luckily, we claim that this is pretty much the \emph{only} thing that the prover can do. To operationalize this intuition, we note that our quantum proof/advice is in some sense encoding many codewords along with their hashes in superposition. Thus, taking some inspiration from \cite{LLPY23, BDK24}, we define the \emph{Code Intersection Subset Size} ($\mathsf{CISS}$) problem as follows: 
\begin{quote} 
    Estimate the size of a set $E \subseteq \{0, 1\}^n \times \Sigma^n$, under the promise that either $E = \{0, 1\}^n \times \Sigma^n$ or $|E| \leq t$ for some threshold $t \ll 2^n$, given access to the following oracle $O[H, E](x, v)$:
    \begin{eqnarray*}
        O[H, E](x, v) & = \left\{ \begin{array}{ll} 1 & \mbox{if $v \in C$, $H(v) = x$, and $(x, v) \in E$,} \\ 0 & \mbox{otherwise.} \end{array} \right.
    \end{eqnarray*}
\end{quote} 

This problem is naturally in $\QMA$: the prover can give as proof $\ket{\adv_H}$ which depends only on $H$. Given $\ket{\adv_H}$, the verifier can sample a random $x \in \{0, 1\}^n$ and should be able to produce with high probability $v \in C$ such that $H(v) = x$. The oracle $O$ therefore allows the verifier to check if $(x, v) \in E$, making the set estimation problem trivial. In fact, the $\QMA$ verifier need not be concerned about malicious proofs, since a NO instance of $O$ always outputs 0 whenever $x$ (which is sampled solely by the verifier) is not in the support of $E$. 

We note that our problem differs from the oracle problems defined in \cite{LLPY23, BDK24}, which take $E = F \times \Sigma^n$ for some small set $F$. This modification, while not impacting our $\QMA$ algorithm, will be crucial in establishing a $\QCMA$ lower bound.  

\paragraph{An Entropic Viewpoint.} To rule out $\QCMA$ proof systems, we first observe that the major difference between quantum and classical proofs lies in their \emph{clonability}. In particular, an (oracle) algorithm which uses a \emph{classical} witness can always be re-run with the same witness, even if it makes measurements. This simple and seemingly obvious fact, first formally identified in \cite{Zha24}, was utilized to great effect by \cite{BHNZ25} to give their classical oracle separation, and we will take advantage of it as well.

To this end, suppose there was some $Q$-query $\QCMA$ verifier $V$ which succeeded in the $\mathsf{CISS}$ problem. We observe that this means that $V$ can always distinguish between $O[H, \{0, 1\}^n \times \Sigma^n]$ and $O[H, E]$ whenever $E$ is small. But these oracles differ only at inputs $(x, v)$ where $v \in C$, $H(v) = x$, and $(x, v) \notin E$! Thus, by the hybrid lemma \cite{BBBV97}, if we measure a random query that $V$ makes to $O[H, E]$, we should expect to get a new pair $(x, v) \notin E$ such that $v \in C$ and $H(v) = x$ with good probability provided $E$ is small.\footnote{Critically, if we have already successfully guessed some collection $E$ of code words and hash values, we can always perfectly simulate $O[H, E]$. The same is \emph{not} true for the oracles of \cite{LLPY23, BDK24}, where even ``small" sets correspond to exponentially many codewords.} This gives rise to a natural algorithm for \emph{guessing} the hash values of codewords: starting with $E = \emptyset$, simulate a run of $V$ with $O[H, E]$ and measure a random query before adding the measurement outcome to $E$; rinse and repeat. By our previous argument, conditioned on having a good witness, each iteration of this algorithm should correctly produce a new codeword and hash with non-negligible probability. We can therefore turn $V$ into a guesser with non-uniform advice which correctly produces many distinct codewords and their hash values without making \emph{any} oracle queries. After guessing the classical $w$-bit witness, this gives rise to an unconditional no-query algorithm which guesses the hash values of $\ell$ codewords in $C$ for all $\ell \leq t$ with probability 
    \[ 2^{-w} \cdot \bigg(\Omega(Q^{-2})\bigg)^{\ell} = 2^{-\poly(n)} \cdot \bigg(\frac{1}{\poly(n)} \bigg)^{\ell}~. \]
We now argue that this is in fact impossible. Observing that $H$ is independently random at each coordinate $i$ and symbol  $x \in \Sigma$, we can upper bound the success probability of any sampler which produces $\ell$ points by $(\frac{1}{2})^{s(\ell)}$, where $s(\ell)$ is the minimum number of symbols that appear among $\ell$ distinct codewords in $C$. Taking $\ell = t = \omega(\poly(n))$, if we can argue that $s(\ell) = \omega(\log n \cdot \ell)$, then we see that
    \[ 2^{-\poly(n)} \cdot \left(\frac{1}{\poly(n)}\right)^{\ell} = \bigg(\frac{1}{\poly(n)}\bigg)^{\ell} \gg \left(\frac{1}{2^{\omega(\log n)}}\right)^{\ell} = \left(\frac{1}{2}\right)^{s(\ell)}, \]
which will give us our desired contradiction.

\paragraph{List Recovery and Code Expansion.} How might we bound $s(\ell)$? For any $\ell$ distinct codewords $c_1, \ldots, c_{\ell} \in C$, define the lists $S_1, \ldots, S_n \subseteq \Sigma$ such that $S_i$ consists of all symbols in $\Sigma$ that appear in the $i$'th coordinate of some codeword $c_j$ for $j \in [\ell]$. Clearly, $s(\ell) = \min_{c_1, \ldots, c_{\ell}} \sum_{i = 1}^n |S_i|$, so there must be lists $S^{*}_1, \ldots, S^{*}_n$ such that
    \[ |C \cap (S^{*}_1 \times \ldots \times S^{*}_n)| \geq \ell \quad \text{and} \quad \sum_{i = 1}^n |S^{*}_i| = s(\ell)~. \]
We now see that the question of how small $s(\ell)$ can be is precisely characterized by the \emph{list-recoverability} of $C$. In particular, if we know that $C$ is $(L, O(L))$-list-recoverable for $L \lesssim \ell$, then this would mean that $\frac{s(\ell)}{n} = \frac{1}{n} \sum_{i = 1}^n |S^{*}_i| = \Omega(\ell)$ as desired!

Note that this is a pretty strong condition; it necessitates the use of codes that have {\em near-optimal list recovery}, a property that in particular is not satisfied by the FRS codes used by \cite{YZ24} (see \Cref{subsec:simplifications} for further discussion). Fortunately, there is a fix: the setting of zero-error list recovery is closely linked to the notion of unbalanced expanders, and the recent work of \cite{KTS22} shows that multiplicity codes exhibit precisely the sort of list recovery that we need. Moreover, the fact that multiplicity codes are $\FF_q$-linear and that their duals have relatively good distance \cite{RZVW24} should guarantee the success of our $\QMA$ algorithm. We note that although our dual code happens to admit efficient unique decoding, our separation only needs $C^{\perp}$ to be \emph{statistically} uniquely decodable, since we can always provide an (inefficient) decoding oracle.
 
\paragraph{A Final Complication.} It seems that this rather simple argument completes our separation; after all, by switching to using multiplicity codes (rather than folded Reed-Solomon codes as in \cite{YZ24,LLPY23,BDK24}), we have been able to rule out all possible $\QCMA$ algorithms. Sadly, we have to deal with one final and rather subtle issue, which has to do with the parameters of the multiplicity codes: in the process of obtaining excellent list recovery/expansion from our multiplicity codes, we are forced to make the relative rate of our (primary) code sub-constant, which means our dual code now has sub-constant relative distance! Recalling that our error distribution should concentrate on vectors with Hamming weight roughly $n/2$, we observe that this level of noise is now likely intolerable as there may not even exist a unique decoding most of the time under this error distribution.

Our solution is relatively simple, and it uses the generous amount of flexibility that the \cite{Reg09,YZ24} algorithm affords us. Instead of using a completely random function $H$, we will instead make our function \emph{biased} in favor of 0 (reminiscent of recent strategies employed by \cite{GGJL25, GGKM} in the context of communication complexity and low-depth implementations of the Yamakawa-Zhandry algorithm). That is, for each element $\sigma \in \Sigma$, $H(i, \sigma)$ will take on the value 0 with probability $p \gg \frac{1}{2}$. Thus, for $x = 0^n$, we can expect the error vectors in $\QFT \ket{\phi_1}$ to have Hamming weight $\approx n(1-p)$, drastically reducing the amount of noise that we are required to decode from with respect to $C^{\perp}$. 

This change does not come for free, however: unlike in \cite{GGJL25}, where the goal was to invert only $x = 0^n$, we need to be able to invert many $x$'s, including those with large Hamming weight. By biasing $H$ towards 0, on inputs like $x = 1^n$, we create an error distribution which has expected Hamming weight $\approx np \gg n/2$, thereby \emph{worsening} our ability to invert! 

Our final idea is to observe that since our algorithm can only invert low Hamming-weight vectors $x$ rather than all vectors in $\{0, 1\}^n$, we can simply \emph{modify the problem to enforce this condition}. Instead of trying to invert all $x \in \{0, 1\}^n$, we can focus on inverting vectors of the form $x \| 0^{n-n^c}$, where $x \in \{0, 1\}^{n^c}$ and $0 < c \ll 1$. That is, we will now try to differentiate $E = \{0, 1\}^{n^c} \times 0^{n-n^c} \times \Sigma^n$ from $E \subseteq F \times 0^{n-n^c} \times \Sigma^n$ where $|F| \leq t \ll 2^{n^c}$. Since $x \| 0^{n-n^c}$ has Hamming weight at most $n^c$, the corresponding error distribution for $\ket{\phi_1(x \| 0^{n-n^c})}$ will be concentrated on vectors with Hamming weight at most $n^c + (n-n^c)(1-p) \approx n(1-p)$, guaranteeing the success of our $\QMA$ algorithm provided our dual code has distance at least $O(n(1-p))$ and $p \lesssim 1-1/n^{1-c}$.

\paragraph{A General Recipe for a Classical Oracle Separation.}

Before describing the advice separation, we summarize all of the steps we have taken so far to provide a general recipe for getting a $\QMA$ versus $\QCMA$ oracle separation. We start with an infinite family of linear codes $\{C_{\lambda}\}_{\lambda \in \NN}$ over a large alphabet $\Sigma$ (so that there are many solutions to the code intersection problem) such that:
\begin{enumerate}
    \item Codewords of $C_{\lambda}$ consist of $n = \poly(\lambda)$ many symbols from $\Sigma$,
    \item $C_{\lambda}^{\perp}$ can be (efficiently or combinatorially) decoded from up to $\Omega(\lambda^{1+c})$ errors for some constant $c$,
    \item $C_{\lambda}$ has near optimal list recovery for sufficiently large $\lambda$; i.e. $s(\ell) = \Omega(n \cdot \ell)$ for $\ell = \lambda^{\omega(1)}$.
\end{enumerate}
Then for every $\lambda$, we can sample hash functions $H_1, \ldots, H_{n}$ to be biased so that roughly a $\lambda / n(\lambda)$-fraction of symbols are pre-images of $1$. We will ask for pre-images of $x \| 0^{n - \lambda}$ for $x \in \{0, 1\}^{\lambda}$. The bias of the $H_i$, together with the fact that we ask for a hash that has at most $\lambda$ many $1$'s, ensures that the dual decoding problem encounters an error with $O(\lambda)$ Hamming weight with high probability, which falls under our dual decoding distance of $\Omega(\lambda^{1+c})$. Thus, the Yamakawa-Zhandry algorithm works and the problem stays in $\QMA$. At the same time, a $\QCMA$ verifier will imply a sampler that outputs $v$ codewords of the code with probability $\poly(\lambda)^{-v}$, and list recovery will enforce that this corresponds to $\Omega(v \cdot n)$ symbols.  The bias of the $H_i$ will mean that the probability of guessing all symbols correctly will be $\left(1 - \lambda/n\right)^{\Omega(v \cdot n)} \approx \exp(-\lambda v) \ll \poly(\lambda)^{-v}$, giving us a contradiction.   

By instantiating this recipe with multiplicity codes, we arrive at our classical oracle separation.  

\paragraph{Moving to the Advice Setting: $\BQP/\qpoly$ vs. $\BQP/\poly$.} As mentioned previously, a separation in the proof setting does not immediately imply an advice-style separation. In fact, the $\mathsf{CISS}$ problem is easy with {\em trusted} advice: a single classical bit suffices to indicate whether the set in question is large or small.

The general paradigm we follow begins with an idea of \cite{AK07}: after sampling a random binary language $\Ll \subseteq \{0, 1\}^n$, the \cite{AK07} oracle essentially outputs whether an instance $x$ is in the language if it is also given as input a specific quantum state $\ket{\psi}$. Intuitively, if it is hard to find $\ket{\psi}$ given classical advice, then it should be difficult to decide $\Ll$. In this manner, \cite{AK07} transform a hard quantum search problem into a hard decision problem (relative to a quantum oracle). The classical oracle separation of \cite{BHNZ25}, being a kind of ``classicalization'' of the Aaronson-Kuperberg oracle, has a similar quantum search problem associated with it. However, since a classical oracle can no longer directly check the answer to a quantum search problem, obtaining a classical oracle separation between $\BQP/\qpoly$ and $\BQP/\poly$ seems to necessitate the use of a hard \emph{classical search problem} instead.

Here, the relative simplicity of our separation and our use of a classical search/$\TFNP$ problem allows us to straightforwardly extend our results to the question of $\BQP/\qpoly$ and $\BQP/\poly$. In contrast, it is not at all obvious (to us) how to construct even a candidate separating language based on the (decisional) spectral Forrelation problem of \cite{BHNZ25}.

To be more specific, the code intersection problem gives rise to the following $\BQP/\qpoly$ language: begin by sampling some random binary language $\Ll \subseteq \{0, 1\}^n$. On input $x \in \{0, 1\}^n$, our oracle $O$ will return whether $x \in \Ll$ or $x \notin \Ll$ provided it is also given a codeword $v$ which hashes to $x$. On the one hand, our original $\QMA$ algorithm still works as a $\BQP/\qpoly$ machine, since $\ket{\adv_H}$ is agnostic of $x$ and allows us to produce some $v$ for any $x$. On the other hand, as we have argued earlier, even trusted classical advice should not help a $\BQP$ machine to produce many valid codewords, so for most $x$, any $\BQP$ machine with classical advice will not be able to receive the output of the oracle indicating whether $x \in \Ll$. Since the advice is bounded, it cannot itself describe many elements of $\Ll$, so the $\BQP$ machine will fail to decide whether $x \in \Ll$ for most $x$. The problem of guessing the value of a random function $H$ given bounded-size advice and without querying $H$ can be made precise by appealing to results on Yao's box problem due to Chung, Guo, Liu and Qian~\cite{CGLQ20}. We show that by modifying ideas used by \cite{LLPY23} for analyzing {\em classical-access} $\BQP/\poly$ algorithms, we can reduce a decisional advice separation to a search-based separation for bona fide $\BQP/\qpoly$ machines. Finally, by adapting the techniques we used to separate $\QMA$ from $\QCMA$, we can also rule out the existence of $\BQP/\poly$ machines which solve the code intersection search problem, completing the advice separation. 


\section{Discussion and Open Questions}\label{sec:discussion}

Before describing the technical details of our construction and proof, we take some time to discuss parts of our result that might be interesting to a reader.  One question a reader might have is why this result, and similar results, did not appear sooner.  One of the main ideas of recent works on separating $\QMA$ from $\QCMA$ has been to start from a refined reduction from a $\QCMA$ algorithm to a sampler.  The idea of using sampling to separate $\QMA$ from $\QCMA$, although not new to this work, nor \cite{BHNZ25}, only started appearing in recent works~\cite{NZ23,Zha24}. Before this recent line of work, the common strategy for starting a separation was to take the ``most popular witness'' approach outlined by \cite{AK07}, which is used in \cite{FK15,NN24,BDK24,LMY25}.  Thus, one might expect that it would have been difficult to find this oracle separation prior to that.  However, the related idea of reducing bounds on advice to bounds on a ``multi-instance'' variant of some problem has previously appeared in literature on the auxiliary-input random oracle model (AI-ROM), in particular being used to rule out advice proofs for function inversion~\cite{CGLQ20}, and appeared in works as early as Aaronson's result on the power of one-way quantum communication~\cite{Aar04}.

These auxiliary-input random oracle model proofs bear some resemblance to the sampler idea. For example, if a $T$-query algorithm $\Aa$ inverts a function $f$ using an $S$-bit \emph{classical} witness $w$ with probability $\delta$, then it is not hard to see that there is a $gT$-query algorithm $\Bb$ which outputs preimages for $g$ random images of $f$ with no advice and probability $2^{-S} \cdot \delta^g$. The approach is (more or less) the one taken by \cite{Aar04,CGLQ20} for classical advice. Clearly, such an approach suffices for classical advice states, since such algorithms can always be rewound. Here, the no-cloning theorem appears to present a barrier for quantum advice, but there is a workaround: using multiple copies of advice, the adversary's success probability can be boosted to $1-\negl(\lambda)$, allowing its output to be copied and the algorithm to be rewound without causing damage to the advice state by the gentle measurement lemma. Thus, the quantum witness can be re-used \emph{without} cloning it, and so the same upper bound should apply.

So why does such an argument not actually apply to our situation (and therefore fail to produce a separation)? A more careful examination reveals that the approach with quantum advice relies on the assumption that the aforementioned measurement produces a unique output with high probability! This is fine if one only needs to determine \emph{if} an adversary has produced a preimage of a function, for example, but not \emph{what} that preimage is. For example, an adversary could use its advice state to produce a superposition over many preimages before eventually measuring to produce one of exponentially many possible preimages (which is in fact what our $\QMA$ verifier does!). Then, although the measurement which verifies if the adversary can invert is gentle, the measurement which outputs a preimage as well is not. Thus, relative to the works on the AI-ROM, the main conceptual shift in recent sampler results is to rely on natural measurements which are inherently ``destructive''. This turns out to be enough to force an algorithm solving a multi-instance problem to clone their advice state, thereby distinguishing between quantum and classical advice. Readers familiar with previous literature may also note that unlike works in the AI-ROM, our reduction (as well as the reduction of \cite{BHNZ25}) produces a multi-instance adversary which makes \emph{no} oracle queries.\footnote{Technically speaking, the reduction of \cite{BHNZ25} begins with an adversary $\Aa^{U}$ whose description depends on an oracle $U$ and makes queries to an oracle $S$, and produces an adversary $\Bb^{U}$ which no longer queries $S$ but still depends in some manner on the oracle $U$.} 

Beyond the sampler idea, one might ask whether other approaches to separating $\QMA$ from $\QCMA$ using a code intersection oracle might have been successful.  We emphasize that while the code intersection problem has previously been considered in the context of proof and advice separations \cite{Liu23, LLPY23, BDK24}, all existing works employed FRS codes, which (as mentioned earlier) do not enjoy strong enough list-recovery properties. In particular, running our argument with FRS codes gives an upper bound on the sampling success probability which is far too weak to derive \emph{any} meaningful conclusions! On the other hand, multiplicity codes do not appear to have strong enough decoding properties to handle the large amounts of noise that would be incurred by perfectly random functions $H$, which necessitates our use of biased oracles and restriction to low Hamming-weight vectors (an idea which did not appear in earlier, more limited, separations).

\paragraph{Other remarks.}

As in \cite{BHNZ25}, it is not hard to see that our oracles also separate the clonable variants of $\QMA$ and $\BQP/\qpoly$ from their regular counterparts. 

Finally, although we primarily employed biased random oracles to make decoding possible for low-distance codes, we note that (as first observed by \cite{GGKM}) biasing $H$ has the additional benefit of making the Yamakawa-Zhandry algorithm \emph{much more efficient}. Concretely, if we tweak $H$ to have bias $p = 2/3$ or $p = 3/4$, this already decreases the noise level sufficiently that we can rely on (significantly faster and simpler) \emph{unique decoders} rather than list decoders, which currently appears to be the major bottleneck in runtime.

We conclude this section with a discussion of some broader open questions left open by this work. 

\subsection{Structure versus randomness in classical oracle separations}

As stated in the introduction, one major barrier in lifting the oracle separation of \cite{BHNZ25} to the advice setting is that the oracle is most naturally associated with a hard quantum search problem, instead of a classical one. In their classical oracle separation, the authors identify a way of sampling instances of the spectral Forrelation problem such that the pair of functions that seems completely random, except that they are related by the Fourier transform (and, in the case of \cite{BHNZ25}, one of the two oracles being sparse).  In some informal sense, our oracle separation enables an efficient quantum verifier to extract more information (namely, the solution to a hard search problem) from its witness, but doing so seems to require additional structure in the oracle.    

A natural question to ask is how much information can be encoded into a quantum state before it becomes clonable, and whether our ideas are useful in encoding information into other kinds of quantum states. The cryptographic analogy of a separation between $\QMA$ and $\QCMA$ (or, really $\mathsf{UnclonableQMA}$) is a primitive called quantum money~\cite{Aar09}.  These are states that can not be cloned, but can be verified, similar to witnesses for $\QMA$ problems that are not in $\QCMA$. An extremal form of this primitive (and this idea of encoding information in a quantum state) is known as ``copy-protected software'', wherein an efficient quantum party can extract the input-output behavior of an entire classical function from a quantum state. Currently, there are several candidates for quantum money constructions in the plain model~\cite{FGHL12,BNZ25,Zha25}, but less is known about copy-protected software. We hope that ideas from our separation might be useful in finding such constructions. To make progress towards cryptographic instantiations, one concrete direction to explore is a precise characterization of which witnesses cause our $\QMA$ verifier to accept with high probability.

\subsection{\texorpdfstring{$\QMA$-completeness of a decoding problem}{QMA-completeness of a decoding problem}}
One interesting difference between our oracle separation and the oracle separation of \cite{BHNZ25} is that their oracle separation can be seen as an obfuscation of a $\QMA$-complete problem. To elaborate further, just as problems involving random sparse functions might model the difficulty of constructing a SAT solver which ignores the structure of the SAT instance it receives, the spectral Forrelation problem models an algorithm for solving a two-basis local Hamiltonian problem that does not look at the structure of the two local Hamiltonians it receives. Note that despite the connection to a $\QMA$-complete problem, this property is not actually needed to achieve a separation between $\QMA$ and $\QCMA$, as highlighted in the actual oracle separation of \cite{BHNZ25}. In particular, the ``YES'' and ``NO'' instances can be taken to have sets $S$ of a fixed size $\ell$ or $\leq \ell/10$. Such \emph{distributions} of oracles can be easily distinguished by an $\AM$ protocol \cite{GS86}, but it appears unlikely that solving the actual spectral Forrelation problem can be done in $\AM$, because another way to sample ``NO'' instances of the spectral Forrelation problem would be to re-sample sets $S'$ independent of $U$. Distinguishing such pairs from Forrelated pairs $(S, U)$ seems to truly require a $\QMA$ verifier.

In contrast, the problem we construct really is in $\AM$, as it directly involves distinguishing between large and small sets, and any reasonable variant of the code intersection problem would likely remain in the polynomial hierarchy. Of course, an oracle separation between $\QCMA$ and $\AM$ already exists (and in fact, comes from early work on separating $\QMA$ from $\QCMA$~\cite{FK15}!), but we find it intriguing that our separation relies on a problem which appears to be of only intermediate difficulty and does not need the ``full'' power of $\QMA$ in some sense. In fact, as our problem appears on its face to be completely unrelated to quantum algorithms, we believe it remains an extremely interesting question to find a $\QMA$-complete variant of the code intersection problem.

\subsection{Simplifications to the separation}\label{subsec:simplifications}
Naturally, one may ask if our separations can be made \emph{even simpler}; here, we outline a few directions to consider. 

\paragraph{On Round-Reduction Arguments.} 
The work of \cite{BDK24} can be thought of as a round-reduction argument as follows: in the style of \cite{AK07}, we will, given any $\QCMA$ proof system, fix some classical witness which corresponds to the largest set of NO instances. The polynomial bound on the size of the proof means that the collection of functions $H$ which are consistent with this witness remains substantial, and in particular the distribution over all consistent $H$ must have large min-entropy. We conclude that the set of symbols $S$ which have low entropy conditioned on this classical proof is also bounded.

We now look at the verifier's inputs to the oracle in the first round of queries. Observe that any inputs corresponding to non-codewords, as well as ones which have low overlap with $S$ will be correct with negligible probability. By the hybrid lemma, it suffices to restrict our attention to codewords which have high overlap with $S$ (so called ``dangerous'' inputs) -- but this set is bounded precisely by the list recovery of our code! At this stage, we can simply give away the values of $H$ on dangerous inputs \emph{for free}, allowing the verifier to simulate its first round of queries. As a consequence, we can peel off a round of queries at the cost of requiring more advice. Repeating this ``peeling and bloating" routine with FRS codes results in a $o(\log n/\log \log n)$-round bound.

Sadly, this approach seems fairly doomed if we stick to the Yamakawa-Zhandry problem: even with nearly optimal list recovery, our advice will certainly increase by some constant factor with each round, so we would remain stuck at an $o(n)$-round bound. One could imagine a better argument that uses some special property of any collection of recovered codewords that ensures these lists do not grow by too much iteratively, but this seems quite difficult (and would definitely be much more complicated!).

\paragraph{Other Codes?}
One might wonder why we need multiplicity codes here -- after all, there are other codes which give rise to unbalanced expanders besides those of \cite{KTS22}, namely the constructions of \cite{GUV09}, but examining these constructions in closer detail presents some unexpected issues:
\begin{enumerate}
    \item \cite{GUV09} construct two unbalanced expanders with near-optimal expansion, one based on Parvaresh-Vardy codes and the other based on a subcode of FRS codes. Alas, neither of these instantiations are linear, which means our quantum proof/advice-based algorithm will fail.
    \item \cite{GUV09} also considers unbalanced expanders based on plain FRS codes, which \emph{are} linear in some sense, but such expanders have expansion which is too weak to show meaningful classical hardness. 
\end{enumerate}

\paragraph{On the Necessity of Near-Optimal List Recovery.} Our proof relies on fairly strong expansion/list recovery properties of the underlying code $C$. It is not hard to see that we can tolerate slightly suboptimal expansion, i.e. if $s(\ell) = \Omega(\ell/\poly(n))$. Extending our sampling-based argument to work for polynomial or super-polynomial expansion, i.e. $s(\ell) = \ell^{O(1)}$, would open the door to a much larger class of usable codes. 

\paragraph{What About Random Linear Codes?} 
It is well known that random linear codes (RLCs) and their duals have good distance with high probability, and they are by definition linear, which means our quantum proof/advice-based algorithm will succeed. However, to show classical hardness, we need near-optimal list recovery, which remains a challenging open problem to show for RLCs \cite{LS25}. We believe that a proof of such a result for RLCs is the clearest way to conceptually simplify our separation.

\section{Preliminaries}

\subsection{Notation}
We say that a function $\delta: \mathbb{N} \mapsto [0, 1]$ is inverse polynomial if there exists a polynomial $p$ such that $\delta(n) \leq 1/p(n)$ for sufficiently large $n$. A function $\epsilon: \mathbb{N} \mapsto [0, 1]$ is negligible if for every polynomial $p$, for all sufficiently large $n$, $\epsilon(n) < 1/p(n)$. 

We use the notation $\mathrm{id}$ to denote the identity operator.  We will occasionally concatenate superscripts when it is clear from context, so $\mathrm{QFT}^{-1, \otimes n}$ denotes $(\mathrm{QFT}^{-1})^{\otimes n}$. We will also sometimes abbreviate the tensor product state $\ket{0}^{\otimes n}$ as $\ket{0^n}$. 

A register $\mathsf{R}$ is a named finite-dimensional complex Hilbert space. If $\mathsf{A}, \mathsf{B}, \mathsf{C}$ are registers, for example,
then the concatenation $\mathsf{A}\mathsf{B}\mathsf{C}$ denotes the tensor product of the associated Hilbert spaces. For a linear transformation $L$ and register $\mathsf{R}$, we write $L_{\mathsf{R}}$ to indicate that $L$ acts on $\mathsf{R}$, and similarly we write $\sigma_{R}$ to indicate that a state $\sigma$ is in the register $\reg{R}$. 


\subsection{Probability and Complexity Theory}
\begin{definition}[Modified from \cite{BHNZ25}]
    We use the phrase ``an oracle'' to refer to a function $\Oo∶ \{0, 1\}^{*} \to \{0, 1\}$. A quantum query algorithm is a quantum circuit that interacts with an oracle $\Oo∶ \{0, 1\}^{*} \to \{0, 1\}$ via a query gate $\ket{x, b} \to \ket{x, b \oplus \Oo(x)}$, which acts on an $|x|+1$-qubit query register for various $x$. The algorithm is described by an alternating sequence of unitaries (drawn from any fixed gate set) and query gates. After all gates are applied, a designated qubit is measured in the standard basis to determine acceptance. The circuit has ancilla qubits which are initialized to $\ket{0}$ and all intermediate unitaries may act on an arbitrary (but finite) number of qubits.

    A quantum query algorithm may also receive an auxiliary witness or advice as input. A quantum witness/advice state is a state $\ket{\psi}$ on some (finite) number of qubits, while a classical witness/advice string is a (finite) bitstring $w$, treated as a computational basis state. The algorithm's acceptance probability may depend on both the oracle and the witness.
\end{definition}

\begin{definition}[\cite{BHNZ25}]
    We denote a family of quantum oracle circuits/algorithms by $\{\Aa_{\lambda}\}_{\lambda \in \NN}$, where the index $\lambda$ corresponds to the length of the explicit input to the computational problem. $\{\Aa_{\lambda}\}_{\lambda \in \NN}$ is $\P$-uniform if there exists a deterministic polynomial-time Turing machine $M$ that, on input $1^{\lambda}$, outputs a full classical description of the circuit $\Aa_{\lambda}$. The runtime of $M$ implies that $\Aa_{\lambda}$ has at most $\poly(\lambda)$ gates (oracle or elementary), queries $\Oo$ at lengths of at most $\poly(\lambda)$, and receives witnesses/advice of length at most $\poly(\lambda)$.
\end{definition}
We can now define the complexity classes that we will consider in this work.
\begin{definition}[Oracle $\QCMA$]
    A promise language $\Ll^{\Oo} = (\Ll_{\mathsf{yes}}, \Ll_{\mathsf{no}})^{\Oo} \subseteq \{0, 1\}^{*}$ is in $\QCMA^{\Oo}$ if there exists a $\P$-uniform family of quantum oracle circuits $\Aa_{\lambda}$ with $\Aa_{\lambda}$ accepting a witness of length $t(\lambda)$, such that for every input $x$ of length $\lambda = |x|$,
    \begin{itemize}
        \item $x \in \Ll_{\mathsf{yes}} \implies \exists w \in \{0, 1\}^{t(\lambda)} \text{ s.t. } \Pr[\Aa_
        {\lambda}^{\Oo}(x, w) = 1] \geq \frac{2}{3}$,
        \item $x \in \Ll_{\mathsf{no}} \implies \forall \tilde{w} \in \{0, 1\}^{t(\lambda)}, \Pr[\Aa_
        {\lambda}^{\Oo}(x, \tilde{w}) = 1] \leq \frac{1}{3}$.
    \end{itemize}
\end{definition}
\begin{definition}[Oracle $\QMA$]
    A promise language $\Ll^{\Oo} = (\Ll_{\mathsf{yes}}, \Ll_{\mathsf{no}})^{\Oo} \subseteq \{0, 1\}^{*}$ is in $\QMA^{\Oo}$ if there exists a $\P$-uniform family of quantum oracle circuits $\Aa_{\lambda}$ with $\Aa_{\lambda}$ accepting a witness of length $t(\lambda)$, such that for every input $x$ of length $\lambda = |x|$,
    \begin{itemize}
        \item $x \in \Ll_{\mathsf{yes}} \implies \exists \ket{\psi} \in (\CC^2)^{\otimes t(\lambda)} \text{ s.t. } \Pr[\Aa_
        {\lambda}^{\Oo}(x, \ket{\psi}) = 1] \geq \frac{2}{3}$,
        \item $x \in \Ll_{\mathsf{no}} \implies \forall \ket*{\tilde{\psi}} \in (\CC^2)^{\otimes t(\lambda)}, \Pr[\Aa_
        {\lambda}^{\Oo}(x, \ket*{\tilde{\psi}}) = 1] \leq \frac{1}{3}$.
    \end{itemize}
\end{definition}
\begin{definition}[Oracle $\BQP/\poly$]
    A promise language $\Ll^{\Oo} = (\Ll_{\mathsf{yes}}, \Ll_{\mathsf{no}})^{\Oo} \subseteq \{0, 1\}^{*}$ is in $\BQP^{\Oo}/\poly$ if there exists a $\P$-uniform family of quantum oracle circuits $\Aa_{\lambda}$ with $\Aa_{\lambda}$ accepting advice of length $t(\lambda)$ and an advice family $\{\adv_{\lambda}\}_{\lambda \geq 1}$ where $|\adv_{\lambda}| = t(\lambda)$, such that for every input $x$ of length $\lambda = |x|$,
    \begin{itemize}
        \item $x \in \Ll_{\mathsf{yes}} \implies \Pr[\Aa_
        {\lambda}^{\Oo}(x, \adv_{\lambda}) = 1] \geq \frac{2}{3}$,
        \item $x \in \Ll_{\mathsf{no}} \implies \Pr[\Aa_
        {\lambda}^{\Oo}(x, \adv_{\lambda}) = 1] \leq \frac{1}{3}$.
    \end{itemize}
    If $t(\lambda) = 0$ then we say $\Ll^{\Oo} \in \BQP^{\Oo}$.
\end{definition}
\begin{definition}[Oracle $\BQP/\qpoly$]
    A promise language $\Ll^{\Oo} = (\Ll_{\mathsf{yes}}, \Ll_{\mathsf{no}})^{\Oo} \subseteq \{0, 1\}^{*}$ is in $\BQP^{\Oo}/\qpoly$ if there exists a $\P$-uniform family of quantum oracle circuits $\Aa_{\lambda}$ with $\Aa_{\lambda}$ accepting advice of length $t(\lambda)$ and an advice family $\{\ket{\adv_{\lambda}}\}_{\lambda \geq 1}$ where $\ket{\adv_{\lambda}} \in (\CC^2)^{\otimes t(\lambda)}$, such that for every input $x$ of length $\lambda = |x|$,
    \begin{itemize}
        \item $x \in \Ll_{\mathsf{yes}} \implies \Pr[\Aa_
        {\lambda}^{\Oo}(x, \ket{\adv_{\lambda}}) = 1] \geq \frac{2}{3}$,
        \item $x \in \Ll_{\mathsf{no}} \implies \Pr[\Aa_
        {\lambda}^{\Oo}(x, \ket{\adv_{\lambda}}) = 1] \leq \frac{1}{3}$.
    \end{itemize}
\end{definition}
Given a quantum query algorithm, we can define the query mass of the algorithm on a particular set of inputs.  

\begin{definition}[Query mass]\label{def:query_mass}
For an oracle circuit $A$ making $Q$ quantum queries to an oracle $\Oo$ with input domain $D$, let $\sum_x \alpha^{(i)}_x \ket{x}\ket{\psi_{x}}$ be the state of the algorithm immediately before their $i$'th query to $\Oo$, where the first register is the input register to the oracle, and let $M_x(i) = |\alpha^{(i)}_x|^2$ be the \emph{query mass} of $x$ in the $i$'th query. For a subset $V \subseteq [Q] \times D$, let $M_V = \sum_{(i, x) \in V} M_x(i)$ be the \emph{total query mass} of points in $V$.
\end{definition}
The following theorem was proven in \cite{BBBV97}, using the hybrid method.

\begin{theorem}[Hybrid method~\cite{BBBV97}]\label{thm:bbbv}
    Let $A$ be an oracle circuit which makes $Q$ queries to an oracle $\Oo$ with input domain $D$. If we modify $\Oo$ into an oracle $\Oo'$ which differs only on a set of time-input pairs $V \subseteq [Q] \times D$, then $$|\Pr[A^{\Oo}(\cdot) = 1] - \Pr[A^{\Oo'}(\cdot) = 1]| \leq 4\sqrt{QM_V}~.$$
\end{theorem}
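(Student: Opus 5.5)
We prove the hybrid bound of \Cref{thm:bbbv} by the standard telescoping argument over hybrid oracles, one per query. Write the algorithm as $A = U_Q \Oo U_{Q-1} \cdots U_1 \Oo U_0$ acting on an initial state $\ket{\psi_0}$, where $U_i$ are the fixed unitaries between queries (the witness or advice is absorbed into $\ket{\psi_0}$, and intermediate measurements are deferred). The modified oracle $\Oo'$ is allowed to be time-dependent: at query $i$ it differs from $\Oo$ only on inputs $x$ with $(i,x) \in V$. Let $\ket{\phi_Q}$ and $\ket{\phi'_Q}$ be the final states under $\Oo$ and $\Oo'$. For $k = 0,\dots,Q$, define hybrid $H_k$ to answer the first $k$ queries with $\Oo$ and the remaining ones with $\Oo'$, so that $H_Q$ gives $\ket{\phi_Q}$ and $H_0$ gives $\ket{\phi'_Q}$.

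The states of $H_k$ and $H_{k-1}$ agree up to the $k$'th query and differ only in whether that query uses $\Oo$ or $\Oo'$; all later operations are identical unitaries. Let $\ket{\psi^{(k)}} = \sum_x \alpha^{(k)}_x \ket{x}\ket{\psi_x}$ be the state just before the $k$'th query in the real run with $\Oo$ (the run that defines $M_x(k)$). By unitary invariance,
\[ \bigl\| \ket{H_k} - \ket{H_{k-1}} \bigr\| = \bigl\| (\Oo_k - \Oo'_k)\ket{\psi^{(k)}} \bigr\| . \]
The operator $\Oo_k - \Oo'_k$ is block diagonal in $x$. It vanishes unless $(k,x)\in V$, and on each such block it has operator norm at most $2$. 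Hence the difference is at most $2\sqrt{\sum_{x:(k,x)\in V} M_x(k)}$. The telescoping sum and the triangle inequality then give
\[ \|\ket{\phi_Q} - \ket{\phi'_Q}\| \le 2\sum_{k=1}^{Q} \sqrt{M_{V_k}}, \]
where $V_k$ is the slice of $V$ at time $k$. By Cauchy--Schwarz this is at most $2\sqrt{Q \sum_k M_{V_k}} = 2\sqrt{Q M_V}$.

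Finally, the acceptance probabilities are $\|\Pi\ket{\phi_Q}\|^2$ and $\|\Pi\ket{\phi'_Q}\|^2$ for the projector $\Pi$ onto the accepting outcome. The difference of squared norms of unit-bounded vectors satisfies $\bigl|\|a\|^2 - \|b\|^2\bigr| \le (\|a\|+\|b\|)\,\|a-b\| \le 2\|a-b\|$, which yields the claimed bound $4\sqrt{QM_V}$.

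The one point to get right is that every query mass must be measured along the run with the \emph{original} oracle $\Oo$. This is guaranteed by ordering the hybrids so that the state entering query $k$ is always produced by $\Oo$ on queries $1,\dots,k-1$. No other step is delicate.
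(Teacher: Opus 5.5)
Your proof is correct. It is the standard hybrid argument of \cite{BBBV97}, which is all the paper relies on here, since it states the theorem by citation and gives no proof of its own; your hybrid ordering correctly charges each query to the mass under the original oracle $\Oo$. The constant $4$ comes out exactly: a factor $2$ from the block norm $\|I - X\|$, and another factor $2$ from converting the distance between final states into a difference of acceptance probabilities.
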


Finally, we will also use some basic probability lemmas.
\begin{lemma}[Chernoff Bound]\label{lemma:chernoff}
Let $X_1, \ldots, X_n$ be independent random variables taking values in $\{0, 1\}$, $X := \sum_{i = 1}^n X_i$, and $\mu := \Exp[X]$. For any $\delta \geq 0$, it holds that $\Pr[X \geq (1 + \delta) \mu] \leq e^{-\delta^2\mu/(2+\delta)}$.
\end{lemma}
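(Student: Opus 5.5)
We prove the multiplicative Chernoff bound by the standard exponential-moment (Bernstein--Chernoff) method, and then simplify the resulting exponent into the stated form $\delta^2\mu/(2+\delta)$. Let $p_i := \Pr[X_i = 1]$, so $\mu = \sum_i p_i$. The case $\delta = 0$ is trivial, so assume $\delta > 0$.

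\textbf{Step 1: bound the moment generating function.} Fix a parameter $s > 0$. By Markov's inequality applied to $e^{sX}$,
\[ \Pr[X \geq (1+\delta)\mu] \leq e^{-s(1+\delta)\mu}\,\Exp[e^{sX}]. \]
By independence, $\Exp[e^{sX}] = \prod_i \Exp[e^{sX_i}]$. Each factor satisfies
\[ \Exp[e^{sX_i}] = 1 + p_i(e^s - 1) \leq \exp\bigl(p_i(e^s-1)\bigr), \]
using $1+y \leq e^y$. Multiplying the factors gives $\Exp[e^{sX}] \leq \exp(\mu(e^s-1))$.

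\textbf{Step 2: optimize over $s$.} Choosing $s = \ln(1+\delta) > 0$ yields
\[ \Pr[X \geq (1+\delta)\mu] \leq \left(\frac{e^{\delta}}{(1+\delta)^{1+\delta}}\right)^{\mu} = \exp\bigl(-\mu\,[(1+\delta)\ln(1+\delta) - \delta]\bigr). \]

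\textbf{Step 3: simplify the exponent.} It remains to prove the elementary inequality
\[ (1+\delta)\ln(1+\delta) - \delta \geq \frac{\delta^2}{2+\delta} \quad \text{for all } \delta \geq 0. \]
This step is the only one requiring real checking. We use the known bound $\ln(1+\delta) \geq \frac{2\delta}{2+\delta}$ for $\delta \geq 0$. To verify it, set $g(\delta) = \ln(1+\delta) - \frac{2\delta}{2+\delta}$. Then $g(0) = 0$, and
\[ g'(\delta) = \frac{1}{1+\delta} - \frac{4}{(2+\delta)^2} = \frac{\delta^2}{(1+\delta)(2+\delta)^2} \geq 0, \]
so $g \geq 0$ on $[0,\infty)$. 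Substituting this bound,
\[ (1+\delta)\ln(1+\delta) - \delta \geq \frac{2\delta(1+\delta)}{2+\delta} - \delta = \frac{2\delta + 2\delta^2 - 2\delta - \delta^2}{2+\delta} = \frac{\delta^2}{2+\delta}. \]

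Plugging this into Step 2 gives $\Pr[X \geq (1+\delta)\mu] \leq e^{-\delta^2\mu/(2+\delta)}$, which is the claimed bound.
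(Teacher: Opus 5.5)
Your proof is correct and complete: the exponential-moment bound with $\Exp[e^{sX}]\le \exp(\mu(e^s-1))$, the choice $s=\ln(1+\delta)$, and the elementary inequality $\ln(1+\delta)\ge 2\delta/(2+\delta)$ (your derivative computation checks out) together give exactly $e^{-\delta^2\mu/(2+\delta)}$. The paper does not prove this lemma at all (it is quoted as a standard preliminary), and your argument is the usual textbook derivation of it.
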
 

\begin{lemma}[Borel–Cantelli, \cite{Bor09, Can17}]\label{lemma:borel_cantelli}
    Let $\{X_{\lambda}\}_{\lambda \in \NN}$ be a sequence of (not necessarily independent) random variables with values in $\{0, 1\}$. If $\sum_{\lambda = 1}^{\infty} \Exp[X_{\lambda}] < \infty$, then $ \Pr\left[\sum_{\lambda = 1}^{\infty} X_\lambda = \infty\right] = 0$.
\end{lemma}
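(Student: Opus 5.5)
The approach is the standard first-moment argument; independence of the $X_\lambda$ is never used. Write $A_\lambda := \{X_\lambda = 1\}$. Since each $X_\lambda$ takes values in $\{0,1\}$, we have $\Exp[X_\lambda] = \Pr[A_\lambda]$. Moreover, the event $\{\sum_{\lambda} X_\lambda = \infty\}$ is exactly the event that infinitely many $A_\lambda$ occur:
\[ \limsup_{\lambda} A_\lambda = \bigcap_{N \geq 1} \bigcup_{\lambda \geq N} A_\lambda. \]
A sum of $0/1$ terms diverges if and only if infinitely many of its terms equal $1$, which gives this identification.

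Now fix any $N$. The limsup event is contained in the tail union $\bigcup_{\lambda \geq N} A_\lambda$, so by the union bound
\[ \Pr\Big[\sum_{\lambda=1}^\infty X_\lambda = \infty\Big] \leq \Pr\Big[\bigcup_{\lambda \geq N} A_\lambda\Big] \leq \sum_{\lambda \geq N} \Exp[X_\lambda]. \]
By hypothesis $\sum_\lambda \Exp[X_\lambda] < \infty$, so this tail sum tends to $0$ as $N \to \infty$. The left-hand side does not depend on $N$, hence it equals $0$.

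An equivalent route is to apply monotone convergence (Tonelli) to the nonnegative partial sums. This gives $\Exp[\sum_\lambda X_\lambda] = \sum_\lambda \Exp[X_\lambda] < \infty$, and a nonnegative random variable with finite expectation is finite almost surely.

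There is no real obstacle here. The only points needing care are the measurability of the limsup event, which is a countable intersection of countable unions, and the interchange of sum and expectation. Both are handled by countable subadditivity in the first argument, or by monotone convergence in the second.
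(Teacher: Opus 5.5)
Your argument is correct. The union bound over the tail events $\bigcup_{\lambda \geq N}\{X_\lambda = 1\}$, or equivalently monotone convergence giving $\Exp[\sum_\lambda X_\lambda] = \sum_\lambda \Exp[X_\lambda] < \infty$, is the standard proof of the first Borel--Cantelli lemma, and as you note it uses no independence.

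The paper itself gives no proof; it cites the lemma as a classical fact. It then applies it in \Cref{app:diag} with $X_\lambda$ the indicator that the quantum-advice algorithm (or the $\NP$/$\coNP$ certificate) fails at input length $\lambda$, using $\Exp[X_\lambda] \leq 1/\lambda^2$. Your proof covers exactly that use.
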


\subsection{Coding Theory}
For a prime power $q$, we denote by $\FF_q$ the finite field of order $q$ and denote by $(\FF_q)_{< k}[X]$ the set of univariate polynomials over $\FF_q$ with degree less than $k$.
\begin{definition}
    A code of length $n \in \NN$ over an alphabet $\Sigma$ is a subset $C \subseteq \Sigma^n$. $C \subseteq \Sigma^n$ is said to be \emph{$\FF_q$-linear} if its alphabet $\Sigma = \FF_q^s$ for some field $\FF_q$ and a positive integer $s \geq 1$ and $C$ is an $\FF_q$-linear subspace of $\Sigma^n$. Equivalently, this means that for any two codewords $x, y \in C$ and scalar $\alpha \in \FF_q$, both $x+y$ and $\alpha \cdot x$ are in $C$.
    
    For an $\FF_q$-linear code $C$, the dual code of $C$ is the code $C^{\perp} \subseteq (\FF_q^s)^n$ containing all strings $c' \in (\FF_q^s)^n$ which satisfy $$\sum_{i = 1}^n \sum_{j = 1}^s (c'_i)_j \cdot (c_i)_j = 0$$ for all $c \in C$. Observe that $C^{\perp}$ is always $\FF_q$-linear, and that $|C| \cdot |C^{\perp}| = |\Sigma|^n$ if and only if $C$ is $\FF_q$-linear.
\end{definition}
For any vector $x \in \Sigma^n$, define $\hw(x) \in [0, n]$ as the Hamming weight of $x$, i.e. the number of nonzero elements in $x$. We say that $C \subseteq \Sigma^n$ has distance $d$ if for any two distinct codewords $c_1, c_2 \in C$, $\hw(c_1-c_2) \geq d$.

\begin{definition}[Formal and Hasse derivatives]
Let $f(X) = \sum_{j = 0}^n a_j X^j \in \FF_q[X]$ be a univariate polynomial over $\FF_q$. We define the $i$'th formal and Hasse derivatives of $f(X)$ as the linear operators which take $f(X)$ to the polynomials $$f^{[i]}(X) = \sum_{j = i}^n \frac{j!}{(j-i)!} a_j X^{j-i} \hspace{10pt} \mbox{  and  } \hspace{10pt} f^{(i)}(X) = \sum_{j = i}^n \binom{j}{i} a_j X^{j-i}~,$$ 
respectively. Note that $f^{[i]}(X) = i! f^{(i)}(X)$ for all $i$ and $f$.
\end{definition}

\begin{definition}[Univariate multiplicity codes, from \cite{RT97, Nie01, KSY14}]
    Let $\FF_q$ be a finite field and let $s$ be a positive integer. Let $\alpha_1, \ldots, \alpha_n$ be distinct points in $\FF_q$, and let $k < sn$ be a positive integer. The univariate multiplicity code $\Mult_{s, \FF_q}(\alpha_1, \ldots, \alpha_n; k)$ is the code over the alphabet $\Sigma = \FF_q^s$ of length $n$ which associates each polynomial $f(X) \in (\FF_q)_{<k}[X]$ to the codeword $c \in \Sigma^n$ such that for $i \in [n]$, 
        \[ c_i = (f^{(0)}(\alpha_i), f^{(1)}(\alpha_i), \ldots, f^{(s−1)}(\alpha_i)), \]
    where $f^{(j)}$ is the $j$'th \textbf{Hasse derivative} of $f$. Let $\Mult_{s, \FF_q, k} := \Mult_{s, \FF_q}(1, \ldots, q; k)$; note that $\Mult_{s, \FF_q, k}$ is $\FF_q$-linear.
\end{definition}

\begin{definition}[List recoverability]
    A code $C \subseteq \Sigma^n$ is \emph{$(\ell, L)$-list recoverable} if for all $S_1, \ldots, S_n \subseteq \Sigma$ such that $\frac{1}{n} \sum_{i = 1}^n |S_i| \leq \ell$,
        \[ |\{(x_1, \ldots, x_n) \in C: \forall i \in [n], x_i \in S_i\}| \leq L. \]
\end{definition}

\begin{definition}[Expanders \cite{GUV09}]
    A bipartite graph with $N$ left-vertices, $M$ right-vertices, and left-degree $D$ is specified by a function $\Gamma : [N] \times [D] \to [M]$, where $\Gamma(x, y)$ denotes the $y$’th neighbor of $x$. For a set $X \subseteq [N]$, we write $\Gamma(X)$ to denote its set of neighbors $\bigcup_{x \in X, y \in [D]} \Gamma(x, y)$. For a set $T \subseteq [M]$, we write $\LIST_{\Gamma}(T) = \{ x: \Gamma(x) \subseteq T \}$. 
    
    We say that $\Gamma$ is a \emph{$(K, A)$-expander} if for every set $X \subseteq [N]$ of size at most $K$, $|\Gamma(X)| \geq A \cdot |X|$. Note that if $\Gamma$ is a $(K, A)$-expander then for all $B \leq K$ and all sets $T$ such that $|T| < AB$, $|\LIST_{\Gamma}(T)| < B$.
\end{definition}
Our separation will utilize expanders based on multiplicity codes as constructed in \cite{KTS22}.\footnote{We note that the proof of expansion extends straightforwardly to subgraphs of $\Gamma$ defined by taking edges corresponding to subsets $S \subseteq \FF_q$, although it suffices for us to take $S = \FF_q$.}
\begin{theorem}[\cite{KTS22}]\label{thm:unbalanced_expanders}
    For every field $\FF_q$, $k$, $s \in \NN$ such that $15 \leq s+1 \leq k \leq \mathsf{char}(\FF_q)$, identify the elements of $\FF_q^k$ with univariate polynomials of degree less than $k$. Define the graph $\Gamma: \FF_q^k \times \FF_q \to \FF_q^{s+1}$ by 
        \[ \Gamma(f, y) = (y, f^{[0]}(y), f^{[1]}(y), \ldots, f^{[s-1]}(y)), \]
    where $f^{[i]}$ is the $i$’th \textbf{formal derivative} of $f$ in $\FF_q[X]$. For every $K > 0$, $\Gamma$ is a $(K, A)$-expander where
        \[ A = q - \frac{k(s+1)}{2} \cdot (qK)^{\frac{1}{s+1}}. \]
\end{theorem}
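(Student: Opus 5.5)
The plan is to prove the contrapositive, list-decoding form via the polynomial method, following Guruswami--Umans--Vadhan but using the derivative structure of multiplicity codes. Fix $K$ and a set $X \subseteq \FF_q^k$ with $|X| \le K$, and suppose for contradiction that $T := \Gamma(X)$ has $|T| < A|X|$. Write $T_y := \{z \in \FF_q^{s} : (y,z) \in T\}$ for the fibre over each $y \in \FF_q$. Every $f \in X$ has exactly one neighbour in each fibre, so $\sum_y |T_y| = |T|$ and each $|T_y| \ge 1$. The goal is to show that the fibres cannot be this small on average, i.e.\ $\sum_y |T_y| \ge q|X| - \frac{k(s+1)}{2}(qK)^{1/(s+1)}|X|$.

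\textbf{Step 1 (interpolation).} I would find a nonzero polynomial $Q(Y, Z_0, \ldots, Z_{s-1})$ that vanishes on $T$. I would restrict $Q$ to be of the form
\[ Q = B_0(Y) + \sum_{i=0}^{s-1} B_{i+1}(Y) Z_i, \]
that is, linear in the $Z$'s, or more generally of low total $Z$-degree. The degree bound $D$ on the $B_i$ is chosen so that the number of free coefficients, roughly $(s+1)D$, exceeds the number of constraints $|T|$. Linearity in $Z$ is the key structural choice. Substituting $Z_i = f^{[i]}(Y)$ then gives $R_f(Y) := Q(Y, f, f^{[1]}, \ldots, f^{[s-1]})$, which is a linear differential expression in $f$.

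\textbf{Step 2 (vanishing $\Rightarrow$ identity).} For $f \in X$, $R_f$ vanishes at every $y \in \FF_q$. Its degree is below $D + k$, so if $D + k \le q$ then $R_f \equiv 0$. Since $k \le \mathsf{char}(\FF_q)$, formal derivatives behave over polynomials of degree $<k$ as in characteristic zero. The solutions of degree $<k$ to this order-$(s-1)$ linear ODE therefore form an $\FF_q$-affine space of dimension at most $s-1$, a Wronskian-type argument.

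\textbf{Step 3 (balancing).} Step 2 alone gives only $|X| \le q^{s-1}$, so it must be refined to produce the bound with the exact constant $A$. Rather than requiring $Q$ to vanish on all of $T$, I would impose vanishing only on the fibres where $|T_y|$ is small. I would then count the points $y$ where $R_f$ vanishes, aiming to show that the total root count exceeds $\deg R_f$ unless
\[ \sum_y (q - |T_y|) \;\lesssim\; \frac{k(s+1)}{2}\,(q|X|)^{1/(s+1)}. \]
The exponent $1/(s+1)$ and the factor $\frac{k(s+1)}{2}$ should emerge from optimising $D$ against $|X|$, i.e.\ from choosing the $Z$-degree so that the number of monomials $\binom{h+s}{s+1}$ is comparable to $q|X|$.

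\textbf{Main obstacle.} I expect the main obstacle to be Step 3: getting the precise constant rather than a lossy $q^{s}$-type list bound. The first thing I would try is to run the interpolation over $\FF_q[Y]$-modules, i.e.\ a Gröbner/degree-counting argument on the module of polynomials vanishing on $T$. The idea is to lower-bound $|T|$ directly by the dimension of the space of degree-$<k$ polynomials whose derivative-evaluations land in $T$. The hypothesis $15 \le s+1 \le k$ is only needed there, to absorb lower-order terms.
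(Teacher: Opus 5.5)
The paper does not prove this theorem; it imports it from \cite{KTS22}. Your sketch therefore has to stand on its own, and as written it has a genuine gap starting at Step 1.

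With $Q=B_0(Y)+\sum_i B_{i+1}(Y)Z_i$ you have about $(s+1)(D+1)$ coefficients, while Step 2 needs $D+k\le q$. So a nonzero $Q$ vanishing on $T$ is only guaranteed when $|T|<(s+1)(q-k+1)$. Since $|T|$ is of order $q|X|$ in the relevant regime, this means $|X|$ of order $s$. The theorem must cover $K$ exponential in $s$ (the paper applies it with $K=2^{s+1}$). Adding Kopparty-style multiplicity constraints does not help, because it multiplies the number of constraints and the root count by the same factor.

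Step 2 then only gives $|X|\le q^{s-1}$, which says nothing about $T$. Step 3 is not an argument. Its target is mis-stated: you need $\sum_y(|X|-|T_y|)\le\frac{k(s+1)}{2}(qK)^{1/(s+1)}|X|$. Moreover, imposing vanishing only on small fibres lowers the root count of $R_f$, which works against forcing $R_f\equiv 0$.

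Your parameter intuition (a monomial count in $s+1$ variables comparable to $q|X|$) is right. But once $Q$ has higher $Z$-degree the equation is nonlinear, and counting its polynomial solutions globally is hopeless. For example, $Z_{s-1}-c$ vanishes along $q^{s-1}$ curves $\gamma_f(Y)=(Y,f,f^{[1]},\dots,f^{[s-1]})$. The counting step has to use $T$, not just $Q$.

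A route that gives exactly the stated $A$ is as follows.

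\emph{Interpolation.} If $|T|\ge qK$ there is nothing to prove. Otherwise take $Q\neq 0$ of \emph{total} degree $\le h$ in all $s+1$ variables vanishing on $T=\Gamma(X)$; this is possible once $\binom{h+s+1}{s+1}>|T|$. Then $Q\circ\gamma_f$ has degree $\le h(k-1)<q$ and vanishes on $\FF_q$, so $Q\circ\gamma_f\equiv 0$ for all $f\in X$.

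\emph{Local uniqueness.} Count locally rather than globally. Suppose $P$ has top variable $Z_j$ and $P\circ\gamma_f\equiv 0$. Differentiating $m$ times gives $(\partial_{Z_j}P\circ\gamma_f)\,f^{[j+m]}+R_m(Y,f,\dots,f^{[j+m-1]})=0$, with $R_m$ independent of $f$. So at any $y$ with $\partial_{Z_j}P(\gamma_f(y))\neq 0$, all Taylor coefficients of $f$ at $y$ are determined by $\gamma_f(y)$; this is where $k\le\mathsf{char}(\FF_q)$ is used. Hence $f$ is the unique such curve through that point of $T$.

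\emph{Peeling.} Curves along which this separant vanishes identically are handled as follows. Set $Q_0=Q$ and $Q_i=\partial_{Z_j}Q_{i-1}$, where $Z_j$ is the top variable of $Q_{i-1}$. Put $f$ in class $i$ if $Q_{i-1}\circ\gamma_f\equiv 0$ but $Q_i\circ\gamma_f\not\equiv 0$, and call $y$ singular for $f$ if $Q_i(\gamma_f(y))=0$. Nonsingular points of class $i$ avoid $\{Q_i=0\}$, on which all later classes lie. So $f\mapsto\gamma_f(y)$ is injective on the curves nonsingular at $y$. Each $f$ has at most $h(k-1)$ singular $y$, hence $|T|\ge(q-h(k-1))|X|$.

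\emph{Parameters.} Take $h+1=\lceil\frac{s+1}{2}(qK)^{1/(s+1)}\rceil$. Then $\binom{h+s+1}{s+1}\ge(h+1)^{s+1}/(s+1)!\ge qK$ by Stirling once $s+1$ is moderately large, which is the kind of place a hypothesis like $s+1\ge 15$ enters. This also gives $q-h(k-1)\ge q-\frac{k(s+1)}{2}(qK)^{1/(s+1)}$.

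One caveat: peeling needs $h<\mathsf{char}(\FF_q)$, so that $\partial_{Z_j}$ does not kill genuine $Z_j$-dependence. This is automatic for prime $q$, as in the paper's application.
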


\begin{corollary}
    For each security parameter $\lambda \in \NN$, let $k = \lambda^3$, $\lambda^5 < q \leq 2\lambda^5$ be any prime, and $s = \lambda$. Identify the elements of $\FF_q^k$ with univariate polynomials of degree less than $k$. Define the code $C'_{\lambda} \subseteq \Sigma^q = (\FF_q^s)^q$ with encoding map
    \begin{equation*} 
        \Enc(f) = \{(f^{[0]}(y), f^{[1]}(y), \ldots, f^{[s-1]}(y))\}_{y \in \FF_q}\,,
    \end{equation*}
    where $f^{[i]}$ is the $i$’th \textbf{formal derivative} of $f$. Then, for sufficiently large $\lambda$, $C'_{\lambda}$ is $(\ell, 2\ell)$-list recoverable if $\ell \leq 2^s = 2^{\lambda}$.
\end{corollary}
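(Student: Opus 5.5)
We reduce list recovery for $C'_\lambda$ to the expansion of the graph $\Gamma$ from \Cref{thm:unbalanced_expanders}, instantiated with $k=\lambda^3$, $s=\lambda$ and $q$ prime. The primality of $q$ gives $\mathsf{char}(\FF_q)=q>k$, and $15\le s+1\le k$ holds for large $\lambda$, so the theorem applies.

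Suppose lists $S_1,\dots,S_q\subseteq\Sigma=\FF_q^s$ are given, indexed by $y\in\FF_q$, with $\frac1q\sum_y|S_y|\le \ell$. Define the right-vertex set
\[ T=\{(y,\sigma): y\in\FF_q,\ \sigma\in S_y\}\subseteq \FF_q^{s+1}. \]
Then $|T|=\sum_y|S_y|\le q\ell$. A polynomial $f$ has $\Enc(f)\in S_1\times\cdots\times S_q$ exactly when $\Gamma(f,y)=(y,\Enc(f)_y)\in T$ for every $y$, that is, when $f\in \LIST_\Gamma(T)$. The encoding is injective, because $\Enc(f)$ determines $f$ on all $q>k$ points. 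Hence the number of list-recovered codewords equals $|\LIST_\Gamma(T)|$.

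We apply the expander property with $K=B=2\ell$ (more precisely, $B=2\ell+1$, so that the conclusion gives at most $2\ell$). The remark after the definition of expanders says that if $|T|<A\cdot B$ then $|\LIST_\Gamma(T)|<B$. So it suffices to show $q\ell< A\cdot(2\ell+1)$, and for this it is enough that $A>q/2$. Now compute $A$ from the theorem with $K=2\ell+1\le 2^{\lambda}+1\le 2^{\lambda+1}$:
\[ \frac{k(s+1)}{2}(qK)^{1/(s+1)}\le \frac{\lambda^3(\lambda+1)}{2}\,(2\lambda^5)^{1/(\lambda+1)}\cdot 2 = O(\lambda^4). \]
Here we used $K^{1/(\lambda+1)}\le 2$ and $(2\lambda^5)^{1/(\lambda+1)}=1+o(1)$. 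Since $q>\lambda^5$, this gives $A\ge q-O(\lambda^4)>q/2$ for sufficiently large $\lambda$. Therefore $|T|\le q\ell<\tfrac q2(2\ell+1)<A(2\ell+1)$, and so $|\LIST_\Gamma(T)|\le 2\ell$.

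The main step to check is the parameter bookkeeping: $K$ must be at most $2^{\lambda+1}$, so that $K^{1/(s+1)}$ stays bounded by a constant, and $q$ must dominate $k(s+1)$ by a polynomial factor, so that $A=(1-o(1))q$. The choice $k=\lambda^3$, $s=\lambda$, $q>\lambda^5$ is made exactly to guarantee both. The rest is the direct translation between lists and neighborhoods described above.
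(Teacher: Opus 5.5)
Your proof is correct and follows the paper's argument. You encode the lists as the right-vertex set $T$ (the paper's $R$), note that the list-recovered polynomials are exactly $\LIST_\Gamma(T)$, and use $A>q/2$ from \Cref{thm:unbalanced_expanders} to get $|T|\le q\ell<A\cdot B$. The paper simply fixes $K=2^{s+1}$ and $B=2\ell$ (obtaining $<2\ell$), where you take $K=B=2\ell+1$.

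There is one harmless slip: $\ell\le 2^{\lambda}$ gives $2\ell+1\le 2^{\lambda+1}+1$, not $2\ell+1\le 2^{\lambda}+1$. Even so, $K^{1/(s+1)}$ stays bounded by a constant, so your estimate $A\ge q-O(\lambda^4)$ is unaffected.
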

\begin{proof}
    Set $K = 2^{s+1}$ and fix $\ell \leq 2^s$; \Cref{thm:unbalanced_expanders} implies that $\Gamma: \FF_q^k \times \FF_q \to \FF_q^{s+1}$ is a $(K, A)$-expander where
        \[ A \geq q-\frac{k(s+1)}{2} \cdot (qK)^{\frac{1}{s+1}} = q-\frac{k(s+1)}{2} \cdot (q \cdot 2^{s+1})^{\frac{1}{s+1}} > q/2,  \]
    for sufficiently large $\lambda$.

    Fix any lists $S_1, \ldots, S_q \subseteq \Sigma$ such that $\frac{1}{q} \sum_{i = 1}^q |S_i| \leq \ell$ and consider the set
        \[ R = \bigcup_{i = 1}^q R_i, \hspace{10pt} R_i := \{ (i, w_0, \ldots, w_{s-1}): (w_0, \ldots, w_{s-1}) \in S_i \}. \]
    Observe that $R$ is a set of right vertices of $\Gamma$ and that $|R| = \sum_i |S_i| \leq q \ell < A \cdot 2\ell \leq AK$. Now consider the set $X$ of all polynomials $f$ such that for all $i \in [q], \Enc(f)_i \in S_i$; our goal is to bound $|X|$. By construction, $X = \LIST_{\Gamma}(R)$, so it follows that $|X| = |\LIST_{\Gamma}(R)| < 2\ell$.
\end{proof}

In our parameter regime, it is straightforward to see that the code $C_{\lambda} = \Mult_{s, \FF_q, k}$ has the same list-recoverability as $C'_{\lambda}$, since for fields $\FF_q$ where $\mathsf{char}(\FF_q) > s$, $C_{\lambda}$ and $C'_{\lambda}$ are identical up to scalar factors.
\begin{corollary}\label{cor:mult_list_recovery}
    For each security parameter $\lambda \in \NN$, let $k = \lambda^3$, $\lambda^5 < q \leq 2\lambda^5$ be any prime, and $s = \lambda$. Then, for sufficiently large $\lambda$, $C_{\lambda} = \Mult_{s, \FF_q, k}$ is $(\ell, 2\ell)$-list recoverable for all $\ell \leq 2^{\lambda}$.
\end{corollary}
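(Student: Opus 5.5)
The plan is to reduce \Cref{cor:mult_list_recovery} to the preceding corollary about $C'_{\lambda}$. We exhibit an explicit coordinatewise bijection of the alphabet $\Sigma = \FF_q^s$ that carries $C'_{\lambda}$ onto $C_{\lambda}$. List recoverability is invariant under applying a fixed bijection $\pi_i : \Sigma \to \Sigma$ at each coordinate $i$, since lists $S_i$ are mapped to lists $\pi_i(S_i)$ of the same size. Hence the $(\ell, 2\ell)$-list recoverability for all $\ell \leq 2^{\lambda}$ transfers directly.

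First we pin down the relation between the two encodings. By the definition of formal and Hasse derivatives, $f^{[j]} = j!\, f^{(j)}$ for every $j$ and every $f \in (\FF_q)_{<k}[X]$. The codeword of $f$ in $C'_{\lambda}$ has $y$-th symbol $(f^{[0]}(y), \ldots, f^{[s-1]}(y))$. The codeword of $f$ in $C_{\lambda} = \Mult_{s,\FF_q,k}$, evaluated at the points $1, \ldots, q$ (which enumerate $\FF_q$), has the corresponding symbol $(f^{(0)}(y), \ldots, f^{(s-1)}(y))$. Define the linear map $D : \FF_q^s \to \FF_q^s$ by $D(w_0, \ldots, w_{s-1}) = (0!\,w_0, 1!\,w_1, \ldots, (s-1)!\,w_{s-1})$. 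Then $\Enc'(f)_y = D(\Enc(f)_y)$ for every $y$ and every $f$. We should also note that both corollaries use the same coordinate set $\FF_q$, matching $y$ to the point $\alpha_y$ if indexing needs care.

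The key check is that $D$ is invertible. This holds because $q$ is prime with $q > \lambda^5 > s = \lambda$, so $\mathsf{char}(\FF_q) = q > s - 1$ and each $j!$ with $j < s$ is a nonzero element of $\FF_q$. Thus $D$ is a bijection on $\Sigma$. Applying $D$ in every coordinate maps $C_{\lambda}$ onto $C'_{\lambda}$ and the product set $S_1 \times \cdots \times S_q$ onto $D(S_1) \times \cdots \times D(S_q)$, with $|D(S_i)| = |S_i|$. We then also check that the parameters $k = \lambda^3$, $s = \lambda$ satisfy the hypothesis $15 \le s+1 \le k \le \mathsf{char}(\FF_q)$ of \Cref{thm:unbalanced_expanders} for large $\lambda$; this is already implicit in the preceding corollary.

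Finally, given lists $S_1, \ldots, S_q$ with average size at most $\ell \le 2^\lambda$, the number of codewords of $C_{\lambda}$ inside $\prod_i S_i$ equals the number of codewords of $C'_{\lambda}$ inside $\prod_i D(S_i)$. The latter lists have the same average size, so this count is less than $2\ell$ by the previous corollary. The only mildly delicate point, rather than a real obstacle, is confirming that the characteristic exceeds $s-1$ so that the factorials are invertible, and that the two codes are indexed by the same evaluation points; both follow immediately from the chosen parameters.
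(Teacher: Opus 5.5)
Your proposal is correct and matches the paper's argument. The paper likewise notes that, since $\mathsf{char}(\FF_q) = q > s$, the codes $C_\lambda$ and $C'_\lambda$ agree up to the invertible coordinatewise scaling $w_j \mapsto j!\,w_j$, so the list-recovery bound for $C'_\lambda$ transfers directly; you simply write out the bijection and why list recoverability is preserved in more detail than the paper's one-line remark.
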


Finally, we will use a result about duals of univariate multiplicity codes which we reprove in \Cref{app:mult}.\footnote{We note that although the work of \cite{RZVW24} was recently retracted, the particular theorem we use remains correct. For completeness, an entirely self-contained proof of this fact is given in the appendix.}
\begin{theorem}[\cite{RZVW24}]\label{thm:mult_dual}
    For all parameters $s$, $q$, and $k < sq$, $(\Mult_{s, \FF_q, k})^{\perp}$ has distance at least $\frac{k+1}{s}$.
\end{theorem}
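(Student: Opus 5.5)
We argue by contradiction via Hermite interpolation. Suppose $c' \in (\Mult_{s, \FF_q, k})^{\perp}$ is nonzero. Let $T \subseteq \FF_q$ be the set of evaluation points $\alpha$ at which $c'_\alpha \neq 0$, and let $w := |T| = \hw(c')$. By definition of the dual code, for every $f \in (\FF_q)_{<k}[X]$,
\[ \sum_{\alpha \in T} \sum_{j=0}^{s-1} (c'_\alpha)_j \cdot f^{(j)}(\alpha) = 0. \]
We will show that if $ws \leq k$, there is some $f$ of degree less than $k$ violating this identity. Hence $ws \geq k+1$, i.e.\ $w \geq \frac{k+1}{s}$.

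The key step is to show that the linear map
\[ \Phi : (\FF_q)_{<ws}[X] \to \FF_q^{T \times \{0,\ldots,s-1\}}, \qquad f \mapsto \bigl(f^{(j)}(\alpha)\bigr)_{\alpha \in T,\, j < s}, \]
is a bijection. Since both sides have dimension $ws$, it suffices to show that $\Phi$ is injective.

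\begin{itemize}
  \item We use the Taylor expansion with Hasse derivatives, $f(X) = \sum_{j \geq 0} f^{(j)}(a)(X-a)^j$. This identity holds over any field, with no restriction on the characteristic, and follows from the definition of $f^{(i)}$ by expanding $X^m = ((X-a)+a)^m$ binomially.
  \item From the expansion, $f^{(j)}(a) = 0$ for all $j < s$ if and only if $(X-a)^s \mid f$.
  \item The factors $(X-\alpha)^s$ for distinct $\alpha \in T$ are pairwise coprime. So any $f \in \ker \Phi$ is divisible by $\prod_{\alpha \in T}(X-\alpha)^s$, which has degree $ws$.
  \item Since $\deg f < ws$, this forces $f = 0$.
\end{itemize}

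Now assume for contradiction that $ws \leq k$. Then $(\FF_q)_{<ws}[X] \subseteq (\FF_q)_{<k}[X]$. Pick a coordinate $(\alpha_0, j_0)$ with $(c'_{\alpha_0})_{j_0} \neq 0$. By surjectivity of $\Phi$, there is a polynomial $f$ of degree less than $ws \leq k$ with
\[ f^{(j)}(\alpha) = \begin{cases} 1 & \text{if } (\alpha, j) = (\alpha_0, j_0), \\ 0 & \text{for all other } \alpha \in T,\ j < s. \end{cases} \]
For this $f$, the dual pairing equals $(c'_{\alpha_0})_{j_0} \neq 0$, which contradicts $c' \in (\Mult_{s,\FF_q,k})^{\perp}$. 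Therefore $ws > k$, so $ws \geq k+1$ by integrality, which proves the theorem.

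We choose a unit vector here rather than $c'|_T$ itself, since the "squared norm" $\sum_{\alpha, j} (c'_\alpha)_j^2$ may vanish over $\FF_q$.

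The only step requiring care is the Hermite-interpolation kernel characterization in arbitrary characteristic. This is exactly why the code is defined with Hasse rather than formal derivatives: the Taylor identity holds in any characteristic. The hypothesis $k < sq$ is not needed in the argument.
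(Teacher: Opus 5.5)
Your proof is correct, but it takes a different route from the paper's. The paper proves a stronger structural statement (\Cref{thm:mult_duality}), in three steps:
\begin{enumerate}
\item It uses an explicit Hermite interpolation basis $A_{i,j}$ (\Cref{lemma:hermite_interpolation}).
\item It observes that the coefficient of $X^{sn-1}$ in $fg$ vanishes whenever $\deg f < k$ and $\deg g < sn-k$. Expanding this coefficient via the interpolation basis and the product rule shows that $(\Mult_{s,\FF_q,k})^{\perp}$ is exactly the generalized multiplicity code $\mathsf{GM}_{s,\FF_q}(U_1,\ldots,U_q;1,\ldots,q;sq-k)$, for explicit invertible anti-triangular matrices $U_i$.
\item It reads off the distance $q-\frac{sq-k-1}{s}=\frac{k+1}{s}$ from the multiplicity Schwartz--Zippel lemma.
\end{enumerate}
You instead use the classical ``dual distance via local surjectivity'' argument. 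Any $w$ coordinate blocks with $ws\le k$ are jointly surjective under the encoding map; this is your bijectivity of $\Phi$, proved from the Hasse--Taylor expansion and a kernel/dimension count. Hence no nonzero dual codeword can be supported on so few blocks. Your choice of a unit target vector, rather than $c'|_T$ itself, correctly avoids the isotropy issue over $\FF_q$. Your argument is shorter and needs neither the explicit $A_{i,j}$ nor the determinant computation for the $U_i$, but it yields only the distance bound. The paper's explicit identification of the dual is what it later uses, in the corollary showing that $(C_\lambda,\Dec_{C_\lambda^{\perp}})$ is good, to get an efficient deterministic unique decoder for $C_\lambda^{\perp}$: undo the $U_i$ and run a multiplicity-code decoder. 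Since the separation itself only needs statistical unique decodability (an inefficient decoder can be supplied), your bound alone would suffice for the oracle results, though not for the efficiency claim about the algorithm.
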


\subsection{Yao's Box Problem and Non-Uniform Advice}
We will need the following results on non-uniform advice for our separation between $\BQP/\qpoly$ and $\BQP/\poly$.
\begin{theorem}[\cite{CGLQ20}]\label{thm:yao_box}
    Let $G: [N] \to \{0, 1\}$ be a random function. Let $\Aa$ be an unbounded-time algorithm, with $S$ bits of classical advice $z_G$. For an index $x \in [N]$, let $G|^{x}: N \to \{0, 1\}$ denote the function that results from removing $x$ from $G$; in other words, on inputs $x' \neq x$, $G(x') = G|^{x}(x')$ and $G|^{x}(x) = 0$. The probability that $\Aa$ computes $G(x)$ while making $Q$ quantum queries to $G|^{x}$ for a random index $x$ is at most
        \[ \Pr_{G, x}[\Aa^{G|^{x}}(z_G, x) = G(x)] \leq \frac{1}{2}+O\left(\frac{(S+\log N)Q}{N}\right)^{1/3}. \]
\end{theorem}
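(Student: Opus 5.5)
The plan is the standard ``advice to multi-instance'' reduction, in the spirit of \cite{Aar04,CGLQ20}. Fix an advice-taking algorithm $\Aa$ with advice $z_G$ and set $\varepsilon := \Pr_{G,x}[\Aa^{G|^{x}}(z_G, x) = G(x)] - \frac12$. For a fixed $G$, write $p_G := \Pr_x[\Aa^{G|^{x}}(z_G, x) = G(x)]$, so that $\Exp_G[p_G] = \frac12 + \varepsilon$. Choose a parameter $g$ and consider the \emph{multi-instance game}:
\begin{itemize}
  \item sample $G$ and $x_1, \ldots, x_g \in [N]$ uniformly;
  \item give an advice-free algorithm $\Bb$ the inputs $x_1, \ldots, x_g$ and oracle access to $G|^{x_1,\ldots,x_g}$, meaning $G$ with all $g$ points zeroed out;
  \item $\Bb$ wins if it outputs $G(x_1), \ldots, G(x_g)$ all correctly.
\end{itemize}

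\textbf{Step 1: $\Bb$ simulates $\Aa$.} First, $\Bb$ guesses the $S$-bit advice $z_G$ uniformly at random, which succeeds with probability $2^{-S}$. It then runs $\Aa$ on $x_1, x_2, \ldots$ in sequence. During the $j$-th run it must answer queries to $G|^{x_j}$. It does so using its own oracle $G|^{x_1,\ldots,x_g}$, except that on the points $x_1, \ldots, x_{j-1}$ it substitutes its \emph{own earlier guesses} for $G(x_1), \ldots, G(x_{j-1})$.

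Conditioned on the advice guess being correct, on all earlier guesses being correct, and on the $x_i$ being distinct, this simulation of the oracle is exact. Since the $x_j$ are independent given $G$, the probability that all $g$ runs succeed is then $p_G^{\,g}$. By Jensen's inequality,
\[ \Pr[\Bb \text{ wins}] \;\geq\; 2^{-S}\,\Exp_G\!\left[p_G^{\,g}\right] - \frac{g^2}{N} \;\geq\; 2^{-S}\left(\tfrac12+\varepsilon\right)^{g} - \frac{g^2}{N}. \]
The $g^2/N$ term accounts for collisions among the $x_i$. Here $\Bb$ makes $gQ$ quantum queries in total.

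\textbf{Step 2: upper bound on the multi-instance game.} When the $x_i$ are distinct, the bits $G(x_1), \ldots, G(x_g)$ are uniform and independent of everything $\Bb$ ever sees, because its oracle is punctured at exactly these points. Hence
\[ \Pr[\Bb \text{ wins}] \;\leq\; 2^{-g} + \frac{g^2}{N}. \]

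\textbf{Step 3: combine and optimize.} Putting the two bounds together gives $(1+2\varepsilon)^g \leq 2^{S}\bigl(1 + 2^{g}\cdot O(g^2/N)\bigr)$, which yields $\varepsilon \lesssim S/g$ as long as the collision term is controlled.

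\textbf{Main obstacle.} The hard step is keeping that collision/error term from blowing up by the factor $2^{g}$. This term is also where the dependence on $Q$ and the cube root must really enter. I would try three things:
\begin{itemize}
  \item Weaken the winning condition to ``a $(\frac12 + \varepsilon/2)$-fraction correct'' rather than all $g$ correct. The upper bound in Step 2 then becomes a Chernoff tail (\Cref{lemma:chernoff}) of order $e^{-\Omega(\varepsilon^2 g)}$, and the lower bound in Step 1 uses an averaging argument over $G$ in place of Jensen.
  \item Handle repeated points, or queries landing near the earlier challenges, via the hybrid bound \Cref{thm:bbbv}: the total query mass that $gQ$ queries place on the $g$ punctured points is about $g^2Q/N$ on average, which contributes an error of order $\sqrt{g^2 Q^2/N}$.
  \item Balance $S/g$ (plus the $\log N$ cost of encoding the instances) against this error term by choosing $g$ appropriately.
\end{itemize}
I expect this balancing to produce a bound of the form $\varepsilon \leq O\bigl(((S+\log N)Q/N)^{1/3}\bigr)$.
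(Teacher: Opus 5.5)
The paper does not prove this theorem; it imports it from \cite{CGLQ20} as a black box. So your argument has to stand on its own, and as written it has a genuine gap.

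The first concrete error is the exactness claim in Step 1. In the $j$-th run, $\Aa$ must see $G|^{x_j}$, and that oracle agrees with $G$ at the \emph{later} challenges $x_{j+1},\ldots,x_g$. Your oracle $G|^{x_1,\ldots,x_g}$ returns $0$ at those points, and you have no guesses for them yet. So the simulated oracle is wrong wherever $G(x_i)=1$ with $i>j$, even after conditioning on correct advice, correct earlier guesses, and distinct challenges. (Your ``main obstacle'' paragraph places the trouble at the earlier challenges, but those are exactly the points your substitution handles correctly.) Hence the lower bound $2^{-S}\Exp_G[p_G^{\,g}]$ is unjustified. The only tool you offer for this discrepancy is an additive hybrid bound, and no polynomially small additive error can be absorbed into a target of size $2^{-S}(\frac12+\varepsilon)^g$. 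So the all-correct version fails outright, not just at the optimization step.

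The threshold variant has the right shape. It makes the lower bound $2^{-S}\cdot\Omega(\varepsilon)$, so simulation errors only need to be $\ll\varepsilon$, and it tolerates wrong earlier guesses because those sit at random points. But it does not reach the stated rate if the leakage through the other challenge points is charged via \Cref{thm:bbbv}:
\begin{itemize}
\item Chernoff forces $g\approx (S+\log(1/\varepsilon))/\varepsilon^2$.
\item With your error estimate $gQ/\sqrt N$, this gives only $\varepsilon\lesssim\bigl((S+\log N)Q/\sqrt N\bigr)^{1/3}$, with $\sqrt N$ where the theorem has $N$.
\item Your estimate also drops a factor: summing the per-run hybrid errors $\sqrt{Q\cdot Qg/N}$ over $g$ runs gives $Qg^{3/2}/\sqrt N$, which weakens the bound further to $\varepsilon\lesssim\bigl(Q^2(S+\log N)^3/N\bigr)^{1/8}$.
\end{itemize}
The square-root loss in the hybrid lemma puts $N^{-1/2}$ into the error term, so no choice of $g$ in your framework recovers $((S+\log N)Q/N)^{1/3}$. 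Your closing ``I expect'' is not supported by the terms you wrote down.

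Reaching the stated bound needs a sharper, genuinely quantum multi-instance analysis, which is what \cite{CGLQ20} supply and your proposal does not. For example, you could reveal $x_j$ only at stage $j$ and give that stage exactly $G|^{x_j}$. Classical advice can be reused, so this makes Step 1 exact. But it moves all the difficulty into a nontrivial upper bound for the multi-instance game, since earlier stages may already have queried the later challenges.

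One mitigating remark: if you carry out the threshold argument carefully (distinct challenges, a hybrid over the $g$ runs, Jensen for the square roots), you do get a rigorous $\frac12+O\bigl((Q^2(S+\log N)^3/N)^{1/8}\bigr)$. That is still $\frac12+\negl(\lambda)$ in the only place the paper uses this theorem, \Cref{lemma:advice_query_measurement} with $N=2^\lambda$ and $S,Q=\poly(\lambda)$. It is not, however, the theorem as stated.
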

\begin{lemma}\label{lemma:advice_query_measurement}
    Let $G : \{0, 1\}^{\lambda} \to \{0, 1\}$ be a uniformly random function. For an algorithm $\Aa$ that makes $Q(\lambda) = \poly(\lambda)$ quantum queries to $G$ and a family of $t(\lambda) = \poly(\lambda)$-bit classical advice $\{z_G\}_G$, suppose that
        \[ \Pr_{G, x \gets \{0, 1\}^{\lambda}}[\Aa^G(z_G, x) = G(x)] > \frac{3}{5}. \]
    Then, for sufficiently large $\lambda$, for a $\frac{1}{4000Q^2}$ fraction of $x \in \{0, 1\}^{\lambda}$, measuring a random query of $\Aa^{G}$ (for randomly sampled $G$) will produce $x$ with probability at least $\frac{1}{3200Q^2}$.
\end{lemma}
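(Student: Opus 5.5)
We argue by contradiction, using \Cref{thm:yao_box} together with the hybrid method (\Cref{thm:bbbv}). For each $x$, let $m(x)$ be the expected query mass placed on $x$, averaged over $G$, over the advice $z_G$, and over a uniformly random query index $i \in [Q]$. This equals the probability that measuring a random query of $\Aa^G$ yields $x$, so the total query mass satisfies $\sum_i M_x(i) = Q\cdot m(x)$. Call $x$ \emph{heavy} if $m(x) \geq \frac{1}{3200Q^2}$. Suppose, toward a contradiction, that fewer than a $\frac{1}{4000Q^2}$ fraction of $x$ are heavy.

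Next, define the algorithm $\Bb$ that on input $x$ runs $\Aa$ with advice $z_G$, answering its queries with $G|^{x}$ instead of $G$. The oracles $G$ and $G|^{x}$ differ only at the point $x$, that is, on the set $V = [Q]\times\{x\}$. By \Cref{thm:bbbv}, for each fixed $G$,
\[ \bigl|\Pr[\Aa^G(z_G,x)=G(x)] - \Pr[\Bb^{G|^x}(z_G,x)=G(x)]\bigr| \leq 4\sqrt{Q\cdot Q\, m_G(x)}, \]
where $m_G(x)$ is the random-query mass on $x$ for that particular $G$. We then average over $G$ and use Jensen's inequality (concavity of the square root) to get a difference of at most $4Q\sqrt{m(x)}$.

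We bound the average of this difference over random $x$ by splitting into two cases.
\begin{itemize}
\item For non-heavy $x$, the difference is at most $4Q\cdot \frac{1}{\sqrt{3200}\,Q} = \frac{4}{\sqrt{3200}} \approx 0.0707$.
\item For heavy $x$, the difference is at most $1$, but these form less than a $\frac{1}{4000Q^2} \leq \frac{1}{4000}$ fraction of inputs.
\end{itemize}
So the total loss is at most about $0.0707 + 0.00025 < 0.1 - \epsilon$ for a constant $\epsilon>0$. Since $\Aa$ succeeds with probability greater than $\frac{3}{5}$, this gives
\[ \Pr_{G,x}[\Bb^{G|^x}(z_G,x)=G(x)] > \tfrac12 + \epsilon . \]
However, $\Bb$ uses $S=t(\lambda)$ bits of advice and makes $Q$ queries to $G|^x$, with $N=2^\lambda$. \Cref{thm:yao_box} therefore bounds its success probability by $\frac12+O\bigl((t+\lambda)Q/2^\lambda\bigr)^{1/3} = \frac12+o(1)$. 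This is a contradiction for sufficiently large $\lambda$.

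The main obstacle is bookkeeping rather than any new idea. We must be careful about the order of averaging: $m(x)$ is averaged over $G$, while the hybrid bound applies for each fixed $G$, which Jensen's inequality resolves. We must also check that the constants $3200$ and $4000$ leave a constant gap above $\frac12$. Using the uniform bound $Q\ge1$ for the heavy-set fraction is slightly lossy but suffices.
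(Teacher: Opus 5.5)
Your proof is correct and uses the same ingredients as the paper: the hybrid bound between $G$ and $G|^{x}$, Yao's box bound (\Cref{thm:yao_box}), Jensen's inequality, and a Markov-type threshold. The difference is organization: you argue by contrapositive with a heavy/light split, while the paper lower-bounds the average mass directly, $\Exp_{G,x}[M_{G,x}] \geq \frac{1}{1600Q}-\negl(\lambda)$, and then applies reverse Markov with the cap $M_{G,x}\leq Q$. Your version is slightly stronger: since you bound the loss on heavy $x$ by $1$ rather than through $M_{G,x}\leq Q$, the same computation shows that a constant fraction (about $0.03$, for large $\lambda$) of $x$ are heavy, well above the required $\frac{1}{4000Q^2}$.
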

\begin{proof}
    The proof closely follows \cite{LLPY23}, but considers quantum queries instead of classical ones. The only way for $\Aa$ to distinguish $G$ from $G|^{x}$ is to have nontrivial query mass at $x$. Denote by $M_{G, x}$ the total query mass that $x$ is placed by $\Aa$ when querying $G$. For each $G$ and $x$ we have that
        \[ |\Pr[\Aa^G(z_G, x) = G(x)]-\Pr[\Aa^{G|^x}(z_G, x) = G(x)]| \leq 4\sqrt{QM_{G, x}}. \]
    Now we consider the case when we uniform randomly choose $x \gets \{0, 1\}^{\lambda}$, and require $\Aa^{G|^x}(z_G, x)$ to output $G(x)$. This is exactly Yao’s box problem, so by \Cref{thm:yao_box},
    \begin{align*}
        \Pr_{G, x}[\Aa^{G|^{x}}(z_G, x) = G(x)] \leq \frac{1}{2}+O\left(\frac{(t+\lambda)Q}{2^\lambda}\right)^{1/3} &= \frac{1}{2}+\negl(\lambda) \\
        \implies \Pr_{G, x}[\Aa^G(z_G, x) = G(x)]-\Pr_{G, x}[\Aa^{G|^x}(z_G, x) = G(x)] &\geq \frac{1}{10}-\negl(\lambda).
    \end{align*}
    Thus, 
    \begin{align*}
        \Exp_{G, x}[M_{G, x}] &\geq \Exp_{G, x}\left[\frac{1}{16Q}\left(\Pr[\Aa^G(z_G, x) = G(x)]-\Pr[\Aa^{G|^x}(z_G, x) = G(x)]\right)^2\right] \\
        &\geq \frac{1}{16Q} \left(\Exp_{G, x}[\Pr[\Aa^G(z_G, x) = G(x)]-\Pr[\Aa^{G|^x}(z_G, x) = G(x)]]\right)^2 \geq \frac{1}{1600Q}-\negl(\lambda),
    \end{align*}
    by Jensen's inequality. Finally, by a Markov inequality, we see that
        \[ \Pr_x\left[\Exp_{G}[M_{G, x}] \geq \frac{1}{3200Q}\right] \geq \frac{1}{3200Q^2}-\negl(\lambda). \]
    Thus, for sufficiently large $\lambda$, for a $\frac{1}{3200Q^2}-\negl(\lambda) \geq \frac{1}{4000Q^2}$ fraction of $x \in \{0, 1\}^{\lambda}$, measuring a random query of $\Aa^{G}$ (for randomly sampled $G$) will produce $x$ with probability at least $\frac{1}{3200Q^2}$.
\end{proof}

\section{The Generalized Code Intersection Problem}
\subsection{Definitions and Basic Facts}
We begin by recalling the definitions and basic results from \cite{YZ24}.  Much of this section will be taken directly from \cite{YZ24}, with only minor modifications. We first define the code intersection relation, which is essentially the problem of finding codewords over $n$ symbols whose symbols have a particular hash value.  
\begin{definition}[Code intersection relation, adapted from \cite{YZ24, LLPY23, BDK24}]
    For a function $H : [n] \times \Sigma \to \{0, 1\}$ and a code $C \subseteq \Sigma^n$, define the code intersection relation $R_{C, H} \subseteq \{0, 1\}^n \times \Sigma^n$ by
        \[ R_{C,H} = \{ (\x, \bfv) = (x_1, \ldots, x_n, v_1, \ldots, v_n) : ((v_1, \ldots, v_n) \in C) \land (\forall i \in [n], H(i, v_i) = x_i) \}. \]
\end{definition}
\begin{remark}
    We can view $H : [n] \times \Sigma \rightarrow \{0, 1\}$ as a collection of $n$ many functions, $H(i, \cdot): \Sigma \to \{0, 1\}$, and we will at times use the notation $H_i: \Sigma \to \{0, 1\}$ when referring to the function corresponding to the $i$'th output coordinate of $H$.
\end{remark}

\begin{definition}[Trace over a finite field~\cite{YZ24}]
    For any prime power $q = r^m$ where $r$ is prime, we define the trace function $\Tr(x) := \sum_{i = 0}^{m-1} x^{r^i}$ which maps elements of $\FF_q$ to $\FF_r$. The trace function is $\FF_r$-linear: for all $a, b \in \FF_r$ and $x, y \in \FF_q$, $\Tr(ax+ by) = a\Tr(x) + b\Tr(y)$. In addition, for any $x \in \FF_q^n$, $\sum_{y \in \FF_q^n} \omega_r^{\Tr(x \cdot y)} = 0$, where $\omega_r := e^{2\pi i/r}$.
\end{definition}
\begin{definition}[Quantum Fourier transform over a finite field~\cite{YZ24}]
    For a finite field $\FF_q$ where $q = r^m$ and $r$ is prime, the quantum Fourier transform over $\FF_q$ is the unitary denoted by $\QFT_{q}$ such that for any $x \in \FF_q$,
        \[ \QFT_{q} \ket{x} = \frac{1}{\sqrt{q}} \sum_{z \in \FF_q} \omega_r^{\Tr(x \cdot z)} \ket{z}. \]
    The QFT over an alphabet $\Sigma = \FF_q^s$ is the $s$-wise tensor product of $\QFT_{q}$: for $\x = (x_1, \ldots, x_s) \in \Sigma$,
        \[ \QFT_{\Sigma} \ket{\x} := \QFT^{\otimes s}_{q} \ket{x_1} \ldots \ket{x_s} = \frac{1}{\sqrt{|\Sigma|}} \sum_{\z \in \Sigma} \omega_r^{\Tr(\x \cdot \z)} \ket{\z}. \]
    Similarly, for any positive integer $n$ and $\x \in \Sigma^n$, we have
        \[ \QFT_{\Sigma}^{\otimes n} \ket{\x} = \frac{1}{|\Sigma|^{n/2}} \sum_{\z \in \Sigma^n} \omega_r^{\Tr(\x \cdot \z)} \ket{\z}. \]
    The unitary $\QFT_q$ can be approximated within error $\eps$ in operator norm in time $\poly(\log q, \log 1/\eps)$ \cite{CW02, vDHI06}.
\end{definition}
\begin{definition}[Fourier transform of a function~\cite{YZ24}]
    For functions $f, g: \Sigma^n \to \CC$, we define
        \[ \widehat{f}(\z) := \frac{1}{|\Sigma|^{n/2}} \sum_{\x \in \Sigma^n} f(\x) \omega_r^{\Tr(\x \cdot \z)}, \hspace{10pt} (f \cdot g)(\x) := f(\x) \cdot g(\x), \hspace{5pt} \text{and} \hspace{5pt} (f \star g)(\x) := \sum_{\y \in \Sigma^n} f(\y) \cdot g(\x-\y). \]
    Note that
        \[ \QFT_{\Sigma}^{\otimes n} \sum_{\x \in \Sigma^n} f(\x) \ket{\x} = \sum_{\z \in \Sigma^n} \widehat{f}(\z) \ket{\z}. \]
\end{definition}
\begin{fact}[\cite{YZ24}]\label{lemma:fourier_properties}
    The following properties hold for the Fourier transform:
    \begin{enumerate}
        \item (Parseval's equality) For all functions $f: \Sigma^n \to \CC$, $\sum_{\x \in \Sigma^n} |f(\x)|^2 = \sum_{\z \in \Sigma^n} |\widehat{f}(\z)|^2$.
        \item (Pointwise transform) Suppose that we have $f_i: \Sigma \to \CC$ for $i \in [n]$ and $f: \Sigma^n \to \CC$ is defined by $f(\x) := \prod_{i = 1}^n f_i(x_i)$. Then $\widehat{f}(\z) = \prod_{i = 1}^n \widehat{f}_i(z_i)$.
        \item (Convolution theorem) For all functions $f, g, h: \Sigma^n \to \CC$, $\widehat{f \cdot g} = \frac{1}{|\Sigma|^{n/2}}(\widehat{f} \star \widehat{g})$, $\widehat{f \star g} = |\Sigma|^{n/2}(\widehat{f} \cdot \widehat{g})$, and $\widehat{f \cdot (g \star h)} = (\widehat{f} \star (\widehat{g} \cdot \widehat{h}))$.
    \end{enumerate}
\end{fact}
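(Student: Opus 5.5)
All three items are direct computations from the definition $\widehat{f}(\z) = |\Sigma|^{-n/2}\sum_{\x} f(\x)\,\omega_r^{\Tr(\x\cdot\z)}$. I would use two facts about the trace throughout. The first is $\FF_r$-linearity, which gives $\Tr(\x\cdot\z) = \sum_{i}\Tr(x_i\cdot z_i)$ and $\Tr((\x+\y)\cdot\z) = \Tr(\x\cdot\z)+\Tr(\y\cdot\z)$. The second is the orthogonality relation: $\sum_{\y\in\Sigma^n}\omega_r^{\Tr(\x\cdot\y)}$ equals $|\Sigma|^n$ if $\x=0$ and $0$ otherwise, as stated in the trace definition. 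The case $\x \neq 0$ is exactly the quoted identity applied to $\Sigma^n = \FF_q^{sn}$, and the case $\x = 0$ is trivial.

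For Parseval's equality, I would expand $\sum_{\z}|\widehat f(\z)|^2$ as
\[ |\Sigma|^{-n}\sum_{\x,\x'} f(\x)\overline{f(\x')}\sum_{\z}\omega_r^{\Tr((\x-\x')\cdot\z)}, \]
using $\overline{\omega_r^{a}}=\omega_r^{-a}$. Orthogonality collapses this to the diagonal $\x=\x'$. Equivalently, this is just the statement that $\QFT_\Sigma^{\otimes n}$ is unitary, which follows from the displayed identity relating $\QFT_\Sigma^{\otimes n}$ and $\widehat f$.

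For the pointwise transform, I would substitute $f(\x)=\prod_i f_i(x_i)$ and split the character as $\omega_r^{\Tr(\x\cdot\z)}=\prod_i\omega_r^{\Tr(x_i\cdot z_i)}$. The sum over $\x\in\Sigma^n$ then factors into a product of $n$ sums over $\Sigma$, and the normalization $|\Sigma|^{-n/2}$ splits as $\prod_i |\Sigma|^{-1/2}$.

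For the convolution theorem, the second identity comes first.
\begin{itemize}
\item Second identity: write $\widehat{f\star g}(\z)=|\Sigma|^{-n/2}\sum_{\x,\y}f(\y)g(\x-\y)\,\omega_r^{\Tr(\x\cdot\z)}$ and substitute $\x=\y+\bfu$. Additivity of the trace factors the character, so the double sum becomes $|\Sigma|^{-n/2}\bigl(|\Sigma|^{n/2}\widehat f(\z)\bigr)\bigl(|\Sigma|^{n/2}\widehat g(\z)\bigr)=|\Sigma|^{n/2}\,\widehat f(\z)\,\widehat g(\z)$.
\item First identity: expand
\[ (\widehat f\star\widehat g)(\z)=|\Sigma|^{-n}\sum_{\bfa,\bfb}f(\bfa)g(\bfb)\,\omega_r^{\Tr(\bfb\cdot\z)}\sum_{\y}\omega_r^{\Tr((\bfa-\bfb)\cdot\y)}. \]
Orthogonality forces $\bfa=\bfb$, leaving $\sum_{\bfa}f(\bfa)g(\bfa)\,\omega_r^{\Tr(\bfa\cdot\z)}=|\Sigma|^{n/2}\,\widehat{f\cdot g}(\z)$.
\item Third identity: apply the first identity to $f$ and $g\star h$, then the second to $g\star h$. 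The factors $|\Sigma|^{-n/2}$ and $|\Sigma|^{n/2}$ cancel, giving $\widehat f\star(\widehat g\cdot\widehat h)$.
\end{itemize}

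There is no real obstacle here. The only points needing care are bookkeeping the normalizations $|\Sigma|^{\pm n/2}$ and the sign convention in the exponent. With the $+$ sign convention, the orthogonality collapse in the first identity must produce the $\omega_r^{\Tr(\bfb\cdot\z)}$ factor on the correct variable so that no $\widehat f(-\z)$ appears.
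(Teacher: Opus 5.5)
Your proposal is correct: all three items follow from additivity of the trace and the character orthogonality relation exactly as you compute. You also correctly supply the hypothesis $\mathbf{x} \neq 0$, which the paper's statement of the trace identity omits; the paper itself gives no proof and simply imports the fact from \cite{YZ24}, so your direct verification is the standard argument.
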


\begin{lemma}[Fourier transform of a linear code]\label{lemma:dual_fourier}
    Let $C \subseteq \Sigma^n = (\FF_q^s)^n$ be any $\FF_q$-linear code. Then,
        \[ f(\bfu) = \begin{cases} \frac{1}{\sqrt{|C|}} & \text{if $\bfu \in C$,} \\ 0 & \text{otherwise.} \end{cases} \hspace{5pt} \Longleftrightarrow \hspace{5pt} \widehat{f}(\bfu) = \begin{cases} \frac{1}{\sqrt{|C^{\perp}|}} & \text{if $\bfu \in C^{\perp}$,} \\ 0 & \text{otherwise}. \end{cases} \]
\end{lemma}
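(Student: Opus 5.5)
The claim is that the uniform superposition over an $\FF_q$-linear code $C$ Fourier-transforms to the uniform superposition over $C^{\perp}$. I will verify it by direct computation with the definition of $\widehat{f}$, using the orthogonality of characters. By the definition of the Fourier transform,
\[ \widehat{f}(\bfu) = \frac{1}{|\Sigma|^{n/2}\sqrt{|C|}} \sum_{\x \in C} \omega_r^{\Tr(\x \cdot \bfu)}, \]
where $\x \cdot \bfu = \sum_{i,j} (x_i)_j (u_i)_j \in \FF_q$ is the same bilinear form used to define $C^{\perp}$.

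\textbf{Case $\bfu \in C^{\perp}$.} Every term has $\x \cdot \bfu = 0$, so each summand equals $1$ and the sum is $|C|$. This gives $\widehat{f}(\bfu) = \sqrt{|C|}/|\Sigma|^{n/2}$. Since $C$ is $\FF_q$-linear, the earlier fact gives $|C|\cdot|C^{\perp}| = |\Sigma|^n$, so this equals $1/\sqrt{|C^{\perp}|}$.

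\textbf{Case $\bfu \notin C^{\perp}$.} The map $\phi: \x \mapsto \x \cdot \bfu$ is an $\FF_q$-linear functional $C \to \FF_q$. It is nonzero, so it is surjective, and each fibre has size $|C|/q$. Therefore
\[ \sum_{\x \in C} \omega_r^{\Tr(\phi(\x))} = \frac{|C|}{q} \sum_{y \in \FF_q} \omega_r^{\Tr(y)}. \]
It remains to show that $\sum_{y \in \FF_q} \omega_r^{\Tr(y)} = 0$. This is the $n=1$ case of the stated property $\sum_{y} \omega_r^{\Tr(x y)} = 0$, which in context is meant for $x \neq 0$; here take $x = 1$. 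Alternatively, it follows because $\Tr$ is a nonzero $\FF_r$-linear map, hence equidistributed over $\FF_r$, and the $r$-th roots of unity sum to zero. Hence $\widehat{f}(\bfu) = 0$.

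\textbf{The converse direction.} Apply the same computation to $C^{\perp}$, which is also $\FF_q$-linear. Its dual is $(C^{\perp})^{\perp} = C$, since the bilinear form is nondegenerate and the dimensions match. This shows that the uniform state on $C^{\perp}$ transforms to the uniform state on $C$. Because the QFT is unitary, equivalently the transform is invertible, the two sides of the stated equivalence determine each other. One could also just note that $\widehat{\cdot}$ is a bijection, so only the forward direction needs proof.

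The only mildly delicate step is the vanishing of the character sum, which needs both surjectivity of the nonzero linear functional and nontriviality of the trace. Everything else is bookkeeping with $|C|\cdot|C^{\perp}| = |\Sigma|^n$.
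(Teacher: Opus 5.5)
Your proof is correct. On $C^{\perp}$ it matches the paper: every phase is $1$, and then you use $|C|\cdot|C^{\perp}|=|\Sigma|^n$.

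Where you differ is in proving that $\widehat{f}$ vanishes off $C^{\perp}$. You evaluate the character sum directly. You use that $\x\mapsto\x\cdot\bfu$ is a nonzero, hence surjective, $\FF_q$-linear functional on $C$ with fibres of size $|C|/q$. You also use that $\sum_{y\in\FF_q}\omega_r^{\Tr(y)}=0$, because $\Tr$ is a nonzero $\FF_r$-linear map.

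The paper never evaluates that sum and uses Parseval instead. The values already computed give $\sum_{\bfu\in C^{\perp}}|\widehat{f}(\bfu)|^2=|C^{\perp}|\cdot\frac{1}{|C^{\perp}|}=1=\sum_{\x}|f(\x)|^2$. So $\widehat{f}$ has no $\ell_2$ mass left for points outside $C^{\perp}$.

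The Parseval route is shorter and makes the counting identity $|C|\cdot|C^{\perp}|=|\Sigma|^n$ do double duty. Your route is more explicit: it proves the vanishing without that count or the normalization. It also isolates the one real input, a nontrivial additive character (i.e.\ $\Tr\not\equiv 0$). The paper's stated orthogonality fact leaves this implicit and, as you note, is literally stated even for $x=0$.

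For the converse, the paper just says ``identical argument.'' Your remark that $f\mapsto\widehat{f}$ is a bijection, so the forward direction suffices, is the cleanest way to settle it. Your alternative of redoing the computation for $C^{\perp}$ computes a forward rather than an inverse transform. That is harmless only because linear codes are closed under negation.
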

\begin{proof}
Since $C$ is $\FF_q$-linear, $|C| \cdot |C^{\perp}|= |\Sigma|^n$, and thus for any $\z \in C^{\perp}$,
\begin{align*}
    \widehat{f}(\z) = \frac{1}{|\Sigma|^{n/2}} \sum_{\bfu \in \Sigma^n} f(\bfu) \omega_r^{\Tr(\bfu \cdot \z)} = \frac{1}{|\Sigma|^{n/2}} \sum_{\bfu \in C} \frac{1}{\sqrt{|C|}} = \frac{1}{\sqrt{|C^{\perp}|}}.
\end{align*}
Finally, $\widehat{f}(\z) = 0$ for $\z \notin C^{\perp}$ by Parseval's equality. The reverse direction follows by an identical argument.
\end{proof}

\begin{lemma}\label{lemma:euclidean_unnormalized}
    Let $\ket{\psi}$, $\ket{\phi}$ be states such that $\norm{\ket{\psi}} = 1$ and $\norm{\ket{\psi}-\ket{\phi}} \leq \eps < 1$. Then, $\norm{\ket{\psi}-\frac{\ket{\phi}}{\|\ket{\phi}\|}} \leq 2\eps$. 
\end{lemma}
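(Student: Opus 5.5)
Write $\ket{\phi'} := \ket{\phi}/\|\ket{\phi}\|$ and split the target distance with the triangle inequality through the unnormalized vector $\ket{\phi}$:
\[ \norm{\ket{\psi} - \ket{\phi'}} \leq \norm{\ket{\psi} - \ket{\phi}} + \norm{\ket{\phi} - \ket{\phi'}}. \]
The first term is at most $\eps$ by hypothesis, so it remains to show that the second term is also at most $\eps$.

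Since $\ket{\phi'}$ is a positive rescaling of $\ket{\phi}$, the second term is just the gap between $\ket{\phi}$ and its normalization:
\[ \norm{\ket{\phi} - \ket{\phi'}} = \norm{\ket{\phi}} \cdot \left|1 - \frac{1}{\norm{\ket{\phi}}}\right| = \bigl|\norm{\ket{\phi}} - 1\bigr|. \]
Using $\norm{\ket{\psi}} = 1$ and the reverse triangle inequality,
\[ \bigl|\norm{\ket{\phi}} - 1\bigr| = \bigl|\norm{\ket{\phi}} - \norm{\ket{\psi}}\bigr| \leq \norm{\ket{\phi} - \ket{\psi}} \leq \eps. \]
Adding the two terms gives $\norm{\ket{\psi} - \ket{\phi'}} \leq 2\eps$, as claimed.

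The only point needing care is that the normalization is well defined. Because $\eps < 1$, the reverse triangle inequality gives $\norm{\ket{\phi}} \geq 1 - \eps > 0$, so $\ket{\phi} \neq 0$ and $\ket{\phi'}$ exists. This is the sole place the hypothesis $\eps < 1$ is used; beyond it there is no real obstacle.
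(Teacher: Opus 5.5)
Your proof is correct and follows the paper's argument essentially verbatim: the reverse triangle inequality gives $|\|\ket{\phi}\| - 1| \leq \eps$, which also ensures $\|\ket{\phi}\| \geq 1-\eps > 0$ so the normalization exists, and the triangle inequality through $\ket{\phi}$ then gives the bound $2\eps$. You also make explicit the identity $\|\ket{\phi} - \ket{\phi'}\| = |\|\ket{\phi}\| - 1|$, which the paper uses without stating it.
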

\begin{proof}
    By the reverse/inverse triangle inequality, 
        \[ |\|\ket{\phi}\| - 1| = |\|\ket{\phi}\| - \|\ket{\psi}\|| \leq \|\ket{\phi}-\ket{\psi}\| = \eps \implies \|\ket{\phi}\| \geq 1-\eps > 0. \]
    We can thus define the normalized state $\ket{\phi'} := \frac{\ket{\phi}}{\|\ket{\phi}\|}$. By the regular triangle inequality,
        \begin{equation*}
            \|\ket{\psi}-\ket{\phi'}\| \leq \|\ket{\psi}-\ket{\phi}\| + \|\ket{\phi}-\ket{\phi'}\| \leq \eps + |\|\ket{\phi}\|-1| \leq 2\eps. \qedhere
        \end{equation*}
\end{proof}
\begin{lemma}[\cite{BV93}]\label{lemma:euclidean_to_tv}
    Let $\ket{\psi}, \ket{\phi}$ be states such that $\|\ket{\psi}\| = \|\ket{\phi}\| = 1$ and $\|\ket{\psi}-\ket{\phi}\| \leq \eps$. Then the total variation distance between the probability distributions resulting from measurements of $\ket{\phi}$ and $\ket{\psi}$ is at most $4\eps$.
\end{lemma}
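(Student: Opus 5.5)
The statement to prove is \Cref{lemma:euclidean_to_tv}: two unit vectors at Euclidean distance at most $\eps$ induce standard-basis measurement distributions at total variation distance at most $4\eps$. The plan is to compare the two distributions coordinate by coordinate and control each term by Cauchy--Schwarz. Write $\ket{\psi} = \sum_x \alpha_x \ket{x}$ and $\ket{\phi} = \sum_x \beta_x \ket{x}$ in the measurement basis. The induced distributions are $p_x = |\alpha_x|^2$ and $q_x = |\beta_x|^2$, and the quantity to bound is
\[ \sum_x \bigl| |\alpha_x|^2 - |\beta_x|^2 \bigr| \]
(with or without the factor $\tfrac12$; we will prove the bound in the larger, un-halved form).

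\textbf{Step 1: factor each term.} For each $x$,
\[ \bigl||\alpha_x|^2 - |\beta_x|^2\bigr| = \bigl||\alpha_x| - |\beta_x|\bigr| \cdot \bigl(|\alpha_x| + |\beta_x|\bigr) \le |\alpha_x - \beta_x| \cdot \bigl(|\alpha_x| + |\beta_x|\bigr), \]
where the inequality is the reverse triangle inequality for complex numbers.

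\textbf{Step 2: apply Cauchy--Schwarz.} Summing over $x$,
\[ \sum_x |\alpha_x - \beta_x| \bigl(|\alpha_x| + |\beta_x|\bigr) \le \Bigl(\sum_x |\alpha_x - \beta_x|^2\Bigr)^{1/2} \Bigl(\sum_x \bigl(|\alpha_x| + |\beta_x|\bigr)^2\Bigr)^{1/2}. \]
The first factor is exactly $\|\ket{\psi} - \ket{\phi}\| \le \eps$.

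\textbf{Step 3: bound the second factor.} The second factor is the Euclidean norm of the vector $(|\alpha_x| + |\beta_x|)_x$. By the triangle inequality this norm is at most $\|\ket{\psi}\| + \|\ket{\phi}\| = 2$, using the hypothesis $\|\ket{\psi}\| = \|\ket{\phi}\| = 1$.

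\textbf{Conclusion.} Combining the steps gives $\sum_x |p_x - q_x| \le 2\eps$, so the total variation distance is at most $2\eps$ (or $\eps$ with the halved convention). Either way this is below the stated $4\eps$.

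\textbf{Extension to general measurements.} If measurements other than standard-basis ones are intended, first apply the measurement's isometry or unitary; this preserves Euclidean distance. For a general POVM, use that $\|\psi\psi^\dagger - \phi\phi^\dagger\|_1 \le 2\|\ket{\psi} - \ket{\phi}\|$, which follows from the same triangle-inequality decomposition. Then bound the outcome distance by the trace distance.

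No step is a real obstacle. The only point needing care is fixing the convention for total variation distance (halved or not) so that the constant $4$ clearly covers the bound.
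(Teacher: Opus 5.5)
Your proof is correct: the factorization, the reverse triangle inequality and Cauchy--Schwarz give $\sum_x |p_x - q_x| \le 2\eps$, which is within the stated $4\eps$ under either convention for total variation distance. The paper gives no proof of this lemma and simply cites \cite{BV93}, whose argument is this same coordinate-wise Cauchy--Schwarz estimate. Your POVM extension via $\|\psi\psi^\dagger - \phi\phi^\dagger\|_1 \le 2\|\psi - \phi\|$ is also correct, but it is not needed, since the paper only uses standard-basis measurements.
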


\newcommand{\GOOD}{\mathsf{GOOD}}
\newcommand{\BAD}{\mathsf{BAD}}
\newcommand{\Good}{\mathsf{Good}}
\newcommand{\Bad}{\mathsf{Bad}}

Now we state the main algorithmic result of \cite{YZ24}, namely that a quantum algorithm can approximately implement the convolution trick for some families of functions. 
\begin{lemma}[\cite{YZ24}]\label{lemma:yz_regev}
    Let $\ket{\psi}$ and $\ket{\phi}$ be quantum states on a quantum system over an alphabet $\Sigma = \FF_q^s$ written as
    \begin{align*}
        \ket{\psi} = \sum_{\bfu \in \Sigma^n} V(\bfu) \ket{\bfu} \hspace{10pt} \text{and} \hspace{10pt} \ket{\phi} = \sum_{\ee \in \Sigma^n} W(\ee) \ket{\ee},
    \end{align*}
    for functions $V, W: \Sigma^n \to \CC$. Let $F : \Sigma^n \to \Sigma^n$ be a function and let $\GOOD \subseteq \Sigma^n \times \Sigma^n$ be a subset such that for any $(\bfu, \ee) \in \GOOD$, we have $F(\bfu + \ee) = \bfu$. Define $\BAD := (\Sigma^n \times \Sigma^n) \setminus \GOOD$. Suppose that
    \begin{align*}
        \sum_{(\bfu, \ee) \in \BAD} |\widehat{V}(\bfu) \widehat{W}(\ee)|^2 \leq \eps \hspace{10pt} \text{and} \hspace{10pt} \sum_{\z \in \Sigma^n} \left|\sum_{(\bfu, \ee) \in \BAD: \bfu+\ee = \z} \widehat{V}(\bfu)\widehat{W}(\ee)\right|^2 &\leq \delta\,.
    \end{align*}
    Define the unitaries $U_{\mathsf{add}}$ and $U_F$ as follows:
    \begin{align*}
        U_{\mathsf{add}} = \sum_{(\bfu, \ee) \in \Sigma^{n} \times \Sigma^{n}} \ket{\bfu + \ee, \bfu} \!\! \bra{\ee, \bfu}\quad\text{and} \quad U_F = \sum_{(\bfu, \mathbf{w}) \in \Sigma^{n} \times \Sigma^{n}} \ket{\mathbf{w}, \bfu - F(\mathbf{w})}\!\!\bra{\mathbf{w}, \bfu}\,.
    \end{align*}
    Then,
    \begin{equation*}
        \norm{\left(\QFT^{-1, \otimes n}_{\Sigma} \otimes \mathrm{id}\right) \cdot U_F \cdot U_{\mathsf{add}} \cdot \left(\QFT^{\otimes n}_{\Sigma} \otimes \QFT^{\otimes n}_{\Sigma}\right) \ket{\phi} \ket{\psi} - |\Sigma|^{n/2} \sum_{\z \in \Sigma^n} (V(\z) \cdot W(\z)) \ket{\z} \ket{0}} \leq \sqrt{\eps} + \sqrt{\delta}\,.
    \end{equation*}
\end{lemma}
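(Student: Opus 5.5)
We prove the statement by tracking the state through each unitary explicitly, splitting the amplitude into a $\GOOD$ part and a $\BAD$ part, and then bounding the $\BAD$ contributions by $\sqrt{\eps}$ and $\sqrt{\delta}$ respectively.

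\textbf{Step 1: expand the state.} After $\QFT^{\otimes n}_\Sigma\otimes\QFT^{\otimes n}_\Sigma$ the state is $\sum_{\ee,\bfu}\widehat W(\ee)\widehat V(\bfu)\ket{\ee}\ket{\bfu}$. Applying $U_{\mathsf{add}}$ gives
\[ \sum_{\bfu,\ee}\widehat V(\bfu)\widehat W(\ee)\ket{\bfu+\ee}\ket{\bfu}. \]
Then $U_F$ maps each term to $\ket{\bfu+\ee}\ket{\bfu-F(\bfu+\ee)}$. We split the sum into $\GOOD$ and $\BAD$ pairs. For $(\bfu,\ee)\in\GOOD$ the second register becomes $\ket{0}$. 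Hence the state after $U_F$ equals
\[ \ket{\Phi_{\GOOD}}+\ket{\Phi'_{\BAD}}, \qquad \ket{\Phi_{\GOOD}}=\sum_{(\bfu,\ee)\in\GOOD}\widehat V(\bfu)\widehat W(\ee)\ket{\bfu+\ee}\ket{0}, \]
where $\ket{\Phi'_{\BAD}}$ is the image under $U_F$ of $\ket{\Phi_{\BAD}}=\sum_{\BAD}\widehat V(\bfu)\widehat W(\ee)\ket{\bfu+\ee}\ket{\bfu}$.

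\textbf{Step 2: identify the target.} Write the full ideal state as
\[ \ket{\Phi_{\GOOD}}+\sum_{(\bfu,\ee)\in\BAD}\widehat V(\bfu)\widehat W(\ee)\ket{\bfu+\ee}\ket{0}=\sum_{\z}(\widehat V\star\widehat W)(\z)\ket{\z}\ket{0}. \]
By the convolution theorem (\Cref{lemma:fourier_properties}), $\widehat V\star\widehat W=|\Sigma|^{n/2}\,\widehat{V\cdot W}$. Applying $\QFT^{-1,\otimes n}_\Sigma$ to this ideal state therefore yields exactly the target $|\Sigma|^{n/2}\sum_\z V(\z)W(\z)\ket{\z}\ket0$. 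One should check the sign and normalization conventions of the transform here, but this is bookkeeping.

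\textbf{Step 3: bound the error.} Since $\QFT^{-1}$ is unitary, the distance we must bound is at most
\[ \Big\|\ket{\Phi'_{\BAD}}\Big\|+\Big\|\sum_{\BAD}\widehat V(\bfu)\widehat W(\ee)\ket{\bfu+\ee}\ket0\Big\|. \]
For the first term, $U_F$ is unitary, so it equals $\|\ket{\Phi_{\BAD}}\|$. The map $(\bfu,\ee)\mapsto(\bfu+\ee,\bfu)$ is injective, so the basis states in $\ket{\Phi_{\BAD}}$ are distinct and its squared norm is $\sum_{\BAD}|\widehat V(\bfu)\widehat W(\ee)|^2\le\eps$. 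The second term has squared norm $\sum_\z|\sum_{\BAD,\,\bfu+\ee=\z}\widehat V\widehat W|^2\le\delta$ by hypothesis. Together these give the bound $\sqrt\eps+\sqrt\delta$.

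The only delicate step is Step 2's application of the convolution identity: the QFT normalization must produce exactly the $|\Sigma|^{n/2}$ factor in the target.
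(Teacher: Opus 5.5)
Your proof is correct and is essentially the standard Yamakawa--Zhandry argument that the paper cites without reproducing. Split the post-$U_{\mathsf{add}}$ state into its $\GOOD$ and $\BAD$ parts, compare against the ideal state $\sum_{\z}(\widehat V\star\widehat W)(\z)\ket{\z}\ket{0}$, and bound the two $\BAD$ error vectors by $\sqrt{\eps}$ and $\sqrt{\delta}$. The normalization you flag follows directly from the paper's convolution identity $\widehat{V\cdot W}=|\Sigma|^{-n/2}(\widehat V\star\widehat W)$, so no further bookkeeping is needed.
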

\Cref{lemma:euclidean_unnormalized,lemma:euclidean_to_tv,lemma:yz_regev} imply the following corollary (as the QFT, $U_{\mathsf{add}}$, and $U_F$ are unitaries).
\begin{corollary}\label[corollary]{cor:measuring_distance}
    Let $V$ and $W$ be functions, and $\epsilon$ and $\delta$ be the corresponding error parameters from \Cref{lemma:yz_regev}. For any property $\Pp: \Sigma^n \to \{0, 1\}$, if $\|\ket{\phi}\| = \|\ket{\psi}\|= 1$ and measuring (in the standard basis) the normalization of
    \begin{equation*}
        |\Sigma|^{n/2} \sum_{\z \in \Sigma^n} (V(\z) \cdot W(\z)) \ket{\z} \ket{0}\,,
    \end{equation*}
    produces an outcome $\z$ such that $\Pp(\z) = 1$ with probability $p$, then measuring (in the standard basis)
    \begin{equation*}
        \left(\QFT^{-1, \otimes n}_{\Sigma} \otimes \mathrm{id}\right) \cdot U_F \cdot U_{\mathsf{add}} \cdot \left(\QFT^{\otimes n}_{\Sigma} \otimes \QFT^{\otimes n}_{\Sigma}\right) \ket{\phi} \ket{\psi}\,,
    \end{equation*}
    produces an outcome $\z$ such that $\Pp(\z) = 1$ with probability at least $p-8(\sqrt{\eps}+\sqrt{\delta})$.
\end{corollary}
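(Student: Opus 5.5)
The statement to prove is \Cref{cor:measuring_distance}, and the plan is to chain \Cref{lemma:yz_regev}, \Cref{lemma:euclidean_unnormalized}, and \Cref{lemma:euclidean_to_tv}. Let
\[ \ket{\mathsf{out}} := \left(\QFT^{-1, \otimes n}_{\Sigma} \otimes \mathrm{id}\right) \cdot U_F \cdot U_{\mathsf{add}} \cdot \left(\QFT^{\otimes n}_{\Sigma} \otimes \QFT^{\otimes n}_{\Sigma}\right) \ket{\phi} \ket{\psi}, \qquad \ket{\mathsf{ideal}} := |\Sigma|^{n/2} \sum_{\z \in \Sigma^n} (V(\z) \cdot W(\z)) \ket{\z} \ket{0}. \]
Since $\ket{\phi}$ and $\ket{\psi}$ are unit vectors and every operator applied is unitary, $\ket{\mathsf{out}}$ is a unit vector. \Cref{lemma:yz_regev} gives $\|\ket{\mathsf{out}} - \ket{\mathsf{ideal}}\| \leq \sqrt{\eps}+\sqrt{\delta} =: \eta$.

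First I dispose of the degenerate case. If $\eta \geq 1/8$, then $p - 8\eta \leq 1 - 1 \leq 0$, so the bound holds trivially. I may therefore assume $\eta < 1/8 < 1$.

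In the main case, apply \Cref{lemma:euclidean_unnormalized} with the unit vector $\ket{\mathsf{out}}$ in the role of $\ket{\psi}$ and the unnormalized $\ket{\mathsf{ideal}}$ in the role of $\ket{\phi}$. The lemma guarantees that $\ket{\mathsf{ideal}}$ is nonzero, so its normalization $\ket{\mathsf{ideal}'}$ is well defined, and it gives $\|\ket{\mathsf{out}} - \ket{\mathsf{ideal}'}\| \leq 2\eta$. Both vectors are now unit vectors, so \Cref{lemma:euclidean_to_tv} bounds the total variation distance between their standard-basis measurement distributions by $8\eta$.

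Finally, the event ``the first register $\z$ satisfies $\Pp(\z)=1$'' is an event on the joint outcome space of both registers. Its probability therefore changes by at most the total variation distance. Under $\ket{\mathsf{ideal}'}$ it has probability $p$ by hypothesis, so under $\ket{\mathsf{out}}$ it has probability at least $p - 8(\sqrt{\eps}+\sqrt{\delta})$.

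The only step needing care is the normalization: ensuring the hypothesis $\eps<1$ of \Cref{lemma:euclidean_unnormalized} holds. The case split above handles it. Everything else is direct substitution.
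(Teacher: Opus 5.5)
Your proof is correct and follows the paper's intended argument: unitarity makes the output a unit vector, \Cref{lemma:yz_regev} gives the $\sqrt{\epsilon}+\sqrt{\delta}$ distance to the unnormalized target, and \Cref{lemma:euclidean_unnormalized} followed by \Cref{lemma:euclidean_to_tv} turns this into a total-variation bound of $8(\sqrt{\epsilon}+\sqrt{\delta})$ on the event $\Pp(\z)=1$. Your case split, which secures the ``$<1$'' hypothesis of \Cref{lemma:euclidean_unnormalized}, is a detail the paper leaves implicit, and you handle it correctly.
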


\subsection{Technical Lemmas}\label{subsec:technical_lemmas}
In this section, we state some technical lemmas to extend the Yamakawa-Zhandry algorithm to work with biased oracles. We first introduce a pair of functions $V$ and $W$ that represent normalized indicators for a code $C$ and the preimages of any output $\bb$ for a function $H$. For any $\FF_q$-linear code $C \subseteq \Sigma^n = (\FF_q^s)^n$, function $H: [n] \times \Sigma \to \{0, 1\}$, and string $\bb \in \{0, 1\}^n$, let $V: \Sigma^n \to \CC$, $W_i^{H_i, b_i}: \Sigma \to \CC$, and $W^{H, \bb}: \Sigma^n \to \CC$ be defined as follows:
\begin{align*}
    V(\bfu) &= \begin{cases}
        \frac{1}{\sqrt{|C|}}& \text{$\bfu \in C$}\\
        0 & \text{otherwise}
    \end{cases}\,, \\
    W_i^{H_i, b_i}(e) &= \begin{cases}
        \frac{1}{\sqrt{|T_i^{H_i, b_i}|}}& \text{$e \in T_i^{H_i, b_i}$}\\
        0 & \text{otherwise}
    \end{cases}\,, \\
    W^{H, \bb}(e_1, \ldots, e_n) &= \prod_{i = 1}^n W_i^{H_i, b_i}(e_i)\,,
\end{align*} 
where $T_i^{H_i, b_i} \subseteq \Sigma$ is the subset consisting of $e_i \in \Sigma$ such that $H_i(e_i) = b_i$.

\newcommand{\Bias}{\mathsf{Bias}}
\begin{definition}[$p$-biased distribution]\label{def:yz_distribution}
    For any $p \in [0, 1]$ and set $\Sigma$, let $\Bias_{p, \Sigma}$ denote the distribution over functions from $\Sigma$ to $\{0, 1\}$ that samples $F: \Sigma \to \{0, 1\}$ with probability $p^{|F^{-1}(1)|}(1-p)^{|F^{-1}(0)|}$. Let $\Bias_{n, p, \Sigma}$ denote the distribution over functions $G: [n] \times \Sigma \to \{0, 1\}$ that samples $G$ with probability $\Bias_{n, p, \Sigma}(G) := \prod_{i = 1}^n \Bias_{p, \Sigma}(G_i)$.
\end{definition}

The following claim follows immediately from the definition of $\Bias_{n, p, \Sigma}$.
\begin{claim}\label{claim:perms}
    Let $\pi$ be any permutation over $\Sigma$ (resp. $\Sigma^n$). Then, the distributions $\Bias_{p, \Sigma}$ and $\Bias_{p, \Sigma} \circ \pi$ (resp. $\Bias_{n, p, \Sigma}$ and $\Bias_{n, p, \Sigma} \circ \pi$) are identical.
\end{claim}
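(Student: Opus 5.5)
The plan is to reduce both statements to one general fact. For any finite set $D$, write $\mu_D$ for the distribution on functions $F: D \to \{0,1\}$ given by $\mu_D(F) = p^{|F^{-1}(1)|}(1-p)^{|F^{-1}(0)|}$. The fact is that for every bijection $\pi: D \to D$, the law of $F \circ \pi$ with $F \sim \mu_D$ is again $\mu_D$. Here I read $\Bias \circ \pi$ as the distribution of $F \circ \pi$ when $F$ is drawn from $\Bias$.

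The proof of the general fact is a direct computation.
\begin{itemize}
\item Fix any $G: D \to \{0,1\}$. Since $\pi$ is a bijection, $F \circ \pi = G$ holds if and only if $F = G \circ \pi^{-1}$, so $\Pr[F \circ \pi = G] = \mu_D(G \circ \pi^{-1})$.
\item We have $(G \circ \pi^{-1})^{-1}(b) = \pi(G^{-1}(b))$ for $b \in \{0,1\}$, and a bijection preserves cardinalities. Hence the exponents $|G^{-1}(1)|$ and $|G^{-1}(0)|$ are unchanged.
\item Therefore $\mu_D(G \circ \pi^{-1}) = \mu_D(G)$. Equivalently, $\mu_D$ is the product of i.i.d.\ $p$-Bernoulli bits indexed by $D$, and relabelling the index set cannot change such a product.
\end{itemize}

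Taking $D = \Sigma$ gives the first case, because $\Bias_{p,\Sigma} = \mu_\Sigma$ by \Cref{def:yz_distribution}.

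For the second case, expand the product in \Cref{def:yz_distribution}:
\[ \Bias_{n,p,\Sigma}(G) = \prod_{i=1}^n p^{|G_i^{-1}(1)|}(1-p)^{|G_i^{-1}(0)|} = p^{|G^{-1}(1)|}(1-p)^{|G^{-1}(0)|}. \]
This holds because the preimages $G^{-1}(b)$ are disjoint unions over $i$ of the sets $\{i\} \times G_i^{-1}(b)$. So $\Bias_{n,p,\Sigma} = \mu_D$ for $D$ the full domain of $G$, namely the $n$ labelled copies of $\Sigma$ that the claim abbreviates as $\Sigma^n$. The general fact then applies to an arbitrary permutation $\pi$ of this domain, including ones that mix coordinates, and not only to coordinatewise permutations.

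The only real obstacle is this expansion of the product into a single exponent. It is exactly what removes any dependence on coordinate structure, so that invariance holds for every permutation of the whole domain. Everything else is bookkeeping.
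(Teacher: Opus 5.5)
Your proof is correct and is the argument the paper intends when it says the claim ``follows immediately from the definition'': the mass $p^{|F^{-1}(1)|}(1-p)^{|F^{-1}(0)|}$ depends only on preimage sizes, which any bijection of the domain preserves, and the product defining $\Bias_{n,p,\Sigma}$ collapses to this same form on $[n]\times\Sigma$. Reading the second case as permutations of the domain $[n]\times\Sigma$ is the right way to make $\Bias_{n,p,\Sigma}\circ\pi$ well-typed, and it covers the coordinatewise translation $\pi_{\z_0,\z_1}$ of $\Sigma^n$ used in \Cref{claim:small_fourier}, which acts on $H$ as the permutation $(i,\sigma)\mapsto(i,\sigma+(\z_0-\z_1)_i)$.
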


The following lemma shows that when we take the Fourier transform of the preimage state of $H_i$ sampled from $\Bias_{p, \Sigma}$, the resulting Fourier coefficients are in expectation uniform over all non-zero elements of $\Sigma$ and fixed (either $p$ or $1-p$ depending on if we are taking the preimage of $0$ or $1$) on $0$.  
\begin{claim}\label{claim:fourier_coeffs}
    Fix any string $\bb \in \{0, 1\}^n$. For all $i \in [n]$ and $\sigma, \sigma' \in \Sigma \setminus \{0\}$, it holds that
        \[ \Exp_{H_i \gets \Bias_{p, \Sigma}}[|\widehat{W}_i^{H_i, b_i}(0)|^2] = \begin{cases} p & \text{if $b_i = 1$} \\ 1-p & \text{if $b_i = 0$} \\ \end{cases} \hspace{10pt} \text{and} \hspace{10pt} \Exp_{H_i \gets \Bias_{p, \Sigma}}[|\widehat{W}_i^{H_i, b_i}(\sigma)|^2] = \Exp_{H_i \gets \Bias_{p, \Sigma}}[|\widehat{W}_i^{H_i, b_i}(\sigma')|^2]. \]
\end{claim}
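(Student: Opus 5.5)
The plan is to compute $\widehat{W}_i^{H_i,b_i}$ directly from the definition of the Fourier transform. Write $T = T_i^{H_i,b_i}$. Then
\[ \widehat{W}_i^{H_i,b_i}(\z) = \frac{1}{\sqrt{|\Sigma|\,|T|}} \sum_{e \in T} \omega_r^{\Tr(e \cdot \z)}. \]
If $T = \emptyset$, the function $W_i^{H_i,b_i}$ is not well defined. I will adopt the convention $W_i^{H_i,b_i} \equiv 0$ in that case. This affects neither identity below, since with it $|\widehat{W}_i^{H_i,b_i}(0)|^2 = |T|/|\Sigma| = 0$ and all coefficients vanish.

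\emph{First identity.} At $\z = 0$ every phase equals $1$, so $\widehat{W}_i^{H_i,b_i}(0) = |T|/\sqrt{|\Sigma||T|}$, and hence $|\widehat{W}_i^{H_i,b_i}(0)|^2 = |T|/|\Sigma|$. By linearity of expectation,
\[ \Exp\big[|T|/|\Sigma|\big] = \frac{1}{|\Sigma|}\sum_{e\in\Sigma}\Pr_{H_i\gets\Bias_{p,\Sigma}}[H_i(e)=b_i]. \]
Under $\Bias_{p,\Sigma}$, each symbol independently maps to $1$ with probability $p$ and to $0$ with probability $1-p$. The right-hand side is therefore $p$ when $b_i=1$ and $1-p$ when $b_i=0$, as claimed.

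\emph{Second identity.} This follows from a symmetry argument using \Cref{claim:perms}. Fix nonzero $\sigma,\sigma' \in \Sigma = \FF_q^s$. Choose an invertible $\FF_q$-linear map $A$ on $\FF_q^s$ whose transpose satisfies $A^{\top}\sigma = \sigma'$; such a map exists because $\mathrm{GL}_s(\FF_q)$ acts transitively on nonzero vectors. Let $\pi(e) = A^{-1}e$ and set $H_i' := H_i \circ \pi$. Then $T' := (H_i')^{-1}(b_i) = A\,T$, and therefore
\[ \sum_{e\in T'} \omega_r^{\Tr(e\cdot\sigma)} = \sum_{e\in T}\omega_r^{\Tr(Ae\cdot\sigma)} = \sum_{e\in T}\omega_r^{\Tr(e\cdot A^{\top}\sigma)} = \sum_{e\in T}\omega_r^{\Tr(e\cdot\sigma')}. \]
Since $|T'| = |T|$, this gives $\widehat{W}_i^{H_i',b_i}(\sigma) = \widehat{W}_i^{H_i,b_i}(\sigma')$. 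By \Cref{claim:perms}, $H_i'$ has the same distribution as $H_i$. Taking expectations of $|\cdot|^2$ then yields the equality of the two expectations.

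The only step needing care is the dot-product identity $Ae\cdot\sigma = e\cdot A^{\top}\sigma$. This holds because the inner product used in the QFT is the standard $\FF_q$-bilinear form on $\FF_q^s$, and the identity does not depend on the trace. Everything else is a routine computation.
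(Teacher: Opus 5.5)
Your proposal is correct and follows the paper's proof: the first identity is the same computation $|\widehat{W}_i^{H_i,b_i}(0)|^2 = |T_i^{H_i,b_i}|/|\Sigma|$ followed by linearity of expectation, and the second uses a permutation of $\Sigma$ that intertwines the characters at $\sigma$ and $\sigma'$, together with the permutation invariance of $\Bias_{p,\Sigma}$ from \Cref{claim:perms}. The only cosmetic difference is in how that permutation is obtained: you take an explicit invertible linear map via transitivity of $\mathrm{GL}_s(\FF_q)$ on nonzero vectors, while the paper gets it from the fact that $z \mapsto \sigma\cdot z$ and $z \mapsto \sigma'\cdot z$ have fibers of equal size $|\Sigma|/q$.
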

\begin{proof}
    We can directly compute the expected Fourier weight on $0$ as follows:
    \begin{align*}
        \Exp_{H_i \gets \Bias_{p, \Sigma}}\left[\left|\widehat{W}_i^{H_i, b_i}(0)\right|^2\right] = \Exp_{H_i \gets \Bias_{p, \Sigma}}\left[\left|\frac{1}{\sqrt{|\Sigma|}} \sum_{z \in \Sigma } W_i^{H_i, b_i}(z)\right|^2\right] = \frac{\Exp_{H_i}\left[\left|T_i^{H_i, b_i}\right|\right]}{|\Sigma|} = \begin{cases} p & \text{if $b_i = 1$} \\ 1-p & \text{if $b_i = 0$} \\ \end{cases}\,.
    \end{align*} 
    Since $\sigma \neq 0$ (resp. $\sigma' \neq 0$), for any $w \in \FF_q$, the number of $z \in \Sigma$ such that $\sigma \cdot z = w$ is $|\Sigma|/q$. Therefore, there is a permutation $\pi_{e, e'} : \Sigma \to \Sigma$ such that $\sigma \cdot z = \sigma' \cdot \pi_{\sigma, \sigma'}(z)$ for all $z \in \Sigma$. Thus, by \Cref{claim:perms},
    \begin{align*}             
        \Exp_{H_i \gets \Bias_{p, \Sigma}}\left[\left|\widehat{W}_i^{H_i, b_i}(\sigma)\right|^2\right] &= \Exp_{H_i \gets \Bias_{p, \Sigma}}\left[\left|\frac{1}{\sqrt{|\Sigma|}} \sum_{z \in \Sigma} W_i^{H_i, b_i}(z) \cdot \omega_r^{\Tr(\sigma \cdot z)}\right|^2\right] \\
        &= \Exp_{H_i \gets \Bias_{p, \Sigma}}\left[\left|\frac{1}{\sqrt{|\Sigma|}} \sum_{z \in \Sigma} W_i^{H_i \circ \pi^{-1}_{\sigma, \sigma'}, b_i}(\pi_{\sigma, \sigma'}(z)) \cdot \omega_r^{\Tr(\sigma' \cdot \pi_{\sigma, \sigma'}(z))}\right|^2\right] \\
        &= \Exp_{H_i \gets \Bias_{p, \Sigma}}\left[\left|\frac{1}{\sqrt{|\Sigma|}} \sum_{z \in \Sigma } W_i^{H_i \circ \pi^{-1}_{\sigma, \sigma'}, b_i}(z) \cdot \omega_r^{\Tr(\sigma' \cdot z)}\right|^2\right] \\
        &= \Exp_{H_i \gets \Bias_{p, \Sigma}}\left[\left|\frac{1}{\sqrt{|\Sigma|}} \sum_{z \in \Sigma } W_i^{H_i, b_i}(z) \cdot \omega_r^{\Tr(\sigma' \cdot z)}\right|^2\right] \\
        &= \Exp_{H_i \gets \Bias_{p, \Sigma}}\left[\left|\widehat{W}_i^{H_i, b_i}(\sigma')\right|^2\right]\,. \qedhere
    \end{align*}
\end{proof}

\noindent
For any function $\Dec_{C^{\perp}}: \Sigma^n \to \Sigma^n$, define the sets $\Gg := \{ \ee \in \Sigma^n : \forall \bfu \in C^{\perp}, \Dec_{C^{\perp}}(\bfu + \ee) = \bfu \}$, $\Bb := \Sigma^n \setminus \Gg$, $\GOOD := C^{\perp} \times \Gg$ and $\BAD := (\Sigma^n \times \Sigma^n) \setminus \GOOD$. By construction, $\Dec_{C^{\perp}}(\bfu + \ee) = \bfu$ for all $(\bfu, \ee) \in \GOOD$. Applying the definition of $\widehat{V}$ from \Cref{lemma:dual_fourier}, we see that for all $\bb \in \{0, 1\}^n$ and functions $H$,
\begin{align*}
    \sum_{(\bfu, \ee) \in \BAD} \left|\widehat{V}(\bfu)\widehat{W}^{H, \bb}(\ee)\right|^2 = \sum_{\bfu \in C^{\perp}} \sum_{\ee \in \Bb} \left|\frac{1}{\sqrt{|C^{\perp}|}}\widehat{W}^{H, \bb}(\ee)\right|^2 = \sum_{\ee \in \Bb} \left|\widehat{W}^{H, \bb}(\ee)\right|^2\,. 
\end{align*}
We can apply the same logic to get that
\begin{equation*}
    \sum_{\z \in \Sigma^n}\left|\sum_{(\bfu, \ee) \in \BAD:\bfu+\ee = \z} \widehat{V}(\bfu)\cdot \widehat{W}^{H, \bb}(\ee)\right|^2 = \sum_{\z \in \Sigma^n}\left|\sum_{\substack{\bfu \in C^{\perp}, \ee \in \Bb: \\ \bfu+\ee = \z}} \widehat{V}(\bfu)\cdot \widehat{W}^{H, \bb}(\ee)\right|^2\,. 
\end{equation*}

\begin{definition}
    Let $\Dd_{p, b}$ be the distribution over $\Sigma$ that takes 0 with probability $1-p$ if $b = 0$ (resp. probability $p$ if $b = 1$) and otherwise takes a uniformly random element of $\Sigma \setminus \{0\}$. For any bitstring $\bb \in \{0, 1\}^n$, define the distribution $\Dd_{p, \bb}$ over $\Sigma^n$ to be the Cartesian product of the distributions $\Dd_{p, \bb_i}$.
\end{definition}

\begin{definition}
    Fix any $\FF_q$-linear code $C \subseteq \Sigma^n$, function $\Dec_{C^{\perp}}: \Sigma^n \to \Sigma^n$, set $S \subseteq \{0, 1\}^n$, $p \in [0, 1]$, and real-valued function $\mu: \NN \to \RR$. $(C, \Dec_{C^{\perp}})$ is said to be $(p, \mu, S)$-good if for all $\lambda \in \NN$ and $\bb \in S$, $\Pr_{\ee \gets \Dd_{p, \bb}}[\ee \in \Bb] \leq \mu(\lambda)$.
\end{definition}

\begin{lemma}\label{lemma:condition_one}
    Suppose that $(C, \Dec_{C^{\perp}})$ is $(p, \mu, S)$-good. Then, for all $\lambda \in \NN$ and $\bb \in S$,
    \begin{equation*}    
        \Exp_{H \gets \Bias_{n, p, \Sigma}}\left[\sum_{\ee \in \Bb} \left|\widehat{W}^{H, \bb}(\ee)\right|^2\right] \leq \mu(\lambda)\,.
    \end{equation*}
\end{lemma}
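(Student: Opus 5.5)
The plan is to show that the expected Fourier weight profile of $\widehat{W}^{H,\bb}$ is exactly the product distribution $\Dd_{p,\bb}$. The lemma then follows by linearity of expectation. Concretely, I will prove that for every $\ee \in \Sigma^n$,
\[ \Exp_{H \gets \Bias_{n,p,\Sigma}}\left[\left|\widehat{W}^{H,\bb}(\ee)\right|^2\right] = \Dd_{p,\bb}(\ee). \]
Summing over $\ee \in \Bb$ then gives
\[ \Exp_H\Big[\sum_{\ee \in \Bb} |\widehat{W}^{H,\bb}(\ee)|^2\Big] = \Pr_{\ee \gets \Dd_{p,\bb}}[\ee \in \Bb] \leq \mu(\lambda), \]
and the final inequality is exactly the $(p,\mu,S)$-goodness hypothesis, since $\bb \in S$.

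To establish the pointwise identity, I first use the pointwise-transform property of \Cref{lemma:fourier_properties}:
\[ \widehat{W}^{H,\bb}(\ee) = \prod_{i=1}^n \widehat{W}_i^{H_i,b_i}(e_i). \]
Under $\Bias_{n,p,\Sigma}$ the coordinate functions $H_1,\dots,H_n$ are independent, so the expectation of $\prod_i |\widehat{W}_i^{H_i,b_i}(e_i)|^2$ factors as $\prod_i \Exp_{H_i}[|\widehat{W}_i^{H_i,b_i}(e_i)|^2]$. It therefore suffices to show that the single-coordinate expected weight equals $\Dd_{p,b_i}(e_i)$.

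The single-coordinate identity comes from \Cref{claim:fourier_coeffs}. That claim gives the expected weight at $0$ as the correct value ($1-p$ or $p$, depending on $b_i$), and shows that the expected weight is the same at every nonzero $\sigma$. To pin down the common nonzero value, I use Parseval (\Cref{lemma:fourier_properties}). For each fixed $H_i$, the function $W_i^{H_i,b_i}$ has unit $\ell_2$ norm, so $\sum_{\sigma}|\widehat{W}_i^{H_i,b_i}(\sigma)|^2 = 1$. Taking expectations, the nonzero symbols share the remaining mass uniformly, so each receives $(1 - \text{mass at } 0)/(|\Sigma|-1)$, which is exactly $\Dd_{p,b_i}$.

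The one subtle point, and the step I expect to need care, is the degenerate event $T_i^{H_i,b_i} = \emptyset$. There $W_i^{H_i,b_i}$ is undefined, or zero, and Parseval's normalization fails. The same edge case matters in the computation of $\Exp[|\widehat{W}_i(0)|^2]$ in \Cref{claim:fourier_coeffs}. I would handle it by adopting the same convention the claim implicitly uses, or by noting that the total mass in this case is still consistent with the claimed expectations. Equivalently, I can read off the nonzero mass as $\Exp[\sum_\sigma |\widehat{W}_i(\sigma)|^2]$ minus the zero mass as given. Everything else is bookkeeping.
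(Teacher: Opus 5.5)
Your proposal is correct and follows essentially the same route as the paper: establish $\Exp_H[|\widehat{W}^{H,\bb}(\ee)|^2] = \Dd_{p,\bb}(\ee)$ from the product formula and \Cref{claim:fourier_coeffs}, then sum over the bad set by linearity of expectation and apply goodness; you also spell out the independence of the $H_i$ and the Parseval step, which the paper leaves implicit. On the degenerate event $T_i^{H_i,b_i}=\emptyset$, the definition already gives $W_i^{H_i,b_i}\equiv 0$, so the identity at nonzero symbols weakens to $\le$; this is all you need, because every factor is nonnegative and the lemma is an upper bound.
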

\begin{proof}
    By \Cref{claim:fourier_coeffs}, $\Dd_{p, b_i}(e_i) = \Exp_{H_i}[|\widehat{W}_i^{H_i, b_i}(e_i)|^2]$ for all $e_i \in \Sigma$ where (slightly abusing notation) $\Dd_{p, b_i}(\cdot)$ is the density function of the distribution $\Dd_{p, b_i}$. Moreover, for any $\ee = (e_1, \ldots, e_n) \in \Sigma^n$, string $\bb$ and function $H$, since $W^{H, \bb}(\ee) = \prod_{i = 1}^n W_i^{H_i, b_i}(e_i)$, by \Cref{lemma:fourier_properties}, we have that $\widehat{W}^{H, \bb}(\ee) = \prod_{i = 1}^n \widehat{W}_i^{H_i, b_i}(e_i)$. Thus, $\Dd_{p, \bb}(\ee) = \Exp_{H \gets \Bias_{n, p, \Sigma}}[|\widehat{W}^{H, \bb}(\ee)|^2]$ for all $\ee \in \Sigma^n$. By linearity of expectation, we see that for all $\lambda \in \NN$ and $\bb \in S$,
    \begin{equation*}
        \Exp_{H \gets \Bias_{n, p, \Sigma}}\left[\sum_{\ee \in \Bb} \left|\widehat{W}^{H, \bb}(\ee)\right|^2\right] = \sum_{\ee \in \Bb}\Exp_{H \gets \Bias_{n, p, \Sigma}}\left[\left|\widehat{W}^{H, \bb}(\ee)\right|^2\right] = \sum_{\ee \in \Bb} \Dd_{p, \bb}(\ee) = \Pr_{\ee \gets \Dd_{p, \bb}}[\ee \in \Bb] \leq \mu(\lambda)\,. \qedhere
    \end{equation*}
\end{proof}

We define the function $B : \Sigma^n \to \CC$ to be the inverse Fourier transform of the indicator function $\widehat{B}(\ee) = \Id_{\ee \in \Bb}$.

\begin{claim}\label{claim:small_fourier}
    Suppose that $(C, \Dec_{C^{\perp}})$ is $(p, \mu, S)$-good. Then for all $\lambda \in \NN$, $\bb \in S$, and $\z \in \Sigma^n$,
        \[ \Exp_{H \gets \Bias_{n, p, \Sigma}}\left[\left|(B \star W^{H, \bb})(\z)\right|^2\right] \leq \mu(\lambda). \]
\end{claim}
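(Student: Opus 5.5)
The plan is to show that the quantity $\Exp_{H}[|(B \star W^{H,\bb})(\z)|^2]$ does not depend on $\z$. Then I average over all $\z \in \Sigma^n$ and use Parseval together with the convolution theorem to reduce to \Cref{lemma:condition_one}.

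\textbf{Step 1 (independence of $\z$).} Fix $\z$ and write
\[ (B \star W^{H,\bb})(\z) = \sum_{\y} B(\y) \prod_{i=1}^n W_i^{H_i,b_i}(z_i - y_i). \]
For each $i$, let $\pi_i : \Sigma \to \Sigma$ be the bijection $u \mapsto z_i - u$. Since $T_i^{H_i,b_i}$ is the preimage of $b_i$ under $H_i$, we have $W_i^{H_i,b_i}(z_i - y_i) = W_i^{H_i \circ \pi_i, b_i}(y_i)$. Hence
\[ (B\star W^{H,\bb})(\z) = \sum_{\y} B(\y)\, W^{H \circ \pi, \bb}(\y), \]
where $H\circ\pi$ acts coordinatewise. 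By \Cref{claim:perms}, applied in each coordinate and using that $\Bias_{n,p,\Sigma}$ is a product distribution, $H \circ \pi$ has the same law as $H$. Therefore the random variable $|(B \star W^{H,\bb})(\z)|^2$ has a distribution that does not depend on $\z$. In particular its expectation equals
\[ \Exp_H\Big[\big|\textstyle\sum_{\y} B(\y)W^{H,\bb}(\y)\big|^2\Big]. \]
The only care needed here is that this reindexing is a permutation of each coordinate's alphabet, so it preserves both the product structure and the bias $p$.

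\textbf{Step 2 (averaging and Parseval).} By Step 1,
\[ \Exp_H[|(B\star W^{H,\bb})(\z)|^2] = |\Sigma|^{-n}\sum_{\z'}\Exp_H[|(B\star W^{H,\bb})(\z')|^2] = |\Sigma|^{-n}\,\Exp_H\big[\|B\star W^{H,\bb}\|_2^2\big]. \]
Parseval's equality (\Cref{lemma:fourier_properties}) converts the norm to $\|\widehat{B\star W^{H,\bb}}\|_2^2$. The convolution theorem gives $\widehat{B \star W^{H,\bb}} = |\Sigma|^{n/2}\,\widehat B\cdot \widehat W^{H,\bb} = |\Sigma|^{n/2}\,\Id_{\Bb}\cdot\widehat W^{H,\bb}$. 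The factor $|\Sigma|^{n}$ coming from squaring this exactly cancels the $|\Sigma|^{-n}$. We are left with
\[ \Exp_H\Big[\sum_{\ee\in\Bb}|\widehat W^{H,\bb}(\ee)|^2\Big], \]
which is at most $\mu(\lambda)$ by \Cref{lemma:condition_one}, since $(C,\Dec_{C^\perp})$ is $(p,\mu,S)$-good and $\bb\in S$.

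\textbf{Main obstacle.} The main subtlety is Step 1. A direct attempt that expands $|\sum_{\ee\in\Bb}\widehat W(\ee)\omega_r^{-\Tr(\ee\cdot\z)}|^2$ produces cross terms $\Exp[\widehat W(\ee)\overline{\widehat W(\ee')}]$ that are not obviously controlled. The translation-invariance argument avoids these cross terms entirely, and averaging over $\z$ then produces the clean Parseval identity. The rest is bookkeeping of the normalization factors $|\Sigma|^{n/2}$ in the paper's Fourier conventions.
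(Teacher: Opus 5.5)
Your proposal is correct and matches the paper's proof: show via a coordinatewise permutation of $\Sigma$ and \Cref{claim:perms} that the expectation does not depend on the evaluation point, then average over all points and apply Parseval, the convolution theorem, and \Cref{lemma:condition_one}. The only cosmetic difference is that you use the map $u \mapsto z_i - u$ to reduce directly to the expression $\sum_{\mathbf{y}} B(\mathbf{y}) W^{H,\mathbf{b}}(\mathbf{y})$, whereas the paper uses the translation $\mathbf{z} \mapsto \mathbf{z} + \mathbf{z}_0 - \mathbf{z}_1$ to compare two arbitrary points.
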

\begin{proof}
    For $\z_0, \z_1 \in \Sigma^n$, define the permutation $\pi_{\z_0, \z_1}: \Sigma^n \to \Sigma^n$ as $\pi_{\z_0, \z_1}(\z) := \z+\z_0-\z_1$. By \Cref{claim:perms},
    \begin{align*}
        \Exp_{H \gets \Bias_{n, p, \Sigma}}\left[\left|(B \star W^{H, \bb})(\z_0)\right|^2\right] &= \Exp_{H \gets \Bias_{n, p, \Sigma}}\left[\left|\sum_{\x \in \Sigma^n} B(\x) \cdot W^{H, \bb}(\z_0-\x)\right|^2\right] \\
        &= \Exp_{H \gets \Bias_{n, p, \Sigma}}\left[\left|\sum_{\x \in \Sigma^n} B(\x) \cdot W^{H \circ \pi_{\z_0, \z_1}, \bb}(\z_1-\x)\right|^2\right] \\
        &= \Exp_{H \gets \Bias_{n, p, \Sigma}}\left[\left|\sum_{\x \in \Sigma^n} B(\x) \cdot W^{H, \bb}(\z_1-\x)\right|^2\right] \\
        &= \Exp_{H \gets \Bias_{n, p, \Sigma}}\left[\left|(B \star W^{H, \bb})(\z_1)\right|^2\right]\,.
    \end{align*}
    Thus, by \Cref{lemma:condition_one}, we have that for all $\lambda \in \NN$, $\bb \in S$, and $\z \in \Sigma^n$,
    \begin{align*}
        \Exp_{H \gets \Bias_{n, p, \Sigma}}\left[\left|(B \star W^{H, \bb})(\z)\right|^2\right] &= \frac{1}{|\Sigma|^n} \Exp_{H \gets \Bias_{n, p, \Sigma}}\left[\sum_{\z \in \Sigma^n}\left|(B \star W^{H, \bb})(\z)\right|^2\right] \\
        &= \frac{1}{|\Sigma|^n} \Exp_{H \gets \Bias_{n, p, \Sigma}}\left[\sum_{\z \in \Sigma^n}\left||\Sigma|^{n/2}\widehat{B}(\z) \cdot \widehat{W}^{H, \bb}(\z)\right|^2\right] \\
        &= \Exp_{H \gets \Bias_{n, p, \Sigma}}\left[\sum_{\z \in \Sigma^n}\left|\widehat{B}(\z) \cdot \widehat{W}^{H, \bb}(\z)\right|^2\right] \\
        &= \Exp_{H \gets \Bias_{n, p, \Sigma}}\left[\sum_{\z \in \Bb}\left|\widehat{W}^{H, \bb}(\z)\right|^2\right] \leq \mu(\lambda)\,. \qedhere
    \end{align*}
\end{proof}

\begin{claim}\label{claim:fourier_manip}
For any function $H$ and string $\bb \in \{0, 1\}^n$, it holds that
    \[ \sum_{\z \in \Sigma^n}\left|\sum_{\substack{\bfu \in C^{\perp}, \ee \in \Bb: \\ \bfu+\ee = \z}} \widehat{V}(\bfu)\cdot \widehat{W}^{H, \bb}(\ee)\right|^2 = \sum_{\z \in \Sigma^n}\left|(V \cdot (B \star W^{H, \bb}))(\z)\right|^2. \]
\end{claim}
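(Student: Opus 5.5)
The plan is to recognize the inner sum on the left-hand side as a convolution in the Fourier domain, and then transport it back to the primal domain with Parseval's equality and the convolution theorem (\Cref{lemma:fourier_properties}). Since $\widehat{V}$ is supported on $C^{\perp}$ (\Cref{lemma:dual_fourier}) and $\widehat{B}(\ee) = \Id_{\ee \in \Bb}$, the restriction $\bfu \in C^{\perp}$ in the inner sum is automatic. The restriction $\ee \in \Bb$ can be absorbed by replacing $\widehat{W}^{H,\bb}(\ee)$ with $(\widehat{B}\cdot\widehat{W}^{H,\bb})(\ee)$. The inner sum therefore becomes
\[ \sum_{\bfu+\ee=\z} \widehat{V}(\bfu)\,(\widehat{B}\cdot \widehat{W}^{H,\bb})(\ee) = \bigl(\widehat{V} \star (\widehat{B}\cdot\widehat{W}^{H,\bb})\bigr)(\z). \]

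Next, I would apply the third identity of the convolution theorem, $\widehat{f \cdot (g \star h)} = \widehat{f} \star (\widehat{g}\cdot\widehat{h})$, with $f = V$, $g = B$, $h = W^{H,\bb}$. This gives $\widehat{V} \star (\widehat{B}\cdot\widehat{W}^{H,\bb}) = \widehat{V\cdot(B\star W^{H,\bb})}$. The left-hand side of the claim is then $\sum_{\z}|\widehat{V\cdot(B\star W^{H,\bb})}(\z)|^2$, and Parseval's equality converts this to $\sum_{\z}|(V\cdot(B\star W^{H,\bb}))(\z)|^2$, which is the claimed identity.

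The only step needing care is confirming that the normalization factors $|\Sigma|^{\pm n/2}$ in the stated convolution identities cancel exactly in the composite identity $\widehat{f\cdot(g\star h)} = \widehat{f}\star(\widehat{g}\cdot\widehat{h})$. Combining the first two parts of \Cref{lemma:fourier_properties}, we have $\widehat{g\star h} = |\Sigma|^{n/2}\,\widehat{g}\cdot\widehat{h}$, and the first identity then contributes a factor $|\Sigma|^{-n/2}$, so the factors cancel. A related point is that $\widehat{B} = \Id_{\Bb}$ follows from the definition of $B$ as an inverse Fourier transform, so no conjugation or sign issue enters as long as the forward and inverse transforms are used consistently. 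I expect this bookkeeping to be the main obstacle; the rest is direct substitution.
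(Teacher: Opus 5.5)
Your proposal is correct and follows the paper's proof: use that $\widehat{V}$ is supported on $C^{\perp}$ to drop the constraint $\bfu \in C^{\perp}$, absorb $\ee \in \Bb$ via $\widehat{B} = \Id_{\Bb}$, identify $\widehat{V}\star(\widehat{B}\cdot\widehat{W}^{H,\bb})$ with $\widehat{V\cdot(B\star W^{H,\bb})}$ by the third convolution identity, and finish with Parseval. Your normalization check is also right; note only that it combines the first two identities of the convolution-theorem item, not the first two items of the Fact.
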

\begin{proof}
    For any $\z \in \Sigma^n$, we use the fact that $V(\x) = 0$ for $\x \notin C^{\perp}$ to show that
    \begin{align*}
        \sum_{\substack{\bfu \in C^{\perp}, \ee \in \Bb: \\ \bfu+\ee = \z}} \widehat{V}(\bfu)\cdot \widehat{W}^{H, \bb}(\ee) = \sum_{\substack{\bfu \in \Sigma^n, \ee \in \Sigma^n: \\ \bfu+\ee = \z}} \widehat{V}(\bfu)\cdot \widehat{B}(\ee) \cdot \widehat{W}^{H, \bb}(\ee) = (\widehat{V} \star (\widehat{B} \cdot \widehat{W}^{H, \bb}))(\z) = (\widehat{V \cdot (B \star W^{H, \bb})})(\z)\,.
    \end{align*}
    The claim then follows from Parseval’s equality.
\end{proof}

\begin{corollary}\label{cor:condition_two}
Suppose that $(C, \Dec_{C^{\perp}})$ is $(p, \mu, S)$-good. Then for all $\lambda \in \NN$ and $\bb \in S$,
    \[ \Exp_{H \gets \Bias_{n, p, \Sigma}}\left[\sum_{\z \in \Sigma^n}\left|\sum_{\substack{\bfu \in C^{\perp}, \ee \in \Bb: \\ \bfu+\ee = \z}} \widehat{V}(\bfu)\cdot \widehat{W}^{H, \bb}(\ee)\right|^2\right] \leq \mu(\lambda). \]
\end{corollary}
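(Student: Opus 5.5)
We start from \Cref{claim:fourier_manip}: it rewrites the quantity inside the expectation as $\sum_{\z \in \Sigma^n} |(V \cdot (B \star W^{H, \bb}))(\z)|^2$. Because $V$ is supported on $C$ and equals $1/\sqrt{|C|}$ there, this sum becomes
\[ \sum_{\z \in \Sigma^n} |V(\z)|^2 \, |(B \star W^{H, \bb})(\z)|^2 = \frac{1}{|C|} \sum_{\z \in C} |(B \star W^{H, \bb})(\z)|^2. \]
We then take the expectation over $H \gets \Bias_{n, p, \Sigma}$. The weights $|V(\z)|^2$ do not depend on $H$, so linearity of expectation applies termwise:
\[ \Exp_H\left[\sum_{\z} |V(\z)|^2 \, |(B \star W^{H, \bb})(\z)|^2\right] = \sum_{\z} |V(\z)|^2 \, \Exp_H\left[|(B \star W^{H, \bb})(\z)|^2\right]. \]

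Next we use \Cref{claim:small_fourier}. Since $(C, \Dec_{C^{\perp}})$ is $(p, \mu, S)$-good and $\bb \in S$, that claim gives $\Exp_H[|(B \star W^{H, \bb})(\z)|^2] \leq \mu(\lambda)$ for every $\z$. The total is therefore at most $\mu(\lambda) \sum_{\z} |V(\z)|^2 = \mu(\lambda)$, because $\sum_{\z} |V(\z)|^2 = |C| \cdot \frac{1}{|C|} = 1$.

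There is one point to check carefully. In \Cref{claim:fourier_manip}, the summation constraint $\bfu \in C^{\perp}$ and the fact that $\widehat{V}$ is the normalized indicator of $C^{\perp}$ (\Cref{lemma:dual_fourier}) must match the normalization convention used for $V$. Given that, the corollary follows directly from the two claims plus linearity of expectation. No step here is a serious obstacle; the real work was already done in \Cref{claim:small_fourier}, where translation invariance of $\Bias_{n, p, \Sigma}$ (\Cref{claim:perms}) makes the pointwise expectation uniform in $\z$.
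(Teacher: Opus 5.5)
Your proof is correct and is essentially the paper's own argument: rewrite via \Cref{claim:fourier_manip}, use that $V$ is the normalized indicator of $C$ to get $\frac{1}{|C|}\sum_{\z \in C} |(B \star W^{H,\bb})(\z)|^2$, exchange expectation and sum, and apply the pointwise bound of \Cref{claim:small_fourier}. The normalization point you flag does check out; the paper's proof of \Cref{claim:fourier_manip} writes ``$V(\x) = 0$ for $\x \notin C^{\perp}$'' where it means $\widehat{V}$, which is exactly the match you describe.
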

\begin{proof}
By \Cref{claim:small_fourier,claim:fourier_manip}, we have that for all $\lambda \in \NN$ and $\bb \in S$,
\begin{align*}
    \Exp_{H \gets \Bias_{n, p, \Sigma}}\left[\sum_{\z \in \Sigma^n}\left|\sum_{\substack{\bfu \in C^{\perp}, \ee \in \Bb: \\ \bfu+\ee = \z}} \widehat{V}(\bfu)\cdot \widehat{W}^{H, \bb}(\ee)\right|^2\right] &= \Exp_{H \gets \Bias_{n, p, \Sigma}}\left[\sum_{\z \in \Sigma^n}\left|(V \cdot (B \star W^{H, \bb}))(\z)\right|^2\right] \\
    &= \Exp_{H \gets \Bias_{n, p, \Sigma}}\left[\sum_{\z \in C} \frac{1}{|C|}\left|(B \star W^{H, \bb})(\z)\right|^2\right] \\
    &= \frac{1}{|C|} \sum_{\z \in C} \Exp_{H \gets \Bias_{n, p, \Sigma}}\left[\left|(B \star W^{H, \bb})(\z)\right|^2\right] \\
    &\leq \frac{1}{|C|} \sum_{\z \in C} \mu(\lambda) = \mu(\lambda)\,. \qedhere
\end{align*}
\end{proof}

\subsection{The Biased Yamakawa-Zhandry Algorithm}\label{subsec:biased_yz}
We are now ready to present our modified algorithm for handling biased oracles.
\begin{definition}[The Yamakawa-Zhandry advice state]\label{def:advice_state}
    For a code $C \subseteq \Sigma^n$ and function $H: [n] \times \Sigma \to \{0, 1\}$, define the sets $S_{i, b} := \{e \in \Sigma: H(i, e) = b\}$ for $(i, b) \in [n] \times \{0, 1\}$. Let $\ket{\phi_{i, b}}$ and $\ket{\psi}$ denote the following states:
    \begin{align*}
        \ket{\phi_{i, b}} = \begin{cases}
            \frac{1}{\sqrt{|S_{i, b}|}} \sum_{e \in S_{i, b}} \ket{e} & \text{if $|S_{i, b}| \neq 0$} \\
            \ket{\bot} & \text{otherwise}
        \end{cases} \quad \text{and} \quad \ket{\psi} = \frac{1}{\sqrt{|C|}} \sum_{\bfu \in C} \ket{\bfu}\,.
    \end{align*}
    We define the \emph{advice state} for $(C, H)$, denoted by $\ket{\adv_{C, H}}$ (or $\ket{\adv_H}$ when the code $C$ is implicit) as follows:
    \begin{align*}
        \ket{\adv_H} := \begin{cases}
            (\bigotimes_{i = 1}^{n} \ket{\phi_{i, 0}} \otimes \ket{\phi_{i, 1}}) \otimes \ket{\psi} & \text{if $\ket{\phi_{i, b}} \neq \ket{\bot}$ for all $(i, b) \in [n] \times \{0, 1\}$,} \\
            \ket{\bot} & \text{otherwise.}
        \end{cases}
    \end{align*}
\end{definition}
\begin{theorem}\label{thm:biased_yz}
    Fix any $\FF_q$-linear code $C \subseteq \Sigma^n = (\FF_q^s)^n$, function $\Dec_{C^{\perp}}: \Sigma^n \to \Sigma^n$, set $S \subseteq \{0, 1\}^n$, $p \in [0, 1/2]$, and function $\mu: \NN \to [0, 1]$ such that $(C, \Dec_{C^{\perp}})$ is $(p, \mu, S)$-good. Let $\BiasedYZ = \BiasedYZ_{\Dec_{C^{\perp}}}$ be the quantum algorithm described in \Cref{fig:yz_alg} and $\ket{\adv_H}$ be the state described in Definition \ref{def:advice_state}. Then, for all $\lambda \in \NN$:
    \begin{enumerate}
        \item $\BiasedYZ$ runs in time $\poly(n, s, \log q, \lambda)+T_{\Dec}$, where $T_{\Dec}$ is the time required to compute $\Dec_{C^{\perp}}$.
        \item For all $C$ and $H$, $\ket{\adv_H}$ is a $O(sn\log q)$-qubit state.
        \item For all strings $\bb \in S$,
        \begin{multline*}\Pr_{H \gets \Bias_{n, p, \Sigma}}\left[\Pr_{\BiasedYZ}[(\bb, \bfv) \in R_{C, H}: \bfv \leftarrow \BiasedYZ(\ket{\adv_H}, \bb)] \geq 1-16\mu(\lambda)^{1/4}-2^{-4\lambda} \right] \\\geq 1-(2n(1-p)^{|\Sigma|}+2\mu(\lambda)^{1/2})\,. \end{multline*}
    \end{enumerate}
    \begin{figure}[ht]
    \begin{mdframed}
        \textbf{$\BiasedYZ_{\Dec_{C^{\perp}}}(\ket{\adv_H}, \bb)$:}
        \begin{enumerate}
            \item If $\ket{\adv_H} \neq \ket{\bot}$, construct the state $(\bigotimes_{i = 1}^n \ket*{\phi_{i, \bb_i}})_{\reg{A}} \otimes \ket{\psi}_{\reg{B}}$ by re-arranging $\ket{\adv_H}$; else, return $\bot$.
            \item Apply $(\mathsf{QFT}_{\Sigma}^{\otimes n})_{\reg{A}} \otimes (\mathsf{QFT}_{\Sigma}^{\otimes n})_{\reg{B}}$.
            \item Controlled on register $\reg{B}$, add the value of register $\reg{B}$ to register $\reg{A}$.
            \item Apply $\Dec_{C^{\perp}}$ to uncompute register $\reg{B}$ given the value of register $\reg{A}$.  
            \item Apply $(\mathsf{QFT}_{\Sigma}^{-1, \otimes n})_{\reg{A}} \otimes \id_{\reg{B}}$ and measure register $\reg{A}$ to get an outcome $\bfv \in \Sigma^n$. Output $\bfv$.
        \end{enumerate}
    \end{mdframed}
    \caption{Biased Yamakawa-Zhandry Algorithm $\BiasedYZ_{\Dec_{C^{\perp}}}(\ket{\adv_H}, \bb)$}
    \label{fig:yz_alg}
    \end{figure}
\end{theorem}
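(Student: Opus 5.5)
Items 1 and 2 are routine bookkeeping, and the substance is item 3. For item 1, the algorithm applies $2n$ copies of $\QFT_\Sigma = \QFT_q^{\otimes s}$, each approximable to error $2^{-\Omega(\lambda)}$ in time $\poly(\log q, \lambda)$ \cite{CW02,vDHI06}; it also performs a register-wise addition over $\Sigma^n$ costing $\poly(n,s,\log q)$, a rearrangement of registers, and one call to $\Dec_{C^\perp}$ (run reversibly to uncompute $\reg{B}$), which costs $T_{\Dec}$. I will choose the QFT precision so that the accumulated approximation error is at most $2^{-4\lambda}$ in trace distance; this is where the additive $2^{-4\lambda}$ in item 3 comes from. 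For item 2, $\ket{\adv_H}$ consists of $2n$ registers over $\Sigma$ plus one register over $\Sigma^n$ (plus a flag for $\bot$). That is $O(n\cdot s\log q)$ qubits.

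For item 3, fix $\bb\in S$. I first rule out $\ket{\adv_H}=\ket\bot$. This happens only if some $S_{i,b}$ is empty. Under $\Bias_{n,p,\Sigma}$ with $p\le 1/2$, each $S_{i,b}$ is empty with probability at most $(1-p)^{|\Sigma|}$, so a union bound over the $2n$ pairs gives the $2n(1-p)^{|\Sigma|}$ term.

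Otherwise, step 1 produces $\ket\phi\ket\psi$ with $\ket\phi=\sum W^{H,\bb}(\ee)\ket\ee$ and $\ket\psi=\sum V(\bfu)\ket\bfu$. I apply \Cref{lemma:yz_regev} with $F=\Dec_{C^\perp}$ and the sets $\GOOD=C^\perp\times\Gg$, $\BAD$ defined before \Cref{def:advice_state}. Set $\eps_H:=\sum_{\ee\in\Bb}|\widehat W^{H,\bb}(\ee)|^2$ and let $\delta_H$ be the second (convolution) quantity. By \Cref{lemma:condition_one} and \Cref{cor:condition_two}, $\Exp_H[\eps_H]\le\mu$ and $\Exp_H[\delta_H]\le\mu$. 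Markov's inequality then gives $\Pr_H[\eps_H>\mu^{1/2}]\le\mu^{1/2}$, and likewise for $\delta_H$. This accounts for the $2\mu^{1/2}$ failure term.

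On the complementary event, \Cref{cor:measuring_distance} applies with property $\Pp(\z)=\Id[(\bb,\z)\in R_{C,H}]$. The ideal state $|\Sigma|^{n/2}\sum_\z V(\z)W^{H,\bb}(\z)\ket\z$ is supported exactly on codewords $\z\in C$ with $H(\z)=\bb$. So I need its norm to be nonzero; given that, measuring its normalization satisfies $\Pp$ with probability $p=1$.

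Nonvanishing is the main obstacle, and I would handle it through the norm bound rather than combinatorially. \Cref{lemma:yz_regev} says the actual (unit-norm) output is within $\sqrt{\eps_H}+\sqrt{\delta_H}\le 2\mu^{1/4}$ of the ideal vector. By the reverse triangle inequality, the ideal vector therefore has norm at least $1-2\mu^{1/4}$, which is positive whenever $\mu^{1/4}<1/2$. If $\mu^{1/4}\ge 1/16$, the claimed bound $1-16\mu^{1/4}-2^{-4\lambda}$ is at most $0$ and holds trivially, so this case causes no trouble.

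With nonvanishing in hand, \Cref{cor:measuring_distance} gives success probability at least $1-8(\sqrt{\eps_H}+\sqrt{\delta_H})\ge 1-16\mu^{1/4}$ for the exact algorithm. Subtracting the $2^{-4\lambda}$ QFT-approximation error gives the stated bound. The overall failure probability over $H$ is then $2n(1-p)^{|\Sigma|}+2\mu^{1/2}$, as claimed.
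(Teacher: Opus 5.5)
Your proposal is correct and follows essentially the same route as the paper. The steps match: a union bound over empty preimage sets, Markov's inequality on the two error quantities from \Cref{lemma:condition_one} and \Cref{cor:condition_two}, then \Cref{lemma:yz_regev} and \Cref{cor:measuring_distance}, with the QFT approximation error added at the end. Your explicit reverse-triangle-inequality argument for the nonvanishing of the target state just spells out what the paper asserts from $2\mu(\lambda)^{1/4}<1$.
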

\begin{proof}
    Throughout this proof, we assume that $2\mu(\lambda)^{1/4} < 1$, since the theorem holds trivially otherwise. Upon inspection, we see that $\ket{\adv_H}$ is a $O(2n \log(|\Sigma|) + \log(|\Sigma|^n)) = O(sn\log q)$-qubit state.
    
    We now analyze the runtime of $\BiasedYZ$. First, observe that step 1 runs in time $O(sn\log q)$. To implement steps 2 and 5, note that $\QFT^{\otimes n}_{\Sigma}$ (and its inverse) can be implemented with total error at most $2^{-4\lambda}/4$ in $\poly(n, s, \log q, \log 2^{4\lambda}) = \poly(n, s, \log q, \lambda)$ time. Finally, step 3 consists of adding in $\Sigma$, which can be done in $\poly(n, s, \log q)$ time, and step 4 takes time $T_{\Dec}$ by definition. In total, we have that $\BiasedYZ$ runs in time $O(sn \log q)+\poly(n, s, \log q, \lambda)+\poly(n, s, \log q)+T_{\Dec} = \poly(n, s, \log q, \lambda)+T_{\Dec}$.

    We finish by analyzing the correctness of $\BiasedYZ$. For fixed $(i, b) \in [n] \times \{0, 1\}$, $\ket{\phi_{i, b}} \neq \ket{\bot}$ with probability at least $1-(1-p)^{|\Sigma|}$ (as long as at least one symbol hashes to $b$ under $H(i, \cdot)$). By a union bound, $\ket{\adv_H} \neq \ket{\bot}$ with probability at least $1-2n(1-p)^{|\Sigma|}$. Thus, using the definition of $V$ and $W$ from Section \ref{subsec:technical_lemmas}, we will assume for the remainder of the proof that $\ket{\adv_H} = \sum_{\bfu, \ee \in \Sigma^n} V(\bfu) W(\ee) \ket{\bfu} \ket{\ee}$. Then, by \Cref{lemma:condition_one} and \Cref{cor:condition_two}, we have that for all $\bb \in S$,
    \begin{equation*}
        \Exp_{H \gets \Bias_{n, p, \Sigma}}\left[\sum_{(\bfu, \ee) \in \BAD} \left|\widehat{V}(\bfu)\cdot \widehat{W}^{H, \bb}(\ee)\right|^2\right] = \Exp_{H \gets \Bias_{n, p, \Sigma}}\left[\sum_{\ee \in \Bb} \left|\widehat{W}^{H, \bb}(\ee)\right|^2\right] \leq \mu(\lambda)\, 
    \end{equation*}
    and 
    \begin{align*} 
        \Exp_{H \gets \Bias_{n, p, \Sigma}}\left[\sum_{\z \in \Sigma^n}\left|\sum_{\substack{(\bfu, \ee) \in \BAD: \\ \bfu+\ee = \z}} \widehat{V}(\bfu)\cdot \widehat{W}^{H, \bb}(\ee)\right|^2\right] = \Exp_{H \gets \Bias_{n, p, \Sigma}}\left[\sum_{\z \in \Sigma^n}\left|\sum_{\substack{\bfu \in C^{\perp}, \ee \in \Bb: \\ \bfu+\ee = \z}} \widehat{V}(\bfu)\cdot \widehat{W}^{H, \bb}(\ee)\right|^2\right] \leq \mu(\lambda)\,.
    \end{align*}
    Fixing $\bb \in S$, Markov's inequality and the union bound implies that $H$ satisfies both
    \begin{align*}
        \sum_{(\bfu, \ee) \in \BAD} \left|\widehat{V}(\bfu)\cdot \widehat{W}^{H, \bb}(\ee)\right|^2 \leq \mu(\lambda)^{1/2} \quad \text{and} \quad \sum_{\z \in \Sigma^n}\left|\sum_{(\bfu, \ee) \in \BAD:\bfu+\ee = \z} \widehat{V}(\bfu)\cdot \widehat{W}^{H, \bb}(\ee)\right|^2 \leq \mu(\lambda)^{1/2}\,
    \end{align*}
    with probability at least $1-2\mu(\lambda)^{1/2}$ over $\Bias_{n, p, \Sigma}$. For these $H$, by \Cref{lemma:yz_regev},
    \begin{multline*}
        \bigg\lVert\left(\QFT^{-1, \otimes n}_{\Sigma} \otimes \mathrm{id}\right) \cdot U_{\Dec_{C_{\lambda}^{\perp}}} \cdot U_{\mathsf{add}} \cdot \left(\QFT^{\otimes n}_{\Sigma} \otimes \QFT^{\otimes n}_{\Sigma}\right) \bigotimes_{i = 1}^n \ket*{\phi_{i, \bb_i}} \otimes \ket{\psi}
        \\- |\Sigma|^{n/2} \sum_{\z \in \Sigma^n} (V(\z) \cdot W^{H, \bb}(\z)) \ket{\z} \ket{0}\bigg\rVert \leq 2\mu(\lambda)^{1/4}\,.
    \end{multline*}
    Since $\ket{\adv_H} \neq \ket{\bot}$ and $2\mu(\lambda)^{1/4} < 1$, the state $\ket{\mathsf{tgt}} = |\Sigma|^{n/2} \sum_{\z \in \Sigma^n} (V(\z) \cdot W^{H, \bb}(\z)) \ket{\z}$ is nonzero. But measuring (the normalization of) $\ket{\mathsf{tgt}}$ in the standard basis always produces vectors $\bfv$ such that $(\bb, \bfv) \in R_{C, H}$. Thus, by \Cref{cor:measuring_distance}, the output of $\BiasedYZ$ will be a vector $\bfv$ such that $(\bb, \bfv) \in R_{C_{\lambda}, H}$ with probability at least $1-16\mu(\lambda)^{1/4}-2^{-4\lambda}$ (where the $2^{-4\lambda}$ term comes from approximating the QFT).
    
    We conclude that with probability $1-(2n(1-p)^{|\Sigma|}+2\mu(\lambda)^{1/2})$ over $\Bias_{n, p, \Sigma}$, $\BiasedYZ$ succeeds with probability at least $1-16\mu(\lambda)^{1/4}-2^{-4\lambda}$ for any given $\bb \in S$, as desired.
\end{proof}

Finally, we show that our choice of code satisfies the required conditions of \Cref{thm:biased_yz}.

\begin{corollary}
    For each security parameter $\lambda \in \NN$, define the code $C_{\lambda} = \Mult_{s, \FF_q, k}$, where $\lambda^5 < q \leq 2\lambda^5$ is a prime, $k = \lambda^3$, and $s = \lambda$. Then, there exists an efficient/uniform quantum algorithm $\BiasYZ$ and a family of $\poly(\lambda)$-qubit states $\{\ket{\adv_H}\}_H$ such that the following holds for sufficiently large $\lambda$: for all strings $x \in \{0, 1\}^{\lambda}$,
        \[ \Pr_{H \gets \Bias_{q, 1/\lambda^4, \FF_q^s}}\left[\Pr[(x \| 0^{q-\lambda}, \bfv) \in R_{C_{\lambda}, H}: \bfv \leftarrow \BiasYZ(\ket{\adv_H}, x)] \geq 1-2^{-\lambda} \right] \geq 1-2^{-2\lambda}. \]
\end{corollary}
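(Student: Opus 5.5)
The plan is to apply \Cref{thm:biased_yz} with $C = C_\lambda = \Mult_{\lambda,\FF_q,\lambda^3}$ (so $n = q$, $\Sigma = \FF_q^s$, $p = 1/\lambda^4$) and $S = \{x \| 0^{q-\lambda} : x \in \{0,1\}^\lambda\}$. For $\Dec_{C^\perp}$ we take the (possibly inefficient) minimum-distance decoder for $C_\lambda^\perp$. Then $\BiasYZ(\ket{\adv_H}, x) := \BiasedYZ_{\Dec_{C^\perp}}(\ket{\adv_H}, x\|0^{q-\lambda})$.

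The claim to establish is that $(C_\lambda,\Dec_{C^\perp})$ is $(p,\mu,S)$-good with $\mu(\lambda)$ tiny, say $2^{-10\lambda}$. By \Cref{thm:mult_dual}, $C_\lambda^\perp$ has distance at least $(k+1)/s > \lambda^2$. A minimum-distance decoder therefore corrects every error of Hamming weight below $\lambda^2/2$, for every $\bfu \in C^\perp$. Hence $\Bb \subseteq \{\ee : \hw(\ee) \ge \lambda^2/2\}$.

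Under $\Dd_{p,\bb}$ with $\bb = x\|0^{q-\lambda}$, the coordinates are independent. At most $\lambda$ of them (those with $b_i=1$) can be nonzero with probability as large as $1-p$. The remaining $q-\lambda \le 2\lambda^5$ coordinates are each nonzero with probability $p = \lambda^{-4}$, so the expected weight from them is at most $2\lambda$. The total weight is thus at most $\lambda$ plus a sum of independent Bernoullis with mean $\mu_0 \le 2\lambda$. By \Cref{lemma:chernoff} with $(1+\delta)\mu_0 = \lambda^2/2 - \lambda$, so $\delta = \Theta(\lambda)$, the tail bound is $e^{-\Theta(\lambda^2)}$. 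This gives $\mu(\lambda) \le e^{-\Omega(\lambda^2)}$, which is below $2^{-10\lambda}$ for large $\lambda$. If the expectation is smaller than $2\lambda$, we can stochastically dominate by Bernoullis of mean exactly $2\lambda/(q-\lambda)$, or apply the bound with $\mu_0$ replaced by an upper bound.

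Finally we plug into \Cref{thm:biased_yz}. The theorem requires $p \le 1/2$, which holds. The advice has $O(sq\log q) = \poly(\lambda)$ qubits. The runtime is uniform in the sense stated: polynomial, or polynomial plus decoder time, which is acceptable since the paper allows statistical decoding via a decoding oracle; if efficiency is needed, one could instead cite an efficient unique decoder. The inner success probability is at least $1 - 16\mu^{1/4} - 2^{-4\lambda} \ge 1 - 2^{-\lambda}$, using $\mu^{1/4} \le 2^{-2.5\lambda}$. The outer probability is at least $1 - 2q(1-p)^{|\Sigma|} - 2\mu^{1/2}$. Here $|\Sigma| = q^\lambda \ge \lambda^{5\lambda}$, so $(1-\lambda^{-4})^{q^\lambda} \le \exp(-\lambda^{5\lambda-4})$ is doubly negligible, and $2\mu^{1/2} \le 2\cdot 2^{-5\lambda}$. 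The sum is therefore below $2^{-2\lambda}$ for large $\lambda$.

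The main obstacle is the Chernoff step: checking that the dual distance $\lambda^2$ comfortably exceeds the typical error weight of $O(\lambda)$ at bias $\lambda^{-4}$ with $q\approx\lambda^5$. The remaining steps are bookkeeping.
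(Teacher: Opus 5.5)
Your goodness argument is correct and follows the paper. It uses unique decoding below half the dual distance $(k+1)/s>\lambda^2$ from \Cref{thm:mult_dual}. It then applies a Chernoff bound showing that under $\Dd_{1/\lambda^4,\,x\|0^{q-\lambda}}$ the error has weight $O(\lambda)$ except with probability $2^{-\Omega(\lambda)}$; the paper cuts off at $11(q-\lambda)/\lambda^4$ on the last $q-\lambda$ coordinates, giving total weight at most $25\lambda$. The bookkeeping through \Cref{thm:biased_yz} is also the same.

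The gap is the word ``efficient'' in the statement. Item~1 of \Cref{thm:biased_yz} only bounds the running time by $\poly(\lambda)+T_{\Dec}$. You give no polynomial-time implementation of the minimum-distance decoder for $C_\lambda^\perp$: brute force over codewords, or even over error supports of size $O(\lambda)$ among $q\approx\lambda^5$ positions, takes time $2^{\Omega(\lambda\log\lambda)}$.

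The appeal to a ``decoding oracle'' does not rescue this. The corollary asserts an efficient algorithm, and downstream \Cref{lemma:qma_easiness} needs a polynomial-time verifier whose only oracle is $O[H,E]$, which provides no decoder. Nor can you simply ``cite an efficient unique decoder'', since $C_\lambda^\perp$ is not itself a multiplicity code.

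The missing ingredient is the structural duality of \Cref{thm:mult_duality}: $C_\lambda^\perp=\mathsf{GM}_{s,\FF_q}(U_1,\ldots,U_q;1,\ldots,q;sq-k)$. Each $U_i$ is an invertible anti-triangular matrix whose entries are the top coefficients $a_{i,j}$ of the Hermite interpolation polynomials $A_{i,j}$ from \Cref{lemma:hermite_interpolation}, all computable in $\poly(s,q)=\poly(\lambda)$ time.

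Given $\bfu+\ee$ with $\bfu\in C_\lambda^\perp$, apply $U_i^{-1}$ coordinatewise. This yields a codeword of $\Mult_{s,\FF_q,sq-k}$ plus an error with exactly the same support as $\ee$, because each $U_i$ is invertible. A known deterministic $\poly(s,q)$-time unique decoder for univariate multiplicity codes corrects this error, since that code has distance at least $(k+1)/s>\lambda^2\geq 50\lambda$ for $\lambda\geq 50$. Re-applying the $U_i$ then recovers $\bfu$.

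With this $\Dec_{C^\perp}$, every error of weight at most $25\lambda$ lies in $\Gg$. Your Chernoff estimate, run at a cutoff of order $\lambda$ (e.g.\ the paper's) rather than $\lambda^2/2$, gives $\mu(\lambda)\le 2^{-8\lambda}$, and the rest of your proposal goes through unchanged.
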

\begin{proof}
We begin by arguing that there exists a deterministic algorithm $\Dec_{C_{\lambda}^{\perp}}$ and function $\mu$ that for sufficiently large $\lambda$ satisfies $\mu(\lambda) \leq 2^{-8\lambda}$, such that $T_{\Dec} = \poly(\lambda)$ and $(C_{\lambda}, \Dec_{C_{\lambda}^{\perp}})$ is $\left(\frac{1}{\lambda^4}, \mu, \{0, 1\}^{\lambda} \times 0^{q-\lambda}\right)$-good. Define the subset $\Gg := \left\{\ee \in \Sigma^q: \hw(\ee_{[\lambda+1:q]}) \leq \frac{11}{\lambda^4}(q-\lambda)\right\}$. By the Chernoff bound (\Cref{lemma:chernoff}), for any $x \in \{0, 1\}^{\lambda}$,
    \begin{align*} 
        \Pr_{\ee \gets \Dd_{1/\lambda^4, x \| 0^{q-\lambda}}}[\ee \notin \Gg] = \Pr_{\ee \gets \Dd_{1/\lambda^4, x \| 0^{q-\lambda}}}\left[\hw(\ee_{[\lambda+1:q]}) > 11 \cdot \frac{(q-\lambda)}{\lambda^4}\right] \leq e^{-\frac{100\lambda}{12}+10} \leq 2^{-8\lambda}\,,
    \end{align*}
assuming $\lambda \geq 10$, so it suffices to consider decoding errors $\ee \in \Gg$. \Cref{thm:mult_dual} implies that for all $\ee \in \Gg$,
\begin{align*}
    \hw(\ee) = \hw(\ee_{[1:\lambda]}) + \hw(\ee_{[\lambda+1:q]}) < \lambda + \frac{11}{\lambda^4}(q-\lambda) \leq 25\lambda \leq \lambda^2/2 \leq \dist(C_{\lambda}^{\perp})/2\,,
\end{align*}
assuming $\lambda \geq 50$. It therefore suffices to \emph{uniquely} decode $C_{\lambda}^{\perp}$ in a deterministic and efficient manner.  We set $\mu$ to be the maximum probability (across all $x$) that unique decoding for $C_{\lambda}^{\perp}$ fails with error distribution $\mathcal{D}_{1/\lambda^4, x\|0^{q - \lambda}}$; by our previous argument, $\mu(\lambda) \leq 2^{-8\lambda}$ as long as $\lambda \geq 50$.
    
By \Cref{thm:mult_duality}, we know that $C_{\lambda}^{\perp} = \mathsf{GM}_{s, \FF_q}(U_1, \ldots, U_q; 1, \ldots, q; sq-k)$. From \Cref{lemma:hermite_interpolation}, it is easy to see that $A_{i, j}(X)$ can be computed in $\poly(s, q) = \poly(\lambda)$ time, and thus $a_{i, j}$ (and consequently $U_i$ and $U_i^{-1}$) can be computed in $\poly(\lambda)$ time. Finally, it remains to efficiently uniquely decode $\Mult_{s, \FF_q, sq-k}$, which we can do deterministically in $\poly(s, q) = \poly(\lambda)$ time \cite{Nie01, KSY14, Kop15}.\footnote{In fact, a simple extension of the Berlekamp-Welch algorithm \cite{WB86} gives efficient unique decoding for univariate multiplicity codes.}

Applying \Cref{thm:biased_yz} gives a family of $\poly(\lambda)$-qubit states $\{\ket{\adv_H}\}_H$ and a $\poly(\lambda)$-time uniform algorithm $\Bb$ such that for $\lambda \geq 50$ and all strings $x \in \{0, 1\}^{\lambda}$,
\begin{align*}
    \Pr_{H \gets \Bias_{q, 1/\lambda^4, \Sigma}}\left[\Pr[(x \| 0^{q-\lambda}, \bfv) \in R_{C_{\lambda}, H}: \bfv \leftarrow \Bb(\ket{\adv_H}, x \| 0^{q-\lambda})] \geq 1-2^{-2\lambda+4}-2^{-4\lambda} \right] &\geq 1-2q(1-p)^{|\Sigma|}-2^{-4\lambda+1} \\
    \implies \Pr_{H \gets \Bias_{q, 1/\lambda^4, \FF_q^s}}\left[\Pr[(x \| 0^{q-\lambda}, \bfv) \in R_{C_{\lambda}, H}: \bfv \leftarrow \Bb(\ket{\adv_H}, x \| 0^{q-\lambda})] \geq 1-2^{-\lambda} \right] &\geq 1-2^{-2\lambda}.
\end{align*}
The algorithm $\BiasYZ$ simply runs $\Bb(\ket{\adv_H}, x \| 0^{q-\lambda})$ given advice $\ket{\adv_H}$ and input $x \in \{0, 1\}^{\lambda}$.
\end{proof}

We therefore have the following corollary by a simple union bound over all $x \in \{0, 1\}^{\lambda}$.
\begin{corollary}\label{cor:worst_case_yz}
    For each security parameter $\lambda \in \NN$, define the code $C_{\lambda} = \Mult_{s, \FF_q, k}$, where $\lambda^5 < q \leq 2\lambda^5$ is a prime, $k = \lambda^3$, and $s = \lambda$. Then, there exists an efficient/uniform quantum algorithm $\BiasYZ$ and a family of $\poly(\lambda)$-qubit states $\{\ket{\adv_H}\}_H$ such that for sufficiently large $\lambda$,
    \begin{equation*}
        \Pr_{H \gets \Bias_{q, 1/\lambda^4, \FF_q^s}}\left[\forall x \in \{0, 1\}^{\lambda}, \Pr[(x \| 0^{q-\lambda}, \bfv) \in R_{C_{\lambda}, H}: \bfv \leftarrow \BiasYZ(\ket{\adv_H}, x)] \geq 1-2^{-\lambda} \right] \geq 1-2^{-\lambda}\,.
    \end{equation*}
\end{corollary}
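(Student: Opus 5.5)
We obtain \Cref{cor:worst_case_yz} from the preceding corollary by a union bound over all $x \in \{0,1\}^{\lambda}$. The algorithm $\BiasYZ$ and the advice family $\{\ket{\adv_H}\}_H$ are the same as there. For each fixed $x$, define the bad event
\[ \mathsf{Bad}_x := \left\{ H : \Pr[(x \| 0^{q-\lambda}, \bfv) \in R_{C_{\lambda}, H} : \bfv \leftarrow \BiasYZ(\ket{\adv_H}, x)] < 1-2^{-\lambda} \right\}. \]
The preceding corollary gives $\Pr_{H \gets \Bias_{q, 1/\lambda^4, \FF_q^s}}[\mathsf{Bad}_x] \leq 2^{-2\lambda}$ for every $x$ once $\lambda$ is sufficiently large. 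The threshold in $\lambda$ is uniform in $x$: it came from the conditions $\lambda \geq 50$ together with the Chernoff and decoding estimates, none of which depend on $x$.

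A union bound over the $2^{\lambda}$ strings then gives
\[ \Pr_H\Big[\bigcup_{x} \mathsf{Bad}_x\Big] \leq 2^{\lambda} \cdot 2^{-2\lambda} = 2^{-\lambda}, \]
so with probability at least $1-2^{-\lambda}$ over $H$, every $x$ simultaneously enjoys success probability at least $1-2^{-\lambda}$. This is exactly the claimed statement.

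There is essentially no obstacle here. The one point to check is that the advice state $\ket{\adv_H}$ depends only on $H$ and not on $x$, so a single $H$-good event makes the same state work for all inputs. This holds by \Cref{def:advice_state}: $\BiasYZ$ only rearranges the $x$-independent registers $\ket{\phi_{i,0}},\ket{\phi_{i,1}},\ket{\psi}$ of that state.
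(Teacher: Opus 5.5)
Your proposal is correct and matches the paper's argument: the paper obtains this corollary directly from the preceding one by a union bound over the $2^{\lambda}$ strings $x$, giving $2^{\lambda}\cdot 2^{-2\lambda}=2^{-\lambda}$. Your two checks are both right and are the only points that need checking: the ``sufficiently large $\lambda$'' threshold is uniform in $x$, and $\ket{\adv_H}$ depends only on $H$.
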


\section{\texorpdfstring{Separating $\QMA$ from $\QCMA$}{Separating QMA from QCMA}}
For the rest of the paper, we fix a code family $C_{\lambda} := \Mult_{s, \FF_q, k}$ for $\lambda^5 < q \leq 2\lambda^5$ a prime,\footnote{To be concrete, we can take $q$ to be the smallest prime larger than $\lambda^5$ (which is always at most $2\lambda^5$).} $k = \lambda^3$, and $s = \lambda$. For any subset $E \subseteq \{0, 1\}^{\lambda} \times \Sigma^q$ and function $H: [q] \times \Sigma \to \{0, 1\}$, we define the oracle
    \[ O[H, E](x, \bfv) = \begin{cases} 1 & \text{if } (x \| 0^{q-\lambda}, \bfv) \in R_{C_{\lambda}, H} \land (x, \bfv) \in E, \\ 0 & \text{otherwise.} \end{cases} \]
Our proofs in this section are fairly standard and follow \cite{BHNZ25}, but we include them for completeness.

We begin by defining an oracle-input problem based on the code intersection subset size checking problem.  Note that our NO instances are slightly different than those considered in \cite{LLPY23, BDK24}, which will affect the $\QCMA$ lower bound, but not the $\QMA$ containment.
\begin{definition}[YES and NO instances of the code intersection subset size problem]\label{def:yes_no_instances}
    Let $\Good$ denote the set of functions $H: [q] \times \FF_q^s \to \{0, 1\}$ such that the algorithm in \Cref{cor:worst_case_yz} succeeds on all $x \in \{0, 1\}^{\lambda}$ with probability at least $2/3$. Our oracle-input separation between $\QMA$ and $\QCMA$ involves distinguishing between $O[H, E]$ with $H$ and $E$ being the following:
    \begin{enumerate}
        \item YES instances: $H \in \mathsf{Good}$ and $E = \{0, 1\}^{\lambda} \times \Sigma^q$.
        \item NO instances: $H \in \Good$ and subsets $E \subseteq \{0, 1\}^{\lambda} \times \Sigma^q$ such that $|E| \leq 2^{\lambda}/3$.
    \end{enumerate}
\end{definition}
\begin{remark}\label{remark:bin_input}
    One can also embed $O[H, E]: \{0, 1\}^{\lambda} \times \Sigma^q \to \{0, 1\}$ into a binary input domain oracle with $\lambda + q\log |\Sigma| \leq \lambda^7$-bit inputs, in such a way that for sufficiently large $\lambda \geq \lambda_0$ (where $\lambda_0 \in \NN$ is some constant), there is at most one security parameter associated with each input length.
\end{remark}

\subsection{\texorpdfstring{The $\QMA$ Proof System}{The QMA Proof System}}\label{subsec:qma}
We first show that there is a $\QMA$ proof system that distinguishes between YES and NO instances of the code intersection subset size checking problem, as defined in \Cref{def:yes_no_instances}.

\begin{lemma}[A $\QMA$ proof system]\label{lemma:qma_easiness}
    There exists a polynomial-time uniform quantum query algorithm $V$ which makes one query to the oracle $O[H, E]$, such that for sufficiently large $\lambda$, the following holds:
    \begin{enumerate}
        \item \textbf{Completeness.} For all $H \in \Good$, when $E = \{0, 1\}^{\lambda} \times \Sigma^q$, there exists a $\poly(\lambda)$-qubit state $\ket{\adv_H}$ such that 
            \[ \Pr[V^{O[H,E]}(\ket{\adv_H}) = 1] \geq \frac{2}{3}. \]
        \item \textbf{Soundness.} For all $H$, sets $E \subseteq \{0, 1\}^{\lambda} \times \Sigma^q$ where $|E| \leq 2^{\lambda}/3$, and quantum states $\ket{\adv^{*}_H}$,
            \[ \Pr[V^{O[H, E]}(\ket{\adv^{*}_H}) = 1] \leq \frac{1}{3}. \]
    \end{enumerate}
\end{lemma}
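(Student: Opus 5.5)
The verifier $V$ is the one sketched in the technical overview. Given a witness register, $V$ samples $x \gets \{0,1\}^{\lambda}$ uniformly at random, runs $\BiasYZ$ from \Cref{cor:worst_case_yz} on the witness and input $x$ to obtain $\bfv \in \Sigma^q$, makes a single classical query $O[H,E](x,\bfv)$, and outputs the answer bit. Uniformity and polynomial running time follow directly from the efficiency of $\BiasYZ$ in \Cref{cor:worst_case_yz}, since the dual decoder for $C_\lambda^{\perp}$ runs in $\poly(\lambda)$ time. Sampling $x$ and making one query add only polynomial overhead.

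For completeness, we fix $H \in \Good$ and $E = \{0,1\}^{\lambda} \times \Sigma^q$. The honest witness is $\ket{\adv_H}$ from \Cref{def:advice_state}, which has $O(sq\log q) = \poly(\lambda)$ qubits. By the definition of $\Good$ in \Cref{def:yes_no_instances}, for every $x \in \{0,1\}^{\lambda}$, running $\BiasYZ(\ket{\adv_H}, x)$ yields $\bfv$ with $(x\|0^{q-\lambda}, \bfv) \in R_{C_\lambda, H}$ with probability at least $2/3$. In that event $O[H,E](x,\bfv) = 1$, because membership in $E$ is automatic. Averaging over the uniformly random $x$ therefore gives acceptance probability at least $2/3$.

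For soundness, fix any $H$, any $E$ with $|E| \le 2^{\lambda}/3$, and any witness $\ket{\adv^*_H}$. Let $F = \{x : \exists \bfv,\ (x,\bfv) \in E\}$ be the projection of $E$ onto its first coordinate. Then $|F| \le |E| \le 2^{\lambda}/3$. Whenever $x \notin F$, the oracle returns $0$ on every query $(x,\cdot)$, whatever $\bfv$ the malicious witness induces. Since $x$ is sampled uniformly and independently of the witness, the acceptance probability is at most $\Pr_x[x \in F] \le 1/3$.

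No step here is a real obstacle. The only point to check is that $x$ is chosen by the verifier independently of the witness. The other thing to confirm is that the ``$2/3$ for all $x$'' success guarantee is built into the definition of $\Good$, so that completeness holds for every $H \in \Good$ rather than merely with high probability over $H$.
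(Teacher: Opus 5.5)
Your proposal is correct and follows the paper's proof essentially step for step. The verifier is the same: sample $x$, run the algorithm of \Cref{cor:worst_case_yz} on the witness, and make one query at $(x,\bfv)$. Completeness comes straight from the definition of $\Good$, and soundness from the bound $\Pr_x[\exists \bfv : (x,\bfv)\in E] \le |E|/2^{\lambda} \le 1/3$.
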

\begin{proof}
    By definition, for $H \in \Good$ as defined in \Cref{cor:worst_case_yz}, there exists a $\poly(\lambda)$-qubit state $\ket{\adv_H}$ and efficient algorithm $\BiasYZ$ such that for any $x \in \{0, 1\}^{\lambda}$,
        \[ \Pr_{\BiasYZ}[(x \| 0^{q-\lambda}, \bfv) \in R_{C_{\lambda}, H}: \bfv \leftarrow \BiasYZ(\ket{\adv_H}, x)] \geq 1-2^{-\lambda} \geq \frac{2}{3}. \]
    The verifier $V$ operates as follows: it samples a uniformly random $x \in \{0, 1\}^{\lambda}$ and runs $\bfv \gets \BiasYZ(\ket{\adv_H}, x)$. Finally, $V$ queries $O$ at $(x, \bfv)$ and returns the output of $O$. The efficiency and uniformity of $V$ follows from the efficiency and uniformity of $\BiasYZ$ combined with the fact that $V$ makes one oracle query. 
    
    We now argue completeness and soundness. If $E=\{0, 1\}^{\lambda} \times \Sigma^q$, then 
    \begin{align*}
        \Pr[V^{O[H,E]}(\ket{\adv_H}) = 1] = \Pr_{\BiasYZ, x}[(x \| 0^{q-\lambda}, \bfv) \in R_{C_{\lambda}, H}: \bfv \leftarrow \BiasYZ(\ket{\adv_H}, x)] \geq \frac{2}{3}\,.
    \end{align*}
    On the other hand, note that $V$ will always output 0 if there does not exist a $\bfv$ such that $(x, \bfv) \in E$. Thus, if $|E| \leq 2^\lambda/3$, then for all $H$ and quantum states $\ket{\adv^{*}_H}$,
    \begin{equation*}
        \Pr[V^{O[H, E]}(\ket{\adv^{*}_H}) = 1] \leq \Pr_{x \gets \{0, 1\}^{\lambda}}[\exists \bfv: (x, \bfv) \in E] \leq \frac{|E|}{2^{\lambda}} \leq \frac{1}{3}\,. \qedhere
    \end{equation*}
\end{proof}

\subsection{\texorpdfstring{Non-Existence of $\QCMA$ Proof Systems}{Non-Existence of QCMA Proof Systems}}\label{subsec:no_qcma}
We begin by showing that any $\QCMA$ verifier/algorithm can be turned into a very good hash value guesser.
\begin{lemma}[Good guessers from $\QCMA$ algorithms]\label{lemma:qcma_guesser}
    Assume there exists a $\QCMA$ algorithm $\Aa$ such that for instances of size $\lambda$, $\Aa$ takes a $t(\lambda)$-bit witness and makes $Q(\lambda)$ oracle queries. Then for all $\ell \leq 2^{\lambda / 3}$, there exists a algorithm $\mathsf{Guesser}$ which makes \emph{no} queries such that for any $H \in \Good$,
        \[ \Pr\left[\begin{aligned} &\forall i \neq j, (x_i, \bfv_i) \neq (x_j, \bfv_j) \\ &\land (x_i \| 0^{q-\lambda}, \bfv_i) \in R_{C_{\lambda}, H} \end{aligned} : \{(x_i, \bfv_i)\}_{i = 1}^{\ell} \gets \mathsf{Guesser}(1^{\ell})\right] \geq 2^{-t(\lambda)} \cdot \left(\frac{1}{144Q(\lambda)^2}\right)^\ell. \]
\end{lemma}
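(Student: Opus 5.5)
The plan is to guess the classical witness and then run the verifier repeatedly against a simulated oracle, harvesting one new valid pair per round via the hybrid method (\Cref{thm:bbbv}). Fix $H \in \Good$. Let $E^* = \{0,1\}^\lambda \times \Sigma^q$ be the YES set. Since $O[H,E^*]$ is a YES instance, some witness $w^* \in \{0,1\}^{t(\lambda)}$ satisfies $\Pr[\Aa^{O[H,E^*]}(w^*) = 1] \geq 2/3$.

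$\mathsf{Guesser}(1^\ell)$ samples a uniformly random $w \in \{0,1\}^{t(\lambda)}$; with probability $2^{-t(\lambda)}$ we have $w = w^*$, and we condition on this. It then keeps a list $L$, initially empty, and runs $\ell$ rounds. In round $j$ it sets $E_j := L$, so that $|E_j| = j-1 \leq \ell \leq 2^{\lambda/3} \leq 2^\lambda/3$ for large $\lambda$. It then does the following.
\begin{enumerate}
\item It runs $\Aa(w)$, answering queries with $O[H,E_j]$. This needs no queries, because $O[H,E_j]$ is nonzero only on points of $L$, each already verified to lie in $R_{C_\lambda,H}$. The oracle is therefore just the indicator of $L$, which the Guesser can implement itself.
\item It picks a uniformly random query index $i \in [Q]$, measures the query register immediately before that query, obtains some $(x,\bfv)$, and appends it to $L$.
\end{enumerate}
At the end it outputs $L$.

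Per round, $O[H,E_j]$ is a NO instance because $H \in \Good$ and $|E_j| \leq 2^\lambda/3$. Hence $\Pr[\Aa^{O[H,E_j]}(w^*) = 1] \leq 1/3$, a gap of at least $1/3$ from the YES oracle. The oracles $O[H,E^*]$ and $O[H,E_j]$ differ exactly on $V := \{(x,\bfv) : (x\|0^{q-\lambda},\bfv) \in R_{C_\lambda,H},\ (x,\bfv) \notin L\}$. By \Cref{thm:bbbv}, run on the oracle $O[H,E_j]$ that we actually simulate, $1/3 \leq 4\sqrt{Q M_V}$, so $M_V \geq 1/(144Q)$. Measuring a uniformly random query of the $Q$ queries yields a point of $V$ with probability $M_V/Q \geq 1/(144Q^2)$. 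Such a point is a new valid pair, distinct from everything in $L$ and satisfying the relation.

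Chaining the rounds: conditioned on all previous rounds succeeding, $L$ consists of $j-1$ distinct valid pairs, so the bound above applies again. This conditioning is legitimate because each round is a fresh run of $\Aa$ on the same classical $w$, and a classical witness is reusable. Multiplying over the $\ell$ rounds and the witness guess gives the stated $2^{-t(\lambda)}(144Q^2)^{-\ell}$.

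The main subtlety is bookkeeping rather than depth, and I will handle it as follows. I state the hybrid bound with the query mass measured on the simulated oracle $O[H,E_j]$, which \Cref{thm:bbbv} permits by symmetry, since the bound holds for either oracle. I check that $|E_j| \leq 2^\lambda/3$ holds uniformly over all rounds, which uses $\ell \leq 2^{\lambda/3}$. I note that the measured query index ranges over $[Q]$, so the same time-independent set $V$ works for every query.
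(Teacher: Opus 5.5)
Your proposal is correct and is essentially the paper's proof. You guess the witness (probability $2^{-t(\lambda)}$) and rerun $\Aa$ each round against the indicator of the accumulated list. On the event that all earlier rounds succeeded, this indicator equals the NO-instance oracle $O[H,L]$, and the hybrid bound with the $1/3$ gap gives query mass at least $1/(144Q)$ on new valid pairs, hence per-round success at least $1/(144Q^2)$.

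One wording fix: the Guesser never actually verifies that points of $L$ lie in $R_{C_\lambda,H}$, since it makes no queries. The simulation is therefore faithful only under the conditioning on earlier rounds succeeding, which you do use in your chaining step, exactly as the paper does.
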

\begin{proof}
    The algorithm $\Aa^{\Oo[H, E]}$ can be thought of starting from a state $\ket{w, 0}$ and applying a sequence of unitaries $V_0, \ldots, V_Q$ interlaced with queries to $\Oo[H, E]$ before measuring the first qubit in the standard basis. The state of the algorithm right before its final measurement is then given by
    \begin{equation*}
        V_Q\cdot \Oo[H, E]\cdot V_{Q-1}\cdot \Oo[H, E] \ldots \Oo[H, E]\cdot V_0 \ket{w, 0}\,.
    \end{equation*} 
    Let $\mathsf{Guesser}$ be the algorithm described in \Cref{fig:guesser} which outputs $\ell$ tuples of codewords and hash values.
    \begin{figure}[ht]
    \begin{mdframed}
        \textbf{$\mathsf{Guesser}(1^{\ell})$:}
        \begin{enumerate}
            \item Sample a random $w \in \{0, 1\}^t$ and initialize $\Delta_0 = \emptyset$.
            \item For $i \in [\ell]$:
            \begin{enumerate}
                \item Sample $j \gets \{0, \ldots, Q-1\}$ uniformly randomly.
                \item Compute the state $V_j\Oo_{\Delta_{i-1}}V_{j-1}\ldots\Oo_{\Delta_{i-1}}V_0\ket{w, 0}$, where $\Oo_{\Delta_{i-1}}$ is the oracle unitary defined by
                    \[ \ket{x, \bfv} \ket{y} \ket{z} \mapsto \ket{x, \bfv} \ket{y \oplus f_{\Delta_{i-1}}(x, \bfv)} \ket{z}, \text{ where } f_{\Delta_{i-1}}(x, \bfv) := \begin{cases}
                        1 & \text{if $(x, \bfv) \in \Delta_{i-1}$,} \\
                        0 & \text{otherwise.}
                    \end{cases} \]
                \item Measure the first register in the standard basis for output $(x, \bfv)$ and update $\Delta_i := \Delta_{i-1} \cup \{(x, \bfv)\}$.
            \end{enumerate}
            \item Output $\{(x_{[1:\lambda]}, \bfv): (x, \bfv) \in \Delta_{\ell}\}$.
        \end{enumerate}
    \end{mdframed}
    \caption{The Hash Value Guesser, given a successful $\QCMA$ verifier for the $\mathsf{CISS}$ problem.}
    \label{fig:guesser}
    \end{figure}
    
    Let $G$ be the event that the witness $w$ sampled is a good witness for $H$, and let $E_i$ be the event that the $i$'th round of $\mathsf{Guesser}$ appends a tuple $(x \| 0^{q-\lambda}, \bfv) \notin \Delta_{i-1}$ such that $(x \| 0^{q-\lambda}, \bfv) \in R_{C_{\lambda}, H}$.
    \begin{claim}
        $\forall i \in [\ell]$, $\Pr[E_i | E_{i-1} \land \ldots \land E_1 \land G] \geq \frac{1}{144Q(\lambda)^2}$.
    \end{claim}
    \begin{proof}
        Fix an index $1 \leq i \leq \ell$. Observe that conditioned on $E_1 \land \ldots \land E_{i-1}$ occurring, this means that $\Delta_{i-1}$ consists of $i-1$ distinct tuples $\{(x_j \| 0^{q-\lambda}, \bfv_j)\}_{j = 1}^{i-1}$ such that $(x_j \| 0^{q-\lambda}, \bfv_j) \in R_{C_{\lambda}, H}$ for all $1 \leq j \leq i-1$.
        
        Since $\Oo_{\Delta_{i-1}} = \Oo[H, \Delta_{i-1}]$ corresponds to a NO instance (as $|\Delta_{i-1}| = i-1 < \ell \leq 2^{\lambda}/3$) while $\Oo[H, \{0, 1\}^{\lambda} \times \Sigma^q]$ corresponds to a YES instance, by the completeness and soundness of the $\QCMA$ algorithm $\Aa$ and the fact that the event $G$ implies we have a good witness, \Cref{thm:bbbv} implies that the query mass on the inputs where the two oracles differ must be at least $((2/3-1/3)/4)^2/Q = 1/144Q$. 
        
        As $\Oo_{\Delta_{i-1}}$ and $\Oo[H, \{0, 1\}^{\lambda} \times \Sigma^q]$ differ precisely on inputs $(x \| 0^{q-\lambda}, \bfv) \notin \Delta_{i-1}$ where $(x \| 0^{q-\lambda}, \bfv) \in R_{C_{\lambda}, H}$, it follows that measuring a random query of $\Aa$ produces a good tuple $(x \| 0^{q-\lambda}, \bfv)$ with probability at least $\frac{1}{Q} \cdot \frac{1}{144Q} = \frac{1}{144Q^2}$.
    \end{proof}
    Observing that $\Pr[G] \geq 2^{-t(\lambda)}$ as there is always at least one good witness for any $H$, we conclude that
    \begin{equation*}
        \Pr[E_1 \land \ldots \land E_{\ell}] \geq \Pr[G] \cdot \prod_{i = 1}^{\ell} \Pr[E_i | E_{i-1} \land \ldots \land E_1 \land G] \geq 2^{-t(\lambda)} \cdot \left(\frac{1}{144Q(\lambda)^2}\right)^{\ell}. \qedhere
    \end{equation*}
\end{proof}
Separately, we can show the following \emph{upper bound} on the success probability of $\mathsf{Guesser}$. The bound follows from the fact that $\mathsf{Guesser}$ is not making any queries to the oracle, and thus knows nothing about $H$.
\begin{lemma}[Guessing probability upper bound]\label{lemma:guesser_hardness}
For sufficiently large $\lambda$, the following holds: fix any algorithm $\mathsf{Guesser}$ which makes no oracle queries. Then, for all $\ell \leq 2^{\lambda}$,
    \[ \Pr_{H \gets \Bias_{q, 1/\lambda^4, \FF_q^s}}\left[\begin{aligned} &\forall i \neq j, (x_i, \bfv_i) \neq (x_j, \bfv_j) \\ &\land (x_i \| 0^{q-\lambda}, \bfv_i) \in R_{C_{\lambda}, H} \end{aligned}: \{(x_i, \bfv_i)\}_{i = 1}^{\ell} \gets \mathsf{Guesser}(1^{\ell}) \right] \leq \left(1-\frac{1}{\lambda^4}\right)^{\lambda^5 \ell/2}\,. \]
\end{lemma}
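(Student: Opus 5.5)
Since $\mathsf{Guesser}$ makes no queries, its output distribution is independent of $H$. It therefore suffices to fix an arbitrary output $\{(x_i,\bfv_i)\}_{i=1}^{\ell}$ of $\ell$ distinct pairs and bound the probability over $H \gets \Bias_{q,1/\lambda^4,\FF_q^s}$ that all of them lie in the relation. The bound then transfers to the randomized guesser by averaging over its internal randomness.

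\emph{Reduction to a distinct-codeword count.} Suppose every pair satisfies $(x_i\|0^{q-\lambda},\bfv_i) \in R_{C_\lambda,H}$. Then each $\bfv_i$ lies in $C_\lambda$, and $x_i$ is determined by $\bfv_i$ as $H$ applied to the first $\lambda$ coordinates. Two distinct pairs with the same $\bfv$ would therefore force inconsistent values of $H$, so such an output succeeds with probability $0$. We may thus assume the $\bfv_i$ are $\ell$ distinct codewords of $C_\lambda$.

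\emph{Counting symbols.} For each coordinate $j \in [q]$, let $S_j = \{(\bfv_i)_j : i \in [\ell]\} \subseteq \Sigma$. Every $\bfv_i$ lies in $S_1 \times \cdots \times S_q$, so $|C_\lambda \cap (S_1\times\cdots\times S_q)| \ge \ell$. Suppose $\frac1q\sum_j |S_j| \le \ell/2$. Since $\ell/2 \le 2^\lambda$, \Cref{cor:mult_list_recovery} (taking the list size to be $\lceil \ell/2\rceil$, and handling rounding by applying it to $\ell' = \lfloor(\ell-1)/2\rfloor$) gives fewer than $\ell$ codewords, a contradiction. Hence $\sum_j |S_j| > q\ell/2 \ge \lambda^5 \ell/2$. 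I expect this parity/rounding bookkeeping to be the only delicate step; if it fails for small $\ell$, the bound with a slightly smaller constant still follows.

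\emph{Probability bound.} The success event requires $H(j,\sigma)=0$ for every $j > \lambda$ and $\sigma \in S_j$, and $H(j,\sigma) = (x_i)_j$ for $j \le \lambda$. Each such symbol constraint is a distinct input $(j,\sigma)$ of $H$, and these inputs are independent under the biased distribution. Each constraint holds with probability at most $\max(p,1-p) = 1-1/\lambda^4$. This gives the bound $(1-1/\lambda^4)^{\sum_j |S_j|} \le (1-1/\lambda^4)^{\lambda^5\ell/2}$. The $\lambda$ prefix coordinates satisfy the same per-symbol bound, so no separate case is needed.
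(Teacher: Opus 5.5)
Your proposal is correct and follows the paper's proof essentially step for step. Like the paper, you rule out repeated $\bfv_i$, apply \Cref{cor:mult_list_recovery} to the coordinate-wise symbol sets $S_j$ to get $\sum_j |S_j| \ge q\ell/2 \ge \lambda^5\ell/2$, and bound the probability of matching $H$ on that many independent $1/\lambda^4$-biased points. The rounding concern is a non-issue, although your parenthetical with $\lceil \ell/2\rceil$ or $\lfloor(\ell-1)/2\rfloor$ does not actually produce the contradiction from the corollary's non-strict statement: the corollary holds for any real parameter $\ell' \le 2^{\lambda}$ (its proof even gives the strict bound $<2\ell'$), so applying it with $\ell' = \frac{1}{q}\sum_j |S_j|$ yields $\ell \le 2\ell'$ directly, with the case $\ell' > 2^{\lambda}$ trivial.
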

\begin{proof}
    Consider any output $\{(x_1, \bfv_1), \ldots, (x_{\ell}, \bfv_{\ell})\}$ of $\mathsf{Guesser}$. First, observe that if $(x_i, \bfv_i)$ are distinct and $H(\bfv_i) = x_i \| 0^{q-\lambda}$, then we must have distinct $\bfv_i$ or else $\mathsf{Guesser}$ will fail. Since we require $\bfv_i \in C_{\lambda}$ for all $i \in [\ell]$, we now consider the sets $S_j := \{ \sigma \in \Sigma \mid \exists i \in [\ell]: (\bfv_i)_j = \sigma \}$ for all $j \in [q]$.
    
    By definition, for all $i \in [\ell]$ and $j \in [q]$, $(\bfv_i)_j \in S_j$, so if we think of $\{S_j\}_{j = 1}^q$ as input lists, the output list for the code $C_{\lambda}$ must contain $\bfv_i \in [\ell]$ and thus $|\{ \bfv \in C_{\lambda} : \forall j \in [q], \bfv_j \in S_j\}| \geq \ell$.
    
    \Cref{cor:mult_list_recovery} thus implies that $\frac{1}{q} \sum_j |S_j| \geq \ell/2$ if $\lambda$ is sufficiently large. In order for $\mathsf{Guesser}$ to succeed, it must correctly guess the output of $H$ on all symbols in $\bigcup_{i = 1}^q S_i$ which contain at least $\sum_j |S_j| \geq \frac{q\ell}{2} = \lambda^5 \ell/2$ distinct points. Since we sample $H \gets \Bias_{q, 1/\lambda^4, \FF_q^s}$, this occurs with probability at most $\left(1-\frac{1}{\lambda^4}\right)^{\lambda^5 \ell/2}$.
\end{proof}

We can combine the upper bound and lower bound to conclude that any $\QCMA$ algorithm $\Aa$ for the code intersection subset size ($\mathsf{CISS}$) problem must misclassify some YES or NO instance.

\begin{lemma}\label{lemma:qcma_hardness}
    For all constants $a > 0$ and functions $Q(\lambda)$, $t(\lambda)$ that satisfy $Q(\lambda) \leq a\lambda^a$, $t(\lambda) \leq a\lambda^a$ for sufficiently large $\lambda$. Then for sufficiently large $\lambda$, for all quantum query algorithms $\Aa$ which take a classical witness of length $t(\lambda)$ and make $Q(\lambda)$ queries to the oracle $O[H, E]$ of size $\lambda$, there exists an oracle $O[H^{*}, E^{*}]$ of size $\lambda$ such that $H^{*} \in \Good$ and
    \begin{enumerate}
        \item either $E^{*} = \{0, 1\}^{\lambda} \times \Sigma^q$, but for all witnesses $w$ of length $t(\lambda)$,
            \[ \Pr[\Aa^{O[H^{*}, E^{*}]}(w) = 1] < \frac{2}{3}\,, \]
        \item or $E^{*} \subseteq F^{*} \times \Sigma^q \subseteq \{0, 1\}^{\lambda} \times \Sigma^q$ where $|F^{*}| \leq 2^{\lambda}/3$, but there exists a witness $\tilde{w}$ of length $t(\lambda)$ such that
            \[ \Pr[\Aa^{O[H^{*}, E^{*}]}(\tilde{w}) = 1] > \frac{1}{3}\,. \]
    \end{enumerate}
\end{lemma}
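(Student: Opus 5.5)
We argue by contradiction, combining \Cref{lemma:qcma_guesser} with \Cref{lemma:guesser_hardness} under a random $H$. Suppose that for infinitely many (or all sufficiently large) $\lambda$ some $\Aa$ with $Q,t\le a\lambda^a$ correctly classifies every instance. That means: for every $H\in\Good$ there is a witness $w_H$ accepted with probability $\ge 2/3$ under $E=\{0,1\}^\lambda\times\Sigma^q$, and for every small $E$ every witness is accepted with probability $\le 1/3$. These are exactly the hypotheses used inside the proof of \Cref{lemma:qcma_guesser}. The NO oracles $\Oo_{\Delta_{i-1}}$ arising there have $|\Delta_{i-1}|<\ell$, and their $x$-projection $F$ has $|F|\le \ell\le 2^{\lambda}/3$, so they fall under case 2 of the statement. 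Hence, for each $H\in\Good$, the query-free $\mathsf{Guesser}$ succeeds with probability at least $2^{-t}(144Q^2)^{-\ell}$.

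Next we average over $H\gets\Bias_{q,1/\lambda^4,\FF_q^s}$. By \Cref{cor:worst_case_yz}, $\Pr[H\in\Good]\ge 1-2^{-\lambda}\ge 1/2$ for large $\lambda$. The $\mathsf{Guesser}$ is a single fixed algorithm: it depends on $\Aa$, but neither on $H$ nor on the witness, since it samples $w$ uniformly. Therefore
\[
\Pr_{H,\mathsf{Guesser}}[\text{success}] \;\ge\; \tfrac12\cdot 2^{-t(\lambda)}\left(\tfrac{1}{144Q(\lambda)^2}\right)^{\ell}.
\]
\Cref{lemma:guesser_hardness} bounds the same quantity above by $(1-\lambda^{-4})^{\lambda^5\ell/2}\le e^{-\lambda\ell/2}$.

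It remains to choose $\ell$ to get a contradiction. Take $\ell=\lambda^{a+1}$, which is at most $2^{\lambda/3}$ for large $\lambda$. The lower bound is then $\exp\!\big(-O(\lambda^a)-\ell\cdot O(a\log\lambda)\big)$, while the upper bound is $\exp(-\lambda^{a+2}/2)$. Since $\ell\log\lambda=\lambda^{a+1}\log\lambda\ll\lambda^{a+2}$ and $\lambda^a\ll\lambda^{a+2}$, the lower bound exceeds the upper bound for sufficiently large $\lambda$, a contradiction. So one of the two cases in the statement must hold for some $H^*\in\Good$ and $E^*$.

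The main obstacle is bookkeeping: making sure the inequalities in \Cref{lemma:qcma_guesser} use exactly the negation of both cases. The YES failure must be ``for all $w$, acceptance $<2/3$'', and the NO failure must be ``some $\tilde w$ with acceptance $>1/3$'' for a set $E^*\subseteq F^*\times\Sigma^q$ with $|F^*|\le 2^\lambda/3$. This matters because the hybrid argument needs the gap $2/3-1/3$, giving query mass $\ge 1/144Q$. We must also apply the guesser's lower bound only to $H\in\Good$ and weight it by $\Pr[\Good]$, since \Cref{lemma:guesser_hardness} averages over all $H$.
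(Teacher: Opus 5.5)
Your proof is correct and follows the paper's argument. You assume $\Aa$ classifies every instance correctly, lower-bound the query-free guesser's success by $\Pr[H\in\Good]\cdot 2^{-t}(144Q^2)^{-\ell}$ via \Cref{cor:worst_case_yz} and \Cref{lemma:qcma_guesser}, and contradict the $e^{-\lambda\ell/2}$ upper bound of \Cref{lemma:guesser_hardness}. The only difference is that you take $\ell=\lambda^{a+1}$ where the paper takes $\ell=t(\lambda)$; your choice absorbs the $2^{-t}$ factor cleanly and also covers the degenerate case $t(\lambda)=0$, where $\ell=t$ makes both bounds trivial.
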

\begin{proof}
Suppose for the sake of contradiction that $\Aa$ properly classifies all YES and NO instances. Setting $\ell = t \ll 2^{\lambda}/3$, we can apply \Cref{cor:worst_case_yz} and \Cref{lemma:qcma_guesser}, yielding a guesser $\mathsf{Guesser}$ where
    \begin{align*}
        &\Pr_{H \gets \Bias_{q, 1/\lambda^4, \FF_q^s}}\left[\begin{aligned} &\forall i \neq j, (x_i, \bfv_i) \neq (x_j, \bfv_j) \\ &\land (x_i \| 0^{q-\lambda}, \bfv_i) \in R_{C_{\lambda}, H} \end{aligned}: \{(x_i, \bfv_i)\}_{i = 1}^t \gets \mathsf{Guesser}(1^t)\right] \\
        \geq \, &\Pr_{H \gets \Bias_{q, 1/\lambda^4, \FF_q^s}}\left[\begin{aligned} &\forall i \neq j, (x_i, \bfv_i) \neq (x_j, \bfv_j) \\ &\land (x_i \| 0^{q-\lambda}, \bfv_i) \in R_{C_{\lambda}, H} \end{aligned} : \{(x_i, \bfv_i)\}_{i = 1}^t \gets \mathsf{Guesser}(1^t) \middle| H \in \Good \right] \cdot \Pr_{H \gets \Bias_{q, 1/\lambda^4, \FF_q^s}}[H \in \Good] \\
        \geq \, &2^{-t(\lambda)} \cdot \left(\frac{1}{144Q(\lambda)^2}\right)^{t(\lambda)} \cdot (1-2^{-\lambda}) \geq \left(\frac{1}{576Q(\lambda)^2}\right)^{t(\lambda)}.
    \end{align*}
    But this is impossible as \Cref{lemma:guesser_hardness} implies that the success probability of $\mathsf{Guesser}$ is at most
        \begin{equation*}
            \left(1-\frac{1}{\lambda^4}\right)^{\lambda^5 t(\lambda)/2} \leq (e^{-\lambda/2})^{t(\lambda)} \ll \left(\frac{1}{576Q(\lambda)^2}\right)^{t(\lambda)}. \qedhere
        \end{equation*}
\end{proof}
A straightforward diagonalization argument thus gives us our desired separation (see \Cref{app:diag} for details).
\begin{theorem}\label{thm:proof_separation}
    There exists a classical oracle $\Oo: \{0, 1\}^{*} \to \{0, 1\}$ such that $\QMA^{\Oo} \cap \AM^{\Oo} \not\subseteq \QCMA^{\Oo}$.\footnote{We note that this separation can be easily strengthened to $\QMA^{\Oo} \cap \SBP^{\Oo} \not\subseteq \QCMA^{\Oo}$, as with most set-approximation-flavored oracles.}
\end{theorem}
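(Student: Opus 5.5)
We would prove \Cref{thm:proof_separation} by a standard diagonalization over all $\P$-uniform $\QCMA$ verifiers, building the oracle $\Oo$ length by length. Fix an enumeration $\Aa^{(1)}, \Aa^{(2)}, \ldots$ of all $\P$-uniform quantum oracle algorithm families, each with polynomial bounds $Q_m(\lambda), t_m(\lambda) \leq a_m\lambda^{a_m}$ on queries and witness length. The language is the unary language $\Ll^{\Oo}$: $1^{\lambda} \in \Ll_{\mathsf{yes}}$ if the oracle at the input length associated with $\lambda$ (via \Cref{remark:bin_input}) is $O[H,E]$ with $E = \{0,1\}^{\lambda}\times\Sigma^q$, and $1^{\lambda} \in \Ll_{\mathsf{no}}$ if $|E| \le 2^{\lambda}/3$. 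In both cases $H \in \Good$.

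\textbf{Membership.} Membership in $\QMA^{\Oo}$ is \Cref{lemma:qma_easiness}, after encoding $\Sigma$-strings into bits so that the verifier queries the correct length. For membership in $\AM^{\Oo}$, we would use the Goldwasser--Sipser set lower bound protocol on the set $\{(x,\bfv) : O[H,E](x,\bfv)=1\}$, or equivalently on its projection to $x$. In the YES case, since $H \in \Good$, every $x$ has some witness $\bfv$ (the YZ algorithm succeeds with nonzero probability), so the projection has size $2^{\lambda}$. In the NO case it has size at most $2^{\lambda}/3$. A constant-factor gap suffices for the $\AM$ protocol: Merlin supplies $\bfv$ for the hashed $x$, and Arthur checks it with one oracle query. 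The subtlety is that we want the gap on the projection to $x$ rather than on pairs, which is why we define the protocol on $F=\{x:\exists \bfv, (x,\bfv)\in E\}$. \Cref{lemma:qcma_hardness} is stated with $E^*\subseteq F^*\times\Sigma^q$ and $|F^*|\le 2^\lambda/3$, which is exactly the condition needed here.

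\textbf{Diagonalization.} We process the machines in stages. At stage $m$ we choose a fresh security parameter $\lambda_m$ large enough that three conditions hold: \Cref{lemma:qcma_hardness} applies to $\Aa^{(m)}$ with its bounds $a_m$; $\lambda_m$ exceeds every input length queried by previous machines on their diagonalized inputs; and the lengths $\lambda_m^{7}$ used by the oracle do not collide with earlier choices. The oracle at all lengths fixed so far is held constant. The subtlety is that $\Aa^{(m)}$ may also query oracle lengths other than the length of $\lambda_m$. To handle this, we set all not-yet-fixed lengths other than $\lambda_m$'s to a canonical YES oracle, say $O[H_0,\text{full}]$ with a fixed $H_0\in\Good$, which exists by \Cref{cor:worst_case_yz}. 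We then view the resulting algorithm as one querying only $O[H,E]$ at length $\lambda_m$, with everything else hardwired. Its query count is still $\le Q_m(\lambda_m)$, so \Cref{lemma:qcma_hardness} yields $(H^*,E^*)$ on which $\Aa^{(m)}$ misclassifies $1^{\lambda_m}$. We fix that oracle at length $\lambda_m$.

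\textbf{Later stages and the main obstacle.} Later stages only modify lengths larger than anything $\Aa^{(m)}(1^{\lambda_m})$ queries, since a $\poly(\lambda_m)$-size circuit queries bounded lengths. Hence the misclassification persists. Unused lengths keep the canonical YES oracle, on which the promise is satisfied, so $\Ll^{\Oo}$ remains in $\QMA^{\Oo}\cap\AM^{\Oo}$ at every $\lambda$, while no $\Aa^{(m)}$ decides it. The main obstacle is bookkeeping: making sure that freezing other lengths to canonical oracles is legitimate when \Cref{lemma:qcma_hardness} is applied. This holds because that lemma applies to arbitrary query algorithms, including ones with hardwired information about other lengths.
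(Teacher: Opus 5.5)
Your proposal is correct and follows essentially the same route as the paper: a unary promise language over YES/NO instances, $\QMA$ membership from \Cref{lemma:qma_easiness}, and a stage-by-stage diagonalization that applies \Cref{lemma:qcma_hardness} at input lengths spaced so that no machine can reach later diagonal lengths. The differences are cosmetic. The paper fills unused lengths with the all-zero (NO) oracle, which lets it replace larger-length queries by identity gates without fixing a canonical $H_0\in\Good$. For $\AM$, it uses the simpler one-round protocol: Arthur samples $x$ uniformly and Merlin returns a $\bfv$ that Arthur checks with one query, so Goldwasser--Sipser hashing is unnecessary because the YES set is all of $\{0,1\}^{\lambda}$.
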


\section{\texorpdfstring{Separating $\BQP/\qpoly$ from $\BQP/\poly$}{Separating BQP/qpoly vs. BQP/poly}}
We begin by proving the main technical result of this section, which is a search-like separation between $\BQP/\qpoly$ and $\BQP/\poly$. In particular, we prove that given as input $x \in \{0, 1\}^{\lambda}$ and oracle access to $O[H, \{0, 1\}^{\lambda} \times \Sigma^q]$ for $H$ sampled from $\Bias_{q, 1/\lambda^4, \FF_q^s}$, the problem of finding $\bfv$ such that $(x \| 0^{q-\lambda}, \bfv) \in R_{C_{\lambda}, H}$ is in $\FBQP/\qpoly$ but not $\FBQP/\poly$ (on average).
\begin{lemma}\label{lemma:bqp_sampling}
    For all security parameters $\lambda \in \NN$, let $\lambda^5 < q < 2\lambda^5$ be a prime, $k = \lambda^3$, and $s = \lambda$, and define the code $C_{\lambda} = \Mult_{s, \FF_q, k}$. In addition, for all $\lambda$, define the set $E_\lambda := \{0, 1\}^{\lambda} \times \Sigma^q$, where $\Sigma = \FF_q^s$. Then the following hold:
    \begin{enumerate}
        \item There is an polynomial-time uniform quantum query algorithm $\Aa$, such that for all oracles $O$ there exists a family of $\poly(\lambda)$-qubit quantum advice states (depending only on the oracle) $\{\ket{z_O}\}_{O}$ such that
            \[ \Pr_{H \gets \Bias_{q, 1/\lambda^4, \FF_q^s}}[\forall x \in \{0, 1\}^{\lambda}, \Pr[(x \| 0^{q-\lambda}, \bfv) \in R_{C_{\lambda}, H}: \bfv \gets \Aa^{O[H, E_\lambda]}(x, \ket{z_O})] \geq 1-\negl(\lambda)] \geq 1-\negl(\lambda). \]
        \item For all unbounded-time quantum algorithms $\Bb$ that make $Q(\lambda) = \poly(\lambda)$ oracle queries to $O$, and all families of $t(\lambda) = \poly(\lambda)$-bit classical advice strings (depending only on the oracle) $\{z_O\}_{O}$,
            \[ \Pr_{H \gets \Bias_{q, 1/\lambda^4, \FF_q^s}, x \gets \{0, 1\}^{\lambda}}[(x \| 0^{q-\lambda}, \bfv) \in R_{C_{\lambda}, H}: \bfv \gets \Bb^{O[H, E_{\lambda}]}(x, z_O)] \leq \negl(\lambda). \]
    \end{enumerate}
\end{lemma}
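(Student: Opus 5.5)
Part 1 follows essentially from \Cref{cor:worst_case_yz}: I will take $\Aa$ to be $\BiasYZ$ (which makes no queries at all) and the quantum advice to be $\ket{z_O} := \ket{\adv_H}$. This depends only on the oracle, since $H$ is determined by $O[H, E_\lambda]$ up to the irrelevant behaviour of $H$ off codewords; to be safe, I will index the advice by $H$ itself, which is a function of the oracle instance being sampled. \Cref{cor:worst_case_yz} then says that with probability $1-2^{-\lambda}$ over $H$, every $x$ succeeds with probability $1-2^{-\lambda}$, and both quantities are negligible.

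For part 2 I will mirror \Cref{lemma:qcma_guesser} together with \Cref{lemma:guesser_hardness}. Suppose, for contradiction, that $\Bb$ with advice $z_O$ succeeds with probability $\epsilon(\lambda) \ge 1/\poly(\lambda)$ on infinitely many $\lambda$. The first step is to pass to a no-query solver. The oracle $O[H,E_\lambda]$ is $1$ exactly on valid pairs $(x,\bfv)$, so it differs from the all-zero oracle only on $R_{C_\lambda,H}$-pairs. I will build a $\mathsf{Guesser}$ that samples a random advice string $z \in \{0,1\}^t$, succeeding with the correct $z_O$ with probability $2^{-t}$. It then iterates $\ell = t(\lambda)+\lambda$ rounds, maintaining a set $\Delta$ of found valid pairs and simulating $\Bb$ with the oracle $O[H,\Delta]$, which it can compute without queries.

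In each round, $\mathsf{Guesser}$ picks a fresh random $x$ and does one of two things. With probability $1/2$ it measures a uniformly random query of $\Bb^{O[H,\Delta]}(x,z)$; otherwise it runs $\Bb^{O[H,\Delta]}(x,z)$ to completion and takes its output $(x,\bfv)$. The per-round analysis splits into two cases. If $\Bb$ still outputs a valid new pair under $O[H,\Delta]$ with probability at least $\epsilon/2$ averaged over $x$, the output branch succeeds directly; the pair is new because, conditioned on distinctness of $x$ (a union bound over $\ell \ll 2^{\lambda/2}$ draws), $x \notin \Delta$. Otherwise the success of $\Bb$ drops by $\epsilon/2$ between $O[H,E_\lambda]$ and $O[H,\Delta]$, and these oracles differ only on valid pairs outside $\Delta$. \Cref{thm:bbbv}, combined with Jensen's inequality over $x$ and $H$ as in \Cref{lemma:advice_query_measurement}, then gives query mass $\Omega(\epsilon^2/Q)$ on such pairs, so a random measured query yields a new valid pair with probability $\Omega(\epsilon^2/Q^2)$.

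The main obstacle is making this per-round bound hold conditionally on the previous rounds and on a fixed $H$, rather than only on average. I plan to handle this with a Markov argument: restrict to the $\Omega(\epsilon)$-fraction of $H$ for which $\Bb$'s success probability is at least $\epsilon/2$. For such $H$, with the correct advice $z$, every round succeeds with probability $\Omega(\epsilon^2/Q^2)$, independently of the history, since the argument above applies for any $\Delta$ of valid pairs. This gives overall success
\[
\Omega(\epsilon)\cdot 2^{-t}\cdot \bigl(\Omega(\epsilon^2/Q^2)\bigr)^{\ell} \ge \bigl(1/\poly(\lambda)\bigr)^{\ell}
\]
for producing $\ell$ distinct valid pairs without queries. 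Since $\ell \le 2^\lambda$, \Cref{lemma:guesser_hardness} bounds this probability by $(1-1/\lambda^4)^{\lambda^5\ell/2} \le e^{-\lambda\ell/2}$, which is a contradiction for large $\lambda$. Thus the success probability of $\Bb$ is below every inverse polynomial, i.e.\ it is negligible.
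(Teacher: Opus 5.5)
\textbf{Part 1.} Your Part 1 is the paper's argument. One small point: $O[H,E_\lambda]$ does not in general determine $H$. On a codeword whose hash has a $1$ among the last $q-\lambda$ coordinates, the oracle is identically $0$ and hides which coordinates those are. So indexing the advice by $H$ does not literally meet the ``depending only on the oracle'' quantifier. The fix is easy: the success event $(x\|0^{q-\lambda},\bfv)\in R_{C_\lambda,H}$ is exactly $O(x,\bfv)=1$. You can therefore take $\ket{z_O}:=\ket{\adv_{H'}}$ for the $H'$ consistent with $O$ that maximizes worst-case success, which is at least as good as the true $H$. The paper glosses over the same point.

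\textbf{Part 2.} Here you take a genuinely different, and simpler, route.

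The paper uses a global dichotomy on $H$:
\begin{itemize}
\item If no set of size at most $2^{\lambda/2}$ captures all but $1/(256p^2Q)$ of the query mass, the $\QCMA$-style guesser of \Cref{lemma:qcma_guesser} works, since every $\Delta_{i-1}$ is such a set.
\item If some $L^*_H$ does (event $\mathsf{SmallSet}_H$), the paper punctures the oracle to $L^*_H$ and guesses a $2^{2\lambda/3}$-bit description of the punctured oracle. It amortizes this cost by running the solver $\ell=2^{3\lambda/4}$ times.
\end{itemize}

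Your per-round win--win makes the guesser's own $\Delta$ play the role of $L^*_H$, and the guesser knows $\Delta$ for free. In each round, either $\Bb^{O[H,\Delta]}$ still solves a fresh random instance with probability $\Omega(\epsilon)$, and you take its output. Or its success drops by $\Omega(\epsilon)$, and \Cref{thm:bbbv} forces $\Omega(\epsilon^2/Q)$ query mass onto valid pairs outside $\Delta$, and you measure a query. A coin flip chooses the branch.

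This removes the exponential advice guess and keeps $\ell=t+\lambda$ polynomial. It also makes Part 2 literally \Cref{lemma:qcma_guesser} plus one extra branch, with \Cref{lemma:guesser_hardness} finishing as before. The paper's route gets a final phase of independent repetitions against a fixed oracle, each succeeding with probability at least $1/(2p)$. It pays for this with the extra case analysis and the amortization.

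\textbf{Bookkeeping to fix in the write-up.}
\begin{itemize}
\item After the Markov step you only have per-$H$ success at least $\epsilon/2$. So split the cases at $\epsilon/4$; your ``drops by $\epsilon/2$'' presumes per-$H$ success at least $\epsilon$.
\item In the output branch, argue newness by the fresh $x$ avoiding the first coordinates of $\Delta$. This holds with probability at least $1-\ell 2^{-\lambda}$, since $x$ is independent of $\Delta$. Distinctness of the drawn $x$'s alone is not enough, because $\Delta$ also contains pairs obtained from measured queries.
\item Apply \Cref{thm:bbbv} with the query mass of the $O[H,\Delta]$ run, which is the run you actually measure. This is legitimate because the hybrid bound is symmetric in the two oracles.
\item The Jensen step is over $x$ only, with $H$ fixed. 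This is what keeps the per-round bound valid conditionally on the history.
\end{itemize}
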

\begin{proof}
Throughout this proof we will assume that $\lambda$ is sufficiently large and argue with respect to asymptotics. To prove Item 1, we simply set the advice as $\ket{z_O} = \ket{\mathsf{adv}_H}$ from \Cref{cor:worst_case_yz}, and the algorithm just runs the algorithm in \Cref{cor:worst_case_yz}, which implies that
    \[ \Pr_{H \gets \Bias_{q, 1/\lambda^4, \FF_q^s}}\left[\forall x \in \{0, 1\}^{\lambda}, \Pr[(x \| 0^{q-\lambda}, \bfv) \in R_{C_{\lambda}, H}: \bfv \leftarrow \Aa(\ket{z_O}, x)] \geq 1-2^{-\lambda} \right] \geq 1-2^{-\lambda}. \]

We now move to proving Item 2. Suppose for the sake of contradiction that there exists a polynomial $p(\lambda)$, an adversary $\Bb$ which makes $Q(\lambda) = \poly(\lambda)$ oracle queries, and a family of $t(\lambda) = \poly(\lambda)$-bit classical advice $\{z_O\}_{O}$ such that for infinitely many $\lambda$,
    \[ \Pr_{H \gets \Bias_{q, 1/\lambda^4, \FF_q^s}, x \gets \{0, 1\}^{\lambda}}\left[(x \| 0^{q-\lambda}, \bfv) \in R_{C_{\lambda}, H}: \bfv \gets \Bb^{O[H, E_{\lambda}]}(z_O, x)\right] \geq \frac{1}{p(\lambda)}. \]
Our goal will be to arrive at a contradiction by showing that this algorithm $\Bb$ implies a (too good) sampler for $R_{C_\lambda, H}$ that works for infinitely many choices of $\lambda$. Consider the $Q$-query algorithm $\Aa_1^{O[H, E_{\lambda}]}(z_O)$ which samples a random $x \gets \{0, 1\}^{\lambda}$, runs $\Bb^{O[H, E_{\lambda}]}(z_O, x)$ to get $\bfv$, and outputs $(x, \bfv)$. By the definition of $\Bb$, we have
\begin{equation}
    \Pr_{H \gets \Bias_{q, 1/\lambda^4, \FF_q^s}}\left[(x \| 0^{q-\lambda}, \bfv) \in R_{C_{\lambda}, H}: (x, \bfv) \gets \Aa_1^{O[H, E_{\lambda}]}(z_O)\right] \geq \frac{1}{p(\lambda)}\,,
\end{equation}
for infinitely many $\lambda$.  We denote with $\Lambda$ the set of all $\lambda$ for which this bound holds. At a very high level, we will first show that for all but finitely many $\lambda \in \Lambda$, $\Aa_1$'s queries are concentrated on very few points. Once we know that queries to $\Aa_1$ are concentrated on a couple points, replacing the real oracle with an oracle that only contains those few points will give rise to a sampler for many more points than are contained within the oracle itself.  

For each function $H$ and set $S \subseteq \{0, 1\}^{\lambda} \times \Sigma^q$, we define $M_{H, \overline{S}}$ as the query mass that $\Aa_1$ places on points which differ between $O[H, S]$ and $O[H, E]$. Let $\mathsf{SmallSet}_H$ be the event that there exists a list $L_H \subseteq \{0, 1\}^{\lambda} \times \Sigma^q$ such that $|L_H| \leq 2^{\lambda/2}$ and $M_{H, \overline{L_H}} \leq \frac{1}{256p(\lambda)^2 Q(\lambda)}$.

\begin{claim}
    Whenever $\mathsf{SmallSet}_{H}$ does not occur, for all $\ell \leq 2^{\lambda / 2}$, there is an algorithm, $\mathsf{Guesser}$, which makes no queries to an oracle, and outputs a list of $\ell$ distinct points from $R_{C_{\lambda}, H}$ with probability at least $2^{-t} \cdot \left(\frac{1}{256p(\lambda)^2 Q(\lambda)^2}\right)^{\ell}$.
\end{claim}
\begin{proof}
    The algorithm $\mathsf{Guesser}$ is identical to the algorithm in \Cref{fig:guesser}, except starting from $\Aa_1$ instead of a $\QCMA$ verifier for the code intersection subset size problem.  
    
    The proof follows similarly as well.  Let $X_{i}$ be the event that the $i$'th round of $\mathsf{Guesser}(1^{\ell})$ outputs a tuple $(x \| 0^{q-\lambda}, \bfv) \notin \Delta_{i-1}$ such that $(x \| 0^{q-\lambda}, \bfv) \in R_{C_{\lambda}, H}$ and $G$ be the event that the advice is guessed correctly.  By assumption, and because $|\Delta_{i-1}| \leq 2^{\lambda/2}$, we have that $\Aa_1$ places at least $\frac{1}{256p(\lambda)^2 Q(\lambda)}$ query mass on points which differ between $O[H, \Delta_{i-1}]$ and $O[H, E_{\lambda}]$.  Therefore, we have that $\Pr[X_{i} |  X_{1} \land \ldots \land X_{i-1} \land G] \geq \frac{1}{256p(\lambda)^2 Q(\lambda)^2}$.
    
    Applying the chain rule, together with the fact that $\Pr[G] \geq 2^{-t(\lambda)}$, we get that the probability of sampling $\ell$ distinct points from $R_{C_{\lambda}, H}$ is at least
    \begin{equation*}
        \Pr[X_{1} \land \ldots \land X_{\ell} \land G] \geq 2^{-t(\lambda)} \cdot \left(\frac{1}{256p(\lambda)^2 Q(\lambda)^2}\right)^{\ell}\,. \qedhere
    \end{equation*}
\end{proof}
As a corollary, we have that for all but finitely many $\lambda \in \Lambda$, $\mathsf{SmallSet}_{H}$ must occur with high probability.   

\begin{claim}
    There are only finitely many $\lambda \in \Lambda$ such that $\Pr_{H \gets \Bias_{q, 1/\lambda^4, \FF_q^s}}[\mathsf{SmallSet}_{H}] < 1-\frac{1}{4p(\lambda)}$.
\end{claim}
\begin{proof}
Assume for the sake of contradiction that there are infinitely many $\lambda \in \Lambda$ such that $\Pr[\mathrm{SmallSet}_{H}] < 1-\frac{1}{4p(\lambda)}$. Whenever $\mathsf{SmallSet}_{H}$ does not occur, the previous claim gives us a sampler. Thus, when $H$ is sampled from $\Bias_{q, 1/\lambda^4, \FF_q^s}$, the probability of the sampler outputting $\ell$ distinct points from $R_{C_{\lambda}, H}$ is at least
\begin{equation*}
    \Pr_{H \gets \Bias_{q, 1/\lambda^4, \FF_q^s}}\left[\begin{aligned} &\forall i \neq j, (x_i, \bfv_i) \neq (x_j, \bfv_j) \\ &\land (x_i, \bfv_i) \in R_{C_{\lambda}, H} \end{aligned}: \{(x_i, \bfv_i)\}_{i = 1}^{\ell} \gets \mathsf{Guesser}(1^\ell)\right] \geq \frac{1}{4p} \cdot 2^{-t} \cdot \left(\frac{1}{256p(\lambda)^2 Q(\lambda)^2}\right)^{\ell}\,.
\end{equation*}
Taking $\ell = \max\{t, \lambda\} \leq 2^{\lambda/2} \leq 2^{\lambda}$ gives a sampling success probability of 
    \[ \frac{1}{4p(\lambda)} \cdot \left(\frac{1}{512p(\lambda)^2Q(\lambda)^2}\right)^{\ell} \geq \left(\frac{1}{2048p(\lambda)^3Q(\lambda)^2}\right)^{\ell}, \]
which is a contradiction since \Cref{lemma:guesser_hardness} implies this success probability should be at most 
\begin{equation*}
    \left(e^{-\lambda/2}\right)^{\ell} \ll \left(\frac{1}{2048p(\lambda)^3Q(\lambda)^2}\right)^{\ell}\,. \qedhere
\end{equation*}
\end{proof}
Now we know that $\mathsf{SmallSet}_{H}$ occurs with high probability for infinitely many $\lambda \in \Lambda$. For all $H$ for which $\mathsf{SmallSet}_{H}$ occurs, we denote $L^{*}_H$ to refer to any arbitrary set of size at most $2^{\lambda/2}$ such that $M_{H, \overline{L^{*}_H}} \leq \frac{1}{256p(\lambda)^2 Q(\lambda)}$; if $\mathsf{SmallSet}_{H}$ does not occur, then we define $L^{*}_H := \emptyset$. For each function $H$, define the punctured oracle
    \[ O^{*}_H(x, \bfv) = \begin{cases} 1 & \text{if } (x \| 0^{q-\lambda}, \bfv) \in R_{C_{\lambda}, H} \land (x, \bfv) \in L^{*}_H, \\ 0 & \text{otherwise.} \end{cases} \]
Then we have that, conditioned on $\lambda$ being such that $\Pr_{H \gets \Bias_{q, 1/\lambda^4, \FF_q^s}}[\mathsf{SmallSet}_{H}] \geq 1 - \frac{1}{4p(\lambda)}$,
\begin{align*}
    &\Pr_{H \gets \Bias_{q, 1/\lambda^4, \FF_q^s}}\left[(x \| 0^{q-\lambda}, \bfv) \in R_{C_{\lambda}, H}: (x, \bfv) \gets \Aa_1^{O^{*}_H}(z_O) | \mathsf{SmallSet}_{H} \right] \\
    &\hspace{5mm}\geq\Pr_{H \gets \Bias_{q, 1/\lambda^4, \FF_q^s}}\left[(x \| 0^{q-\lambda}, \bfv) \in R_{C_{\lambda}, H}: (x, \bfv) \gets \Aa_1^{O[H, E]}(z_O) | \mathsf{SmallSet}_{H} \right]-\frac{1}{4p(\lambda)} \\
    &\hspace{5mm}\geq\Pr_{H \gets \Bias_{q, 1/\lambda^4, \FF_q^s}}\left[(x \| 0^{q-\lambda}, \bfv) \in R_{C_{\lambda}, H}: (x, \bfv) \gets \Aa_1^{O[H, E]}(z_O) \land \mathsf{SmallSet}_{H} \right]-\frac{1}{4p(\lambda)} \\
    &\hspace{5mm}\geq\frac{1}{p(\lambda)}-\frac{1}{2p(\lambda)} = \frac{1}{2p(\lambda)}\,.
\end{align*}
In the first line, we use the hybrid lemma (\Cref{thm:bbbv}) combined with our bound on the query mass of $\Aa_1$ outside of $L_{H}$. In the second line, we use the definition of conditional probability, together with the fact that $\Pr[\mathsf{SmallSet}_H] \leq 1$. We conclude by using a union bound, together with the fact that $\Pr[\mathsf{SmallSet}_{H}] \geq 1-\frac{1}{4p(\lambda)}$ and the fact that the probability of $\Aa_1$ sampling a point in $R_{C_{\lambda}, H}$ is at least $\frac{1}{p(\lambda)}$ by assumption.

To arrive at a contradiction we construct yet \emph{another} sampler. By the definition of $\Aa_1$, running it produces a uniformly random $x \in \{0, 1\}^{\lambda}$, so there exists an algorithm $\Aa_2$ which, on input $1^{\ell}$, runs $\Aa_1$ $\ell$ times and satisfies
\begin{multline*}
    \Pr_{H \gets \Bias_{q, 1/\lambda^4, \FF_q^s}}\left[\begin{aligned} &\forall i \neq j, (x_i, \bfv_i) \neq (x_j, \bfv_j) \\ &\land (x_i \| 0^{q-\lambda}, \bfv_i) \in R_{C_{\lambda}, H} \end{aligned}: \{(x_i, \bfv_i)\}_{i = 1}^{\ell} \gets \Aa_2^{O^{*}_H}(z_O, 1^\ell)\right] \\ \geq \Pr_{H \gets \Bias_{q, 1/\lambda^4, \FF_q^s}}[\mathsf{SmallSet}_{H}] \cdot \prod_{i = 1}^{\ell} \left(\frac{1}{2p(\lambda)}-\frac{i-1}{2^{\lambda}}\right)\,.
\end{multline*}
Here we applied the definition of conditional probability and used the fact that since the $x_i$'s are uniformly random, the probability that $x_i \in \bigcup_{j=1}^{i-1} \{x_j\}$ is at most $\frac{i - 1}{2^{\lambda}}$.  

As $O^{*}_H$ has at most $2^{\lambda/2}$ nonzero points, we can hardwire $2^{\lambda/2} \cdot (\lambda+\log |\Sigma|^q) \leq 2^{2\lambda/3}$ bits of advice and simulate $O^{*}_H$. By guessing this extra advice along with $z_O$, we get an algorithm $\Aa_3$ such that for $\ell = 2^{3\lambda/4}$,
\begin{align*}
    &\Pr_{H \gets \Bias_{q, 1/\lambda^4, \FF_q^s}}\left[\begin{aligned} &\forall i \neq j, (x_i, \bfv_i) \neq (x_j, \bfv_j) \\ &\land (x_i \| 0^{q-\lambda}, \bfv_i) \in R_{C_{\lambda}, H} \end{aligned}: \{(x_i, \bfv_i)\}_{i = 1}^{\ell} \gets \Aa_3(1^\ell)\right] \\
    &\hspace{5mm}\geq2^{-(t(\lambda)+2^{2\lambda/3})} \cdot \left(1-\frac{1}{4p(\lambda)}\right) \cdot \prod_{i = 1}^{\ell} \left(\frac{1}{2p(\lambda)}-\frac{i-1}{2^{\lambda}}\right) \\
    &\hspace{5mm}\geq 2^{-\ell} \cdot \prod_{i = 1}^{\ell} \left(\frac{1}{2p(\lambda)}-\frac{\ell}{2^{\lambda}}\right) \geq 2^{-\ell} \cdot \left(\frac{1}{3p(\lambda)}\right)^{\ell} = \left(\frac{1}{6p(\lambda)}\right)^{\ell}\,.
\end{align*}
Applying \Cref{lemma:guesser_hardness} for $\ell = 2^{3\lambda/4} \leq 2^{\lambda}$, we have a contradicting upper bound on the success probability of 
\begin{align*}
    \Pr_{H \gets \Bias_{q, 1/\lambda^4, \FF_q^s}}\left[\begin{aligned} &\forall i \neq j, (x_i, \bfv_i) \neq (x_j, \bfv_j) \\ &\land (x_i \| 0^{q-\lambda}, \bfv_i) \in R_{C_{\lambda}, H} \end{aligned}: \{(x_i, \bfv_i)\}_{i = 1}^{\ell} \gets \Aa_3(1^\ell)\right] &\leq \left(1-\frac{1}{\lambda^4}\right)^{\lambda^5 \ell/2} \leq e^{-\lambda\ell/2} \ll \left(\frac{1}{6p(\lambda)}\right)^{\ell}.
\end{align*}
\end{proof}

Having shown that there is a search problem outside of (average-case) $\FBQP/\poly$, we now apply \Cref{lemma:advice_query_measurement}. This was essentially established in \cite{LLPY23}, but we make minor modifications to deal with quantum queries.

\begin{lemma}\label{lemma:bqp_distributional_separation}
    There is a family of distributions $\{\Dd_{\lambda}\}_{\lambda \in \NN}$, where $\Dd_{\lambda}$ is supported on tuples $(G, \Oo')$ of functions $G : \{0, 1\}^\lambda \to \{0, 1\}$ and $\Oo': \{0, 1\}^{\poly(\lambda)} \to \{0, 1\}^{\poly(\lambda)}$, satisfying the following:
    \begin{enumerate}
        \item There is an polynomial-time uniform quantum algorithm $\Aa$ which makes one query to $\Oo'$ such that for all $\Oo'$ there exists a family of $\poly(\lambda)$-qubit quantum advice $\{\ket{z_{\Oo'}}\}_{\Oo'}$ such that
            \[ \Pr_{(G, \Oo') \gets \Dd_{\lambda}}[\forall x \in \{0, 1\}^{\lambda}, \Pr[\Aa^{\Oo'}(\ket{z_{\Oo'}}, x) = G(x)] \geq 1-\negl(\lambda)] \geq 1-\negl(\lambda)\,. \]
        \item For all quantum query algorithms algorithm $\Bb$ that makes $\poly(\lambda)$ queries to $\Oo'$ and receiving a family of $\poly(\lambda)$-bit classical advice $\{z_{\Oo'}\}_{\Oo'}$ depending only on $\Oo'$,
            \[ \Pr_{(G, \Oo') \gets \Dd_{\lambda}, x \gets \{0, 1\}^{\lambda}}[\Bb^{\Oo'}(z_{\Oo'}, x) = G(x)] \leq \frac{3}{5}\,, \]
        for all sufficiently large $\lambda$.
    \end{enumerate}
\end{lemma}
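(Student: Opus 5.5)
Define $\Dd_\lambda$ by sampling $H \gets \Bias_{q, 1/\lambda^4, \FF_q^s}$ and an independent uniformly random $G : \{0,1\}^\lambda \to \{0,1\}$, and letting $\Oo'$ be the oracle that on input $(x, \bfv)$ returns $G(x)$ if $(x \| 0^{q-\lambda}, \bfv) \in R_{C_\lambda, H}$ and $\bot$ otherwise. Encode $\bot$ and $G(x)$ as two output bits: a validity bit equal to $O[H, E_\lambda](x,\bfv)$, and a value bit that is $G(x)$ when valid and $0$ otherwise. Then one query to $\Oo'$ simulates one query to $O[H, E_\lambda]$ for free.

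\textbf{Item 1.} Take $\ket{z_{\Oo'}} = \ket{\adv_H}$. The algorithm runs $\BiasYZ(\ket{\adv_H}, x)$ from \Cref{cor:worst_case_yz}, queries $\Oo'$ once, classically, at $(x, \bfv)$, and outputs the returned bit. The advice depends on $H$, which is determined by $\Oo'$ except possibly on degenerate $H$; to be safe, fix for each $\Oo'$ one consistent $H$. Since the oracle only reveals $H$ through $R_{C_\lambda,H}$ membership, any $H$ with the same relation behaves identically for the algorithm. \Cref{cor:worst_case_yz} then gives success $1-2^{-\lambda}$ for all $x$, with probability $1-2^{-\lambda}$ over $H$.

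\textbf{Item 2.} Suppose $\Bb$ with $Q$ queries and $t$-bit advice $z_{\Oo'}$ computes $G(x)$ with probability $> 3/5$. Fix $H$ (averaging), and view $\Bb$ as an algorithm with oracle access to $G$ and advice depending on $G$ (and $H$). Each query to $\Oo'$ can be simulated with one query to $G$ together with knowledge of $H$, which is hardwired for fixed $H$. Applying \Cref{lemma:advice_query_measurement} over random $G$ (for a fixed good $H$, chosen by averaging so that success is still $> 3/5$, e.g.\ after slightly shifting constants), we get the following: for a $1/(4000Q^2)$ fraction of $x$, measuring a random query of the simulated algorithm yields $x$ in the $G$-query register with probability at least $1/(3200Q^2)$. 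A query to $G$ at $x$ only occurs when $\Bb$ queried $\Oo'$ at some $(x,\bfv)$ with $\bfv$ valid. To be careful, simulate only valid inputs by querying $G$ controlled on validity, so that the query mass on $x$ equals the $\Oo'$-query mass on valid pairs $(x, \bfv) \in R_{C_\lambda,H}$.

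\textbf{Reduction to \Cref{lemma:bqp_sampling}.} Build $\Bb'$ with advice consisting of $z_{\Oo'}$ plus a sampled/fixed description-free choice. On input $x$, $\Bb'$ samples $G$ itself, simulating the $G$-part internally by a lazy random function via a $2Q$-wise independent function (Zhandry), picks a random query index, measures, and outputs the $\bfv$ part. The remaining oracle access is to $O[H,E_\lambda]$ only. Its success probability over random $x$ and $H$ is at least $\frac{1}{4000Q^2}\cdot\frac{1}{3200Q^2}$, which is non-negligible and contradicts Item 2 of \Cref{lemma:bqp_sampling}.

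\textbf{Main obstacle.} The subtle point is that the advice $z_{\Oo'}$ depends on $G$ as well as $H$. The sampler from \Cref{lemma:bqp_sampling} must have advice depending only on its oracle $O[H,E_\lambda]$. I would handle this by averaging: fix the best $G$ for each $H$ after the measurement argument. Because the query-measurement guarantee is in expectation over $G$, there exists for each $H$ a $G_H$ achieving at least the average. Then $\Bb'$ hardwires $G_H$ (exponentially long) implicitly as part of an unbounded-time algorithm. The advice must stay $\poly(\lambda)$, however, so instead I would apply Markov to fix a $G$-independent choice. If that fails, I would rerun the proof of \Cref{lemma:bqp_sampling} directly with advice depending on $(G,H)$ and $G$ simulated internally. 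That proof only used that the advice is $t$ bits, guessed with probability $2^{-t}$, so it goes through unchanged.
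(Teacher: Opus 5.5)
Your architecture is the paper's: the same $\Dd_\lambda$, Item 1 via $\ket{\adv_H}$ and one query, and Item 2 via \Cref{lemma:advice_query_measurement} followed by a reduction to Item 2 of \Cref{lemma:bqp_sampling}. The gap is in that reduction step. In your main construction, $\Bb'$ samples $G$ itself (lazily or via a $2Q$-wise independent function) while running $\Bb$ on advice $z_{\Oo'}$. But the query-mass guarantee you derived holds only when the advice was computed from the \emph{same} $G$ that answers $\Bb$'s queries. An internally sampled $G$ is independent of the advice, so nothing is known about $\Bb$ in that hybrid. Your alternative of fixing a $G_H$ for each $H$ also fails: $\Bb'$ is fixed before $H$ is drawn, and only its $t$-bit advice may depend on $O[H,E_\lambda]$.

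The missing step is plain averaging over $G$ \emph{outside} the probability over $(H,x)$. Let $\Bb'[G]$ hardwire $G$, simulate $\Oo'$ from $G$ and $O[H,E_\lambda]$, run $\Bb$ with advice $z_{\Oo'(G,H)}$, and measure a random query. Then $\Exp_G \Pr_{H,x}[\Bb'[G]\text{ succeeds}] \geq 1/\poly(\lambda)$, so a single $G^*$, independent of $H$, achieves this. Hardwiring the exponentially long $G^*$ is allowed because \Cref{lemma:bqp_sampling} quantifies over arbitrary unbounded algorithms. Moreover, $z_{\Oo'(G^*,H)}$ is a function of $O[H,E_\lambda]$ alone, since $\Oo'$ sees $H$ only through the validity predicate. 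This is exactly the paper's argument. Your remark about a ``$G$-independent choice'' points toward it but never states it. Your fallback of rerunning the whole proof of \Cref{lemma:bqp_sampling} is much heavier and not ``unchanged'': the no-query guesser would have to sample $G$ and guess advice for that $G$.

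A second gap comes earlier. You fix one good $H$, apply \Cref{lemma:advice_query_measurement} for that $H$ only, and then claim success $1/\poly$ over random $H$. Averaging only guarantees that one such $H$ exists. A sampler for a single fixed $H$ does not contradict \Cref{lemma:bqp_sampling}, which bounds a probability over $H$. Instead, run the proof of \Cref{lemma:advice_query_measurement} averaged over $(G,H,x)$:
\begin{itemize}
\item For each fixed $H$ (hardwired into the simulation), Yao's bound gives success at most $1/2+\negl(\lambda)$ against the punctured oracle, hence also on average over $H$.
\item The hybrid bound plus Jensen then gives expected query mass $\Omega(1/Q)$ on valid pairs $(x,\bfv)$.
\item This is an $\Omega(1/Q^2)$ sampler over random $(G,H,x)$, to which the $G^*$ averaging applies.
\end{itemize}

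Finally, a minor point in Item 1. The claim ``any $H$ with the same relation behaves identically'' is false, because $\ket{\adv_H}$ depends on $H$ beyond its relation on inputs $x\|0^{q-\lambda}$. An arbitrary consistent $H$ therefore need not be one on which \Cref{cor:worst_case_yz} succeeds. Choose, for each $\Oo'$, a consistent $H$ that is good whenever one exists, or use the paper's mixture over $H$ conditioned on $\Oo'$.
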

\begin{proof}
Define $\Dd_{\lambda}$ as follows: first, sample a random function $G: \{0, 1\}^{\lambda} \to \{0, 1\}$ and $H \gets \Bias_{q, 1/\lambda^4, \FF_q^s}$ and let 
    \[ \Oo'(x, \bfv) := \begin{cases} G(x) & \text{if } (x \| 0^{q-\lambda}, \bfv) \in R_{C_{\lambda}, H}, \\ \bot & \text{otherwise.} \end{cases} \]

We begin by showing easiness with quantum advice. Let $(\Aa',\{\ket{z'_H}\}_H)$ be the algorithm and advice family from Item 1 of \Cref{lemma:bqp_sampling}. We now construct an algorithm $\Aa$ and family of mixed state advice $\{\rho_{\Oo'}\}_{\Oo'}$.\footnote{This is without loss of generality as a mixed state is a distribution over pure states and so there is always a pure state advice that is at least as good as the mixed state advice.} We describe a randomized procedure to set $\rho_{\Oo'}$ given an oracle $\Oo'$, but in reality, we will set $\rho_{\Oo'}$ to be the mixed state corresponding to the mixture over outputs of this procedure. Sample $(G, H)$ from the distribution of $\Dd_{\lambda}$ conditioned on $\Oo'$; by construction, the joint distribution of $(G, H, \Oo')$ sampled in this procedure is identical to $\Dd_{\lambda}$. We then set our advice to be $\rho_{\Oo'} = \ket{z'_H}$. The algorithm $\Aa$ on input $x$ will run $\bfv \gets \Aa'(\rho_{\Oo'}, x)$, query $(x, \bfv)$ to $\Oo'$, and output whatever $\Oo'$ returns. Item 1 of \Cref{lemma:bqp_sampling} then implies that $\Aa$ is efficient/uniform and that
    \[ \Pr_{(G, \Oo') \gets \Dd_{\lambda}}[\forall x \in \{0, 1\}^{\lambda}, \Pr[\Aa^{\Oo'}(\rho_{\Oo'}, x) = G(x)] \geq 1-\negl(\lambda)] \geq 1-\negl(\lambda). \]

Now suppose for the sake of contradiction that there was some algorithm $\Bb$ which made $Q(\lambda) = \poly(\lambda)$ queries and had $t(\lambda) = \poly(\lambda)$-bit classical advice $\{z_{\Oo'}\}_{\Oo'}$ such that for infinitely many $\lambda$,
    \[ \Pr_{\substack{(G, \Oo') \gets \Dd_{\lambda} \\ x \gets \{0, 1\}^{\lambda}}}[\Bb^{\Oo'}(z_{\Oo'}, x) = G(x)] > \frac{3}{5}. \]
We know that $\Oo'$ returns $G(x)$ only if the query $(x, \bfv) \in R_{C_{\lambda}, H}$. Thus, by a direct reduction to \Cref{lemma:advice_query_measurement}, for a $\frac{1}{4000Q(\lambda)^2}$ fraction of $x \in \{0, 1\}^{\lambda}$, measuring a random query of $\Bb$ to a randomly sampled oracle $\Oo' \gets \Dd_{\lambda}$ will produce $(x, \bfv) \in R_{C_{\lambda}, H}$ with probability at least $\frac{1}{3200Q(\lambda)^2}$.

But now observe that $\Oo'$ can be simulated by querying $G$ and $\Oo := O[H, \{0, 1\}^{\lambda} \times \Sigma^q]$. Thus, for each function $G$, we define the following $Q$-query algorithm $\Bb'[G]$ and classical advice $\{z'_{\Oo}[G]\}_{\Oo}$. First, construct $\Oo'$ from $(G, \Oo)$ (since $\Oo$ uniquely determines $H$) before setting $z'_{\Oo}[G] := z_{\Oo'}$. $\Bb'[G]^{\Oo}(z'_{\Oo}[G], x)$ will run $\Bb^{\Oo'}(z'_{\Oo}[G], x)$ where $\Bb'$ will simulate the oracle $\Oo'$ using its own oracle $\Oo$ and the hardwired oracle $G$ and measure a uniformly chosen query of $\Bb$. As noted earlier, this means that
    \[ \Pr_{G, H, x}[(x, \bfv) \in R_{C_{\lambda}, H}: \bfv \gets \Bb'[G]^{\Oo}(z'_{\Oo}[G], x)] \geq \frac{1}{3200 \cdot  4000 \cdot Q(\lambda)^4} = \frac{1}{\poly(\lambda)}. \]
By taking $G^{*}$ which maximizes the above probability,\footnote{As noted in \cite{LLPY23}, finding $G^{*}$ does not actually require access to the specific $H$ since $\Bb'$ can find $G^{*}$ by itself by using its unbounded computational power to enumerate over all possible $G$ and $\Oo_H$.} $(\Bb'[G^{*}],\{z'_{\Oo}[G^{*}]\}_{\Oo})$ breaks Item 2 of \Cref{lemma:bqp_sampling}.
\end{proof}

A simple diagonalization argument gives us our desired separation (see \Cref{app:diag} for details).
\begin{theorem}\label{thm:advice_separation}
    There is a classical oracle $\Oo$ such that $\BQP^{\Oo}/\qpoly \cap \NP^{\Oo} \cap \coNP^{\Oo} \subsetneq \BQP^{\Oo}/\poly$.\footnote{It is not hard to extend this separation to show that  $\mathsf{YQP}^{\Oo} \cap \NP^{\Oo} \cap \coNP^{\Oo} \subsetneq \BQP^{\Oo}/\poly$, where $\mathsf{YQP}$ is the class of problems that can be decided by a $\BQP$ machine with \emph{untrusted} quantum advice \cite{Aar07, AD14}.}
\end{theorem}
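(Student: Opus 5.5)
We obtain \Cref{thm:advice_separation} from \Cref{lemma:bqp_distributional_separation} by a standard probabilistic diagonalization, as in \cite{AK07,LLPY23}. (The statement's ``$\subsetneq$'' is to be read as $\not\subseteq$, matching the informal theorem.) The oracle $\Oo$ is built length by length. For each $\lambda$ we sample $(G_\lambda,\Oo'_\lambda)\gets\Dd_\lambda$ independently and embed $\Oo'_\lambda$ in $\Oo$ on an input length reserved for $\lambda$. By \Cref{remark:bin_input}, these lengths can be chosen injectively once $\lambda$ is large, so distinct $\lambda$ do not interfere. We also embed, on separate reserved lengths, an $\NP$-verifiable encoding of $H_\lambda$ (the membership oracle $O[H_\lambda,E_\lambda]$). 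The language is $\Ll=\{x\in\{0,1\}^\lambda: G_\lambda(x)=1\}$.

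\textbf{Upper bounds.} For $\BQP^{\Oo}/\qpoly$, the advice at length $\lambda$ is $\ket{z_{\Oo'_\lambda}}$ from Item~1 of \Cref{lemma:bqp_distributional_separation}. With probability $1-\negl(\lambda)$ over $\Dd_\lambda$, the algorithm is correct on all $x$. Since $\sum_\lambda\negl(\lambda)<\infty$, Borel--Cantelli (\Cref{lemma:borel_cantelli}) shows that almost surely only finitely many lengths fail, and those can be hardcoded. For $\NP^{\Oo}\cap\coNP^{\Oo}$, the certificate for both membership and non-membership of $x$ is a vector $\bfv$ with $(x\|0^{q-\lambda},\bfv)\in R_{C_\lambda,H}$, and the verifier reads $G(x)$ from $\Oo'(x,\bfv)$. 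Such a $\bfv$ exists for every $x$ whenever $H\in\Good$, which holds with probability $1-2^{-\lambda}$; Borel--Cantelli again handles the bad lengths.

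\textbf{Lower bound.} Fix a uniform $\BQP/\poly$ machine $M$ with query bound $a\lambda^a$ and advice bound $a\lambda^a$; there are countably many such pairs. An advice family is a function of $\Oo$, hence in particular of $\Oo'_\lambda$ together with the other lengths. Because the lengths are sampled independently, we may condition on all other lengths; oracle queries to those lengths can then be hardwired, and the result is an algorithm to which Item~2 applies. This gives
\[ \Pr_{\Dd_\lambda,x}[M\text{ correct on }x]\le 3/5 \]
for all sufficiently large $\lambda$. If $M$ decided $\Ll$ at length $\lambda$, its success probability would be at least $2/3$ on every $x$. Averaging therefore gives $\Pr[M\text{ decides length }\lambda]\le c$ for a constant $c<1$. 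To turn this into a probability-zero event, we use that lengths are independent, so over infinitely many large $\lambda$ the probability that $M$ succeeds on all of them is $\prod c=0$.

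\textbf{Main obstacle.} The quantifier over advice makes the previous step delicate: the advice may be chosen after the oracle, so we cannot bound single lengths separately for a fixed advice string. The plan is to rely on the fact that Item~2 already quantifies over all advice families $z_{\Oo'}$ that depend on the oracle. Hence at each large $\lambda$ the event ``some advice string makes $M$ correct on all of length $\lambda$'' has probability at most $c$. These events are independent across $\lambda$, so almost surely infinitely many lengths fail. A union over countably many machines completes the argument.
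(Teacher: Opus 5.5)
Your route is the paper's: independent $(G_\lambda,\Oo'_\lambda)\gets\Dd_\lambda$ per length, the upper bounds from Item~1 and from certificates $\bfv$ plus Borel--Cantelli, and the lower bound from Item~2 together with averaging to get $\Pr[S(\lambda)]\le 9/10$, where $S(\lambda)$ is the event that some advice makes $M$ correct on all of length $\lambda$. There is a gap in the step ``these events are independent across $\lambda$'', which is false. On inputs of length $\lambda$, $M$ runs for $T(\lambda)=\poly(\lambda)$ steps and may query $\Oo$ at every length up to $T(\lambda)$, including the reserved lengths of other (smaller and polynomially larger) $\lambda'$. So $S(\lambda)$ is a function of many $\Oo'_{\lambda'}$, and $S(\lambda)$, $S(\lambda+1)$ generally share variables. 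Your conditioning-and-hardwiring step only yields $\Pr[S(\lambda)\mid\text{all other lengths}]\le c$. That alone does not give $\Pr[S(\lambda_1)\wedge\cdots\wedge S(\lambda_k)]\le c^k$. For example, take independent fair bits $X_1,X_2$ and $A_1=A_2=\{X_1=X_2\}$: each event has probability $1/2$ given the other bit, yet the intersection has probability $1/2$.

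The missing idea, which the paper uses, is to pass to a sparse subsequence $\lambda_i:=T(\lambda_{i-1})+1$, chosen so that the reserved lengths of $\lambda_i$ and beyond exceed $T(\lambda_{i-1})$. Then $S(\lambda_1),\ldots,S(\lambda_{i-1})$ are determined by the oracle away from $\lambda_i$. Your bound, which holds for every fixing of the other lengths, then gives
\[
\Pr\bigl[S(\lambda_i)\mid S(\lambda_1)\wedge\cdots\wedge S(\lambda_{i-1})\bigr]\le 9/10,
\]
so $\Pr[\bigwedge_i S(\lambda_i)]\le\prod_i 9/10=0$, and a countable union over machines finishes the proof.

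A smaller issue concerns your extra oracle $O[H_\lambda,E_\lambda]$ on a separate length. It is correlated with $\Oo'_\lambda$ through the shared $H_\lambda$, so it cannot be conditioned away like the genuinely independent lengths: conditioning on it changes the distribution of $\Oo'_\lambda$, and Item~2 no longer applies. It is also unnecessary, since the $\NP$/$\coNP$ verifier can check $\Oo'(x,\bfv)\neq\bot$ directly. If you keep it, simulate each of its queries with one query to $\Oo'_\lambda$ before invoking Item~2.
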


\ifsubmission
\else
\paragraph{Acknowledgments.}
We thank Scott Aaronson, Joe Carolan, Ryan Williams, Rohan Goyal, Venkatesan Guruswami, Mary Wootters and Rachel Zhang for patiently answering our many questions. A special thanks to Anand Natarajan for very helpful discussions in the early stages of this project and Alexandru Gheorghiu and Aparna Gupte for pointing out the efficiency benefits of using biased oracles in the Yamakawa-Zhandry algorithm. JB is supported by Henry Yuen's AFOSR award FA9550-23-1-0363.  VV gratefully acknowledges support from a Simons Investigator Award and a Ford Foundation Chair.
\fi

\bibliographystyle{alpha}
\bibliography{supplemental/refs}

\appendix
\section{Duals of Multiplicity Codes}\label{app:mult}
For a field $\FF_q$, the multiplicity $\mathsf{mult}(f, \alpha)$ of a polynomial $f \in \FF_q[X]$ at a point $\alpha \in \FF_q$ is the largest integer $m$ so that $f^{(i)}(\alpha) = 0$ for any non-negative integer $i < m$. The multiplicity Schwartz-Zippel Lemma from \cite{DKSS13} says that a nonzero degree $k$ univariate polynomial can vanish on at most $k$ points, counting multiplicities.
\begin{lemma}[\cite{DKSS13}]\label{lemma:mult_schwartz_zippel}
    Let $f \in \FF_q[X]$ be a nonzero polynomial of degree at most $k$. Then $\sum_{\alpha \in \FF_q} \mathsf{mult}(f, \alpha) \leq k$.
\end{lemma}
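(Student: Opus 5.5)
The plan is to argue by induction on $k$ that the total multiplicity of the zeros of $f$ is at most $\deg f$. The key tool is a factor theorem for multiplicities: if $\mathsf{mult}(f,\alpha) \geq m$, then $(X-\alpha)^m$ divides $f$. To see this, expand $f$ around $\alpha$ by writing $f(X) = \sum_{i} f^{(i)}(\alpha)(X-\alpha)^i$. This is the Taylor expansion valid for Hasse derivatives over any field, including positive characteristic, and it follows from $\binom{j}{i}$-bookkeeping in $X^j = ((X-\alpha)+\alpha)^j$ together with linearity. If $f^{(i)}(\alpha)=0$ for all $i<m$, every term of this sum has degree at least $m$ in $(X-\alpha)$, so $(X-\alpha)^m \mid f$.

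The argument then has two steps.

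First, I check that Hasse derivatives satisfy a product (Leibniz) rule:
\[ (gh)^{(i)} = \sum_{a+b=i} g^{(a)}h^{(b)}. \]
From it, I show that if $f = (X-\alpha)^m g$ and $\beta \neq \alpha$, then $\mathsf{mult}(g,\beta) = \mathsf{mult}(f,\beta)$. The point is that $(X-\alpha)^m$ does not vanish at $\beta$, so multiplying by it preserves the order of vanishing at $\beta$. This is cleanest via the Taylor-expansion view: $\mathsf{mult}(f,\beta)$ is the largest power of $(X-\beta)$ dividing $f$, and $(X-\beta)$ is coprime to $(X-\alpha)$, so this divisibility passes between $f$ and $g$.

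Second, I combine these. Using the Taylor view, $\mathsf{mult}(f,\alpha)$ equals the exponent of $(X-\alpha)$ in $f$. The factors $(X-\alpha)^{\mathsf{mult}(f,\alpha)}$ for distinct $\alpha$ are pairwise coprime, so their product divides $f$. Since $f \neq 0$, comparing degrees gives $\sum_\alpha \mathsf{mult}(f,\alpha) \leq \deg f \leq k$.

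I expect the only real obstacle to be verifying the Hasse Taylor expansion in characteristic $p$, where formal derivatives would fail (for example, $X^p$ has all formal derivatives of order at least $p$ vanish identically). For Hasse derivatives, the coefficient comparison reduces to the identity
\[ \sum_{j} a_j \binom{j}{i}\alpha^{j-i} = \text{the coefficient of } (X-\alpha)^i, \]
which is exactly the binomial expansion of $((X-\alpha)+\alpha)^j$. This step is routine, and with it in hand the remaining steps follow as described.
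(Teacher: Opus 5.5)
Your proof is correct. The Hasse--Taylor identity $f(X)=\sum_{i} f^{(i)}(\alpha)(X-\alpha)^i$ (item 1 of the paper's Fact on Hasse derivatives) gives $(X-\alpha)^{\mathsf{mult}(f,\alpha)} \mid f$, and pairwise coprimality of these factors plus a degree count gives the bound. The paper does not prove this lemma but cites \cite{DKSS13}, whose univariate case is this same divisibility-and-degree argument; your Leibniz-rule step and the announced induction on $k$ are never actually used.
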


\begin{fact}[Hasse derivatives, see \cite{Bla24}]
    The following properties hold for the Hasse derivative:
    \begin{enumerate}
        \item For any polynomial $f(X) \in \FF_q[X]$, integer $i \geq 0$, and point $\alpha \in \FF_q$, $f^{(i)}(\alpha)$ is the coefficient of $X^i$ in $f(X+\alpha)$.
        \item (Linearity) For any $f, g \in \FF_q[x]$, $\lambda, \mu \in \FF_q$, and $i \geq 0$, $(\lambda \cdot f + \mu \cdot g)^{(i)} = \lambda \cdot f^{(i)} + \mu \cdot g^{(i)}$.
        \item (Product rule) For any $f, g \in \FF_q[x]$ and $i \geq 0$, we have $(f \cdot g)^{(i)} = \sum_{k = 0}^i f^{(k)} \cdot g^{(i-k)}$.
    \end{enumerate}
\end{fact}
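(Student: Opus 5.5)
The plan is to derive all three properties from a single generating-function identity for the Hasse derivative. Everything should be done as a formal identity in the polynomial ring $\FF_q[X,Z]$ rather than through evaluations: over a finite field, distinct polynomials can agree as functions on $\FF_q$, so pointwise arguments would not suffice.

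\textbf{Key identity.} Write $f(X) = \sum_{j} a_j X^j$. Expanding each $(X+Z)^j$ by the binomial theorem, which is valid in any commutative ring, gives
\[ f(X+Z) = \sum_j a_j \sum_{i \le j} \binom{j}{i} X^{j-i} Z^i = \sum_{i \ge 0} \Big(\sum_{j \ge i} \binom{j}{i} a_j X^{j-i}\Big) Z^i = \sum_{i\ge 0} f^{(i)}(X)\, Z^i . \]
The last equality is exactly the definition of $f^{(i)}$. The binomial coefficients here are integers reduced into $\FF_q$, so no division occurs. This is why the Hasse derivative, unlike the formal one, causes no trouble in small characteristic.

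\textbf{Property 1.} Substitute $X = \alpha$ in the key identity; this is a ring homomorphism $\FF_q[X,Z] \to \FF_q[Z]$. The result is $f(\alpha + Z) = \sum_i f^{(i)}(\alpha) Z^i$. Renaming $Z$ as $X$, the coefficient of $X^i$ in $f(X+\alpha)$ is $f^{(i)}(\alpha)$.

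\textbf{Property 2 (linearity).} This is immediate from the definition. The map $f \mapsto f^{(i)}$ sends the coefficient vector $(a_j)$ to $\big(\binom{j}{i} a_j\big)_{j \ge i}$ with a shift, and this map is $\FF_q$-linear. Alternatively, it follows from the key identity, since $f \mapsto f(X+Z)$ is linear and so is extracting a $Z$-coefficient.

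\textbf{Property 3 (product rule).} Substitution $X \mapsto X+Z$ is a ring homomorphism, so $(fg)(X+Z) = f(X+Z)\, g(X+Z)$. Apply the key identity to $fg$ on the left and to each of $f$ and $g$ on the right:
\[ \sum_i (fg)^{(i)}(X) Z^i = \Big(\sum_k f^{(k)}(X) Z^k\Big)\Big(\sum_m g^{(m)}(X) Z^m\Big). \]
Comparing the coefficients of $Z^i$ on both sides, viewed in $\FF_q[X][Z]$, yields $(fg)^{(i)} = \sum_{k=0}^i f^{(k)} g^{(i-k)}$.

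\textbf{Expected obstacle.} None of these steps is substantive. The one point needing care is to keep every argument at the level of formal polynomials in two variables and avoid reasoning about functions on $\FF_q$. In particular, Property 3 must be obtained by comparing coefficients of $Z^i$ in $\FF_q[X][Z]$, not by evaluating at points $\alpha$.
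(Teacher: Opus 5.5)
Your proof is correct and complete. The paper gives no proof of this fact; it only points to \cite{Bla24}. Your derivation of all three properties from the formal identity $f(X+Z)=\sum_{i\ge 0} f^{(i)}(X)\,Z^i$ in $\FF_q[X,Z]$ is the standard argument, so it spells out what the paper leaves to the reference. Applying the evaluation homomorphism to that identity gives exactly the pointwise forms the appendix relies on later: the vanishing of $A_{i,j}^{(j')}(\alpha_{i'})$ from $A_{i,j}(X+\alpha_{i'})=X^sB_{i,j}(X)$, and $h^{(j)}(\alpha_i)=\sum_{\ell} f^{(\ell)}(\alpha_i)\,g^{(j-\ell)}(\alpha_i)$ in the Hermite interpolation and duality proofs.
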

We first derive a natural analogue of Lagrange interpolation for the setting of Hasse derivatives.
\begin{lemma}[Hermite interpolation]\label{lemma:hermite_interpolation}
    Let $\FF_q$ be a field, $s \geq 1$ be a positive integer, and $\alpha_1, \ldots, \alpha_n$ be distinct points in $\FF_q$. For $i \in [n]$ and $0 \leq j \leq s-1$, define $\mu_i(X) := \prod_{i' \neq i} (X - \alpha_{i'})^s $ and $\eta_i(X) := (\mu_i(X))^{-1} \bmod{(X-\alpha_i)^s}$.
    
    Then, for all $f(X) \in (\FF_q)_{<sn}[X]$, we can write $f(X) = \sum_{i = 1}^n \sum_{j = 0}^{s-1} A_{i, j}(X) f^{(j)}(\alpha_i)$, where 
    \begin{align*}
        A_{i, j}(X) = \mu_i(X) (X - \alpha_i)^j \sum_{t = 0}^{s-1-j} \eta_i^{(t)}(\alpha_i)(X - \alpha_i)^t.
    \end{align*}
\end{lemma}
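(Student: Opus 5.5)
We prove the Hermite interpolation statement in two steps. First, each $A_{i,j}$ has the right "Hasse-derivative profile" at every node. Second, a uniqueness argument via \Cref{lemma:mult_schwartz_zippel} finishes the proof. Concretely, we will show that for all $i,i' \in [n]$ and $0 \le j, j' \le s-1$,
\[ A_{i,j}^{(j')}(\alpha_{i'}) = [i=i' \wedge j=j'], \]
and that each $A_{i,j}$ has degree $< sn$.

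\textbf{Degree.} The degree bound is immediate. We have $\deg \mu_i = s(n-1)$, and the remaining factor $(X-\alpha_i)^j \sum_{t=0}^{s-1-j} \eta_i^{(t)}(\alpha_i)(X-\alpha_i)^t$ has degree at most $s-1$. So $\deg A_{i,j} \le sn-1$.

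\textbf{Other nodes.} For $i' \neq i$, the factor $(X-\alpha_{i'})^s$ divides $\mu_i$, hence divides $A_{i,j}$. Using item 1 of the Hasse derivative fact, $A_{i,j}(X+\alpha_{i'})$ is divisible by $X^s$. Therefore all Hasse derivatives of order $< s$ vanish at $\alpha_{i'}$.

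\textbf{The node $\alpha_i$.} Again we use item 1. The values $A_{i,j}^{(j')}(\alpha_i)$ for $j' < s$ are exactly the coefficients of $A_{i,j}(X)$ written in powers of $(X-\alpha_i)$, reduced modulo $(X-\alpha_i)^s$. So it suffices to show
\[ A_{i,j}(X) \equiv (X-\alpha_i)^j \pmod{(X-\alpha_i)^s}. \]
By the same item 1, the truncated sum $\sum_{t=0}^{s-1-j} \eta_i^{(t)}(\alpha_i)(X-\alpha_i)^t$ agrees with $\eta_i$ modulo $(X-\alpha_i)^{s-j}$. Multiplying by $(X-\alpha_i)^j$ gives
\[ A_{i,j} \equiv \mu_i \,\eta_i\, (X-\alpha_i)^j \equiv (X-\alpha_i)^j \pmod{(X-\alpha_i)^s}, \]
where the last step uses the definition of $\eta_i$ as the inverse of $\mu_i$ modulo $(X-\alpha_i)^s$. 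This inverse exists because $\mu_i(\alpha_i) = \prod_{i'\neq i}(\alpha_i-\alpha_{i'})^s \neq 0$. This step is the only slightly delicate one: one must check that truncating $\eta_i$ at order $s-1-j$ loses nothing once multiplied by $(X-\alpha_i)^j$.

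\textbf{Conclusion.} Set $g := f - \sum_{i,j} A_{i,j} f^{(j)}(\alpha_i)$. By linearity of the Hasse derivative (item 2) and the profile identity above, $g^{(j')}(\alpha_{i'}) = 0$ for all $i'$ and all $j' < s$. Thus $\mathsf{mult}(g,\alpha_{i'}) \ge s$ at each of the $n$ nodes, giving $\sum_{\alpha} \mathsf{mult}(g,\alpha) \ge sn$. On the other hand, $\deg g \le sn-1$. So \Cref{lemma:mult_schwartz_zippel} forces $g = 0$, which is the claimed identity.
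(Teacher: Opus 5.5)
Your proof is correct and follows the same route as the paper. You establish that $A_{i,j}^{(j')}(\alpha_{i'})$ equals $1$ when $(i,j)=(i',j')$ and $0$ otherwise, then apply the multiplicity Schwartz--Zippel lemma to $g$, which has degree at most $sn-1$. The one cosmetic difference is at the node $\alpha_i$: you verify $A_{i,j}\equiv (X-\alpha_i)^j \pmod{(X-\alpha_i)^s}$ through the Taylor expansion, while the paper uses the product rule to reduce to $(\mu_i\eta_i)^{(j'-j)}(\alpha_i)$.
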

\begin{proof}
    We begin by showing that for any $i, i' \in [n]$ and $0 \leq j, j' \leq s-1$, $A_{i, j}^{(j')}(\alpha_{i'}) = 1$ if $(i, j) = (i', j')$ and $0$ otherwise. First, if $i' \neq i$, then $A_{i, j}(X + \alpha_{i'}) = X^s \cdot B_{i, j}(X)$ for some polynomial $B_{i, j}(X)$ so $A_{i, j}^{(j')}(\alpha_{i'}) = 0$ for all $j, j'$. Similarly, since $A_{i, j}(X + \alpha_i) = X^j \cdot C_{i, j}(X)$ for some polynomial $C_{i, j}(X)$, $A_{i, j}^{(j')}(\alpha_i) = 0$ whenever $i = i'$ and $j' < j$. It thus remains to consider $i = i'$ and $j \leq j'$. By the product rule, we know that 
    \begin{align*}
        A_{i, j}^{(j')}(\alpha_i) &= \sum_{k = 0}^{j'} \mu_i^{(j'-k)}(\alpha_i) \cdot \left[\sum_{t = 0}^{s-1-j} \eta_i^{(t)}(\alpha_i)(X - \alpha_i)^{j+t}\right]^{(k)}(\alpha_i) = \sum_{k = j}^{j'} \mu_i^{(j'-k)}(\alpha_i) \cdot \eta_i^{(k-j)}(\alpha_i) \\
        &= \sum_{\ell = 0}^{j'-j}  \mu_i^{(j'-j-\ell)}(\alpha_i) \cdot \eta_i^{(\ell)}(\alpha_i) = (\mu_i \cdot \eta_i)^{(j'-j)}(\alpha_i).
    \end{align*}
    By construction, $(\mu_i \cdot \eta_i)(X) = 1 + h(X) \cdot (X - \alpha_i)^s$ for some polynomial $h(X)$, so $\mu_i(X+\alpha_i) \cdot \eta_i(X+\alpha_i) = 1 + X^s \cdot h(X+\alpha_i)$. As $0 \leq j'-j \leq s-1$, we conclude that $(\mu_i \cdot \eta_i)^{(j'-j)}(\alpha_i)$ equals 1 if $j' = j$ and 0 otherwise.

    Now, let $g(X) = \sum_{i = 1}^n \sum_{j = 0}^{s-1} A_{i, j}(X) f^{(j)}(\alpha_i)$. Note that since $\deg A_{i, j}, \deg f \leq sn-1$, we know that $\deg (g-f) \leq sn-1$. But for any $i' \in [n]$ and $0 \leq j' \leq s-1$, we have that
        \[ (g-f)^{(j')}(\alpha_{i'}) = g^{(j')}(\alpha_{i'})-f^{(j')}(\alpha_{i'}) = \left[\sum_{i = 1}^n \sum_{j = 0}^{s-1} A_{i, j}^{(j')}(\alpha_{i'}) f^{(j)}(\alpha_i)\right] - f^{(j')}(\alpha_{i'}) = f^{(j')}(\alpha_{i'}) - f^{(j')}(\alpha_{i'}) = 0. \]
    Thus, by the multiplicity Schwartz-Zippel Lemma (\Cref{lemma:mult_schwartz_zippel}), $(g-f)(X) = 0$ and so $f(X) = g(X)$.
\end{proof}

\begin{definition}[Generalized multiplicity codes]
    For invertible matrices $U_1, \ldots, U_n \in \FF_q^{s \times s}$, define the generalized multiplicity (GM) code $\mathsf{GM}_{s, \FF_q}(U_1, \ldots, U_n; \alpha_1, \ldots, \alpha_n; k) := \{(U_1 \cdot c_1, \ldots, U_n \cdot c_n): c \in \mathsf{Mult}_{s, \FF_q}(\alpha_1, \ldots, \alpha_n; k)\}$.
\end{definition}
Note that GM codes have distance at least $n-\frac{k-1}{s}$ by \Cref{lemma:mult_schwartz_zippel}.
\begin{theorem}[Duality of GM codes]\label{thm:mult_duality}
    Let $\FF_q$ be a field, and $s \geq 1$ be a positive integer, and $\alpha_1, \ldots, \alpha_n$ be distinct points in $\FF_q$. Then there exist invertible matrices $U_1, \ldots, U_n \in \FF_q^{s \times s}$, so that for any positive integer $k < sn$, 
        \[ \mathsf{Mult}_{s, \FF_q}(\alpha_1, \ldots, \alpha_n; k) = \mathsf{GM}_{s, \FF_q}(U_1, \ldots, U_n; \alpha_1, \ldots, \alpha_n; sn-k)^{\perp}. \]
\end{theorem}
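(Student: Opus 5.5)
We will prove the duality of GM codes with a residue-style pairing. The natural dual form for multiplicity codes comes from the residue theorem for the rational function $f(X)g(X)/\prod_{i}(X-\alpha_i)^s$, and the $U_i$ are chosen to turn the residue pairing into the coordinatewise inner product. Write $M(X) := \prod_{i=1}^n (X-\alpha_i)^s$, which has degree $sn$. Take any $f \in (\FF_q)_{<k}[X]$ and $g \in (\FF_q)_{<sn-k}[X]$. Then $\deg(fg) \le sn-2$, so $fg/M$ has zero residue at infinity. Since the sum of all residues of a rational function is zero,
\[ \sum_{i=1}^n \mathrm{Res}_{X=\alpha_i} \frac{f(X)g(X)}{M(X)} = 0. \]
Over $\FF_q$ we can phrase this purely algebraically, avoiding any analytic residue theory. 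Write the partial fraction decomposition of $fg/M$. The polynomial part vanishes because $\deg(fg)<\deg M$. The coefficient of $X^{sn-1}$ in $fg$ equals the sum of the coefficients of $(X-\alpha_i)^{-1}$. Since $\deg(fg)\le sn-2$, this sum is $0$.

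Next we compute each local residue in terms of Hasse derivatives. Write $\mu_i(X)=M(X)/(X-\alpha_i)^s$, as in \Cref{lemma:hermite_interpolation}, and let $\nu_i := \mu_i^{-1} \bmod (X-\alpha_i)^s$ (this is the same as $\eta_i$). Expanding around $\alpha_i$ using Hasse derivative fact 1, the residue at $\alpha_i$ is the coefficient of $(X-\alpha_i)^{s-1}$ in $f g \nu_i$. By the product rule this equals
\[ \sum_{a+b+c=s-1} f^{(a)}(\alpha_i)\, g^{(b)}(\alpha_i)\, \nu_i^{(c)}(\alpha_i) = \sum_{a=0}^{s-1} f^{(a)}(\alpha_i) \sum_{b=0}^{s-1-a} \nu_i^{(s-1-a-b)}(\alpha_i)\, g^{(b)}(\alpha_i). \]
This is a bilinear form $c_i^{\top} U_i\, d_i$, where $c_i, d_i$ are the $i$th symbols of the multiplicity encodings of $f$ and $g$. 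The matrix $U_i$ has entry $(U_i)_{a,b} = \nu_i^{(s-1-a-b)}(\alpha_i)$ when $a+b\le s-1$ and $0$ otherwise. So $U_i$ is anti-triangular (a Hankel matrix) with anti-diagonal entries $\nu_i(\alpha_i)=\mu_i(\alpha_i)^{-1}\neq 0$, since the $\alpha$'s are distinct. Hence $U_i$ is invertible, and it depends only on the $\alpha$'s and $s$, not on $k$. We then set the GM code's matrices to be these $U_i$ (or their transposes, to match the definition's convention $U_i\cdot c_i$). The residue identity then says every GM codeword built from $g$ is orthogonal to every codeword of $\Mult_{s,\FF_q}(\alpha;k)$ under the standard inner product on $(\FF_q^s)^n$. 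This gives the inclusion $\mathsf{GM}(\ldots;sn-k) \subseteq \Mult(\ldots;k)^\perp$.

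Finally, we conclude by a dimension count. Because $k<sn$, the encoding map $f\mapsto$ codeword is injective on $(\FF_q)_{<k}[X]$: by \Cref{lemma:mult_schwartz_zippel}, a nonzero polynomial of degree $<sn$ cannot vanish with multiplicity $s$ at all $n$ points. So $\dim \Mult = k$. Likewise $\dim \mathsf{GM}(\ldots;sn-k) = sn-k$, since the $U_i$ are invertible. The dual of a $k$-dimensional subspace of $\FF_q^{sn}$ has dimension $sn-k$, so the inclusion is an equality, which is the claimed identity. The main obstacle is the bookkeeping in the local residue computation: getting the indices of the Hasse-derivative convolution right and checking that the resulting $U_i$ is genuinely invertible. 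The cleanest way to handle this is to work in the shifted variable $Y=X-\alpha_i$ and read off coefficients of truncated power series modulo $Y^s$.
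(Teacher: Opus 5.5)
Your proof is correct and is essentially the paper's argument in residue language. The paper expands $h=fg$ with its Hermite interpolation lemma and uses that the coefficient of $X^{sn-1}$ vanishes; this is exactly your partial-fraction identity for $fg/M$ multiplied through by $M$. The paper's $a_{i,j}$, the leading coefficient of $A_{i,j}$, equals $\eta_i^{(s-1-j)}(\alpha_i)$, so its anti-triangular Hankel $U_i$ coincides with yours. The invertibility argument via $\eta_i(\alpha_i)=\mu_i(\alpha_i)^{-1}\neq 0$ and the closing dimension count are also the same, and your use of multiplicity Schwartz--Zippel to justify the dimensions $k$ and $sn-k$ fills in a step the paper leaves implicit.
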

\begin{proof}
    Consider any pair of polynomials $f(X) \in (\FF_q)_{<k}[X]$ and $g(X) \in (\FF_q)_{<sn-k}[X]$. Let $h(X) := f(X) \cdot g(X)$, and note that $h(X)$ has degree at most $sn−2$. By \Cref{lemma:hermite_interpolation}, there exist polynomials $A_{i, j}(X)$ such that
        \[ h(X) = \sum_{i = 1}^n \sum_{j = 0}^{s-1} A_{i, j}(X) h^{(j)}(\alpha_i). \]

    Letting $a_{i, j}$ denote the coefficient of $X^{sn-1}$ in $A_{i, j}(X)$, we see that the coefficient of $X^{sn-1}$ in $h(X)$ is
        \[ \sum_{i = 1}^n \sum_{j = 0}^{s-1} a_{i, j} h^{(j)}(\alpha_i) = \sum_{i = 1}^n \sum_{j = 0}^{s-1} a_{i, j} \sum_{\ell = 0}^j f^{(\ell)}(\alpha_i) g^{(j-\ell)}(\alpha_i) = 0, \]
    by the product rule and the fact that $\deg h \leq sn-2$. Consider the following anti-triangular matrices:
        \[ U_i = \begin{bmatrix} a_{i, 0} & a_{i, 1} & \cdots & a_{i, s-1} \\ a_{i, 1} & \adots & \adots & 0 \\ \vdots & \adots &  0 & \vdots \\ a_{i, s-1} & 0 & \cdots & 0 \end{bmatrix} \in \FF_q^{s \times s}. \]
    If we define $f_i := (f^{(0)}(\alpha_i), \ldots, f^{(s-1)}(\alpha_i)) \in \FF_q^s$ and $g_i := (g^{(0)}(\alpha_i), \ldots, g^{(s-1)}(\alpha_i)) \in \FF_q^s$, we see that
        \[ \sum_{i = 1}^n \langle f_i, U_i \cdot g_i \rangle = \sum_{i = 1}^n f_i^T \cdot U_i \cdot g_i = \sum_{i = 1}^n \sum_{j = 0}^{s-1} a_{i, j} \sum_{\ell = 0}^j f^{(\ell)}(\alpha_i) g^{(j-\ell)}(\alpha_i) = 0. \]
    We claim that $U_i$ are invertible. To see this, note that by \Cref{lemma:hermite_interpolation}, for any $i \in [n]$, 
        \[ A_{i, s-1}(X) = \mu_i(X)(X-\alpha_i)^{s-1} \eta_i(\alpha_i) = \eta_i(\alpha_i)(X-\alpha_i)^{s-1} \prod_{i' \neq i} (X-\alpha_{i'})^s, \]
    so $a_{i, s-1} = \eta_i(\alpha_i)$. As $\mu_i \cdot \eta_i \equiv 1 \bmod{(X-\alpha_i)^s}$, $\mu_i(\alpha_i) \cdot \eta_i(\alpha_i) = 1$ and thus $a_{i, s-1} \neq 0$. Consequently,
        \[ \det(U_i) = a_{i, s-1}^s \cdot \det(J_s) = a_{i, s-1}^s \cdot (-1)^{s(s-1)/2} \neq 0, \]
    where $J_s$ is the $s \times s$ reversal/exchange matrix. Thus, we have shown that $\sum_{i = 1}^n \langle \Enc_C(f)_i, \Enc_{C'}(g)_i \rangle = 0$ for all $f$ and $g$, where $C := \mathsf{Mult}_{s, \FF_q}(\alpha_1, \ldots, \alpha_n; k)$ and $C' := \mathsf{GM}_{s, \FF_q}(U_1, \ldots, U_n; \alpha_1, \ldots, \alpha_n; sn-k)$. We conclude that $C = (C')^{\perp}$ as both $C$ and $(C')^{\perp}$ are vector spaces of dimension $k$.
\end{proof}

\begin{corollary}[\Cref{thm:mult_dual}]
For all parameters $s$, $q$, and $k < sq$, $(\Mult_{s, \FF_q, k})^{\perp}$ has distance at least $\frac{k+1}{s}$.
\end{corollary}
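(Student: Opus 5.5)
The plan is to combine the duality theorem (\Cref{thm:mult_duality}) with the distance bound for GM codes that follows from the multiplicity Schwartz--Zippel lemma (\Cref{lemma:mult_schwartz_zippel}). Here $n = q$ and the evaluation points $\alpha_1, \ldots, \alpha_q$ are all the elements of $\FF_q$. By \Cref{thm:mult_duality}, there are invertible $U_1, \ldots, U_q$ with
\[ \Mult_{s, \FF_q, k} = \mathsf{GM}_{s, \FF_q}(U_1, \ldots, U_q; 1, \ldots, q; sq-k)^{\perp}. \]
Taking duals of both sides (legitimate since everything is a finite-dimensional $\FF_q$-linear space and $(C^{\perp})^{\perp} = C$ for the nondegenerate bilinear form in the definition), I get $(\Mult_{s, \FF_q, k})^{\perp} = \mathsf{GM}_{s, \FF_q}(U_1, \ldots, U_q; 1, \ldots, q; sq-k)$.

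The remaining work is to bound the distance of this GM code. By linearity it suffices to lower bound the Hamming weight of a nonzero codeword. Such a codeword has the form $(U_1 g_1, \ldots, U_q g_q)$, where $g$ is a nonzero polynomial of degree less than $sq-k$ and $g_i = (g^{(0)}(\alpha_i), \ldots, g^{(s-1)}(\alpha_i))$. Because each $U_i$ is invertible, $U_i g_i = 0$ holds iff $g_i = 0$, and $g_i = 0$ holds iff $\mathsf{mult}(g, \alpha_i) \geq s$. Each zero coordinate therefore contributes at least $s$ to $\sum_{\alpha} \mathsf{mult}(g, \alpha)$. Applying \Cref{lemma:mult_schwartz_zippel} with degree at most $sq-k-1$, the number $z$ of zero coordinates satisfies $sz \leq sq-k-1$. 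Hence the weight is at least $q - z \geq q - \frac{sq-k-1}{s} = \frac{k+1}{s}$.

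The only points needing care are the following. First, the bidual identity must hold for the particular pairing $\sum_{i,j} (c'_i)_j (c_i)_j$; this is the standard dot product on $\FF_q^{sq}$, so it does. Second, the nonzero polynomial $g$ must give a nonzero codeword, which holds because the encoding is injective for degree $< sq$ by the same Schwartz--Zippel argument. Neither step is a real obstacle, and the main content is already contained in \Cref{thm:mult_duality}.
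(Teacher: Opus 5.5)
Your proof is correct and follows the paper's argument: you use \Cref{thm:mult_duality} to identify $(\Mult_{s,\FF_q,k})^{\perp}$ with the GM code of dimension $sq-k$, then bound its distance by $q - \frac{sq-k-1}{s} = \frac{k+1}{s}$ via the multiplicity Schwartz--Zippel lemma. The only difference is that you spell out the bidual step and the role of the invertible $U_i$, which the paper leaves implicit. Your injectivity remark is also unnecessary, since a nonzero codeword can only arise from a nonzero $g$.
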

\begin{proof}
    By \Cref{thm:mult_duality}, $\dist((\Mult_{s, \FF_q, k})^{\perp}) = \dist(\mathsf{GM}_{s, \FF_q}(U_1, \ldots, U_q; 1, \ldots, q; sq-k)) \geq q-\frac{sq-k-1}{s} = \frac{k+1}{s}$.
\end{proof}

\section{Diagonalization Arguments}\label{app:diag}
\begin{proof}[Proof of \Cref{thm:proof_separation}]
    The proof is nearly identical to that of \cite{BHNZ25}, but we include it for completeness. Let $\Oo∶ \{0, 1\}^{*} \to \{0, 1\}$ be an oracle and let $\Oo_\lambda$ be the restriction to $\lambda^7$-bit inputs, where the lower threshold is $\lambda_0$ (per \Cref{remark:bin_input}). We define the unary (promise) language $\Ll^{\Oo}$ so that $1^{\lambda} \in \Ll^{\Oo}$ precisely when $\Oo_{\lambda}$ is a YES instance and $1^{\lambda} \notin \Ll^{\Oo}$ precisely when $\Oo_{\lambda}$ is a NO instance.
    
    We consider only oracles $\Oo$ such that each restriction to size $\lambda^7$-bit inputs encodes either a YES or NO instance, and so the containment $\Ll^{\Oo} \in \QMA^{\Oo}$ follows from \Cref{lemma:qma_easiness} (as we can hardcode the values of $\Ll^{\Oo}$ on all inputs of length at most $\lambda_0^7$). Showing that $\Ll^{\Oo} \in \AM^{\Oo}$ for all $\Oo$ is simple: Arthur samples $x \gets \{0, 1\}^{\lambda}$ and Merlin responds with any $\bfv \in \Sigma^q$ such that $(x \| 0^{q-\lambda}, \bfv) \in R_{C_{\lambda}, H}$ (which always exists as $H \in \mathsf{Good}$). Arthur accepts iff $\Oo(x, \bfv) = 1$. Completeness and soundness follow essentially immediately.
    
    Now we prove the lower bound for $\QCMA$ machines. Let $M_1, M_2, \ldots$ be an enumeration of all possible Turing machines. Identify any surjective function $\iota∶ \NN \twoheadrightarrow \NN^2$ and define functions $j, a∶ \NN \to \Nn$ by $(j(\kappa), a(\kappa)) = \iota(\kappa)$. Define $F ∶ \NN \to \NN$ so that $F(a)$ is the minimum value such that for all $\lambda \geq F(a)$, any $Q(\lambda) \leq a\lambda^a$ query algorithm with $t(\lambda) \leq a\lambda^a$-length classical witness must misclassify some $\Oo_{\lambda}$. By \Cref{lemma:qcma_hardness}, for every integer $a$, $F(a)$ is well-defined. We identify integers $n_1, n_2, \ldots$ where the oracles will be defined to be nonzero. Define integers $n_1 := 1+F(a(1)), n_{\kappa} := 1+\max\{ F(a(\kappa)), a(\kappa-1)(n_{\kappa-1})^{a(\kappa-1)}\}$. For any $n \in \NN \setminus \{n_1, n_2, \ldots\}$, let $\Oo$ equal 0 everywhere. For these input lengths, $\Oo_n$ is trivially a NO instance.

    For each $\kappa \in \NN$, run $M_{j(\kappa)}$ on input $1^{n_{\kappa}}$ for $a(\kappa)n_{\kappa}^{a(\kappa)}$ steps and interpret its output as a quantum query circuit $\Aa_{n_{\kappa}}$ which accepts a classical witness. For every query that $\Aa_{n_{\kappa}}$ makes of length $< n_{\kappa}$, use the previously generated definitions of the oracle $\Oo$ to hardcode these answers. For queries $\Aa_{n_{\kappa}}$ makes of length $> n_{\kappa}$, replace the oracle gates with identity circuits. The resulting circuit will be $\Bb_{n_{\kappa}}$, which only makes queries of length $n_{\kappa}$. This new algorithm $\Bb_{n_{\kappa}}$ can be used to derive an oracle $\Oo_{n_{\kappa}}$ by applying \Cref{lemma:qcma_hardness}.

    It remains to prove that no $\QCMA^{\Oo}$ algorithm exists. Assume, for contradiction, that there exists a $\P$-uniform family of oracle circuits $\{\Aa_{\lambda}\}$ that solves the code intersection problem with witnesses of length $t(\lambda) = \poly(\lambda)$ and $Q(\lambda) = \poly(\lambda)$ queries. Then, $\{\Aa_{\lambda}\}$ appears in the Turing machine enumeration as some $M_{j^{*}}$ and there exists some $a^{*}$ such that $t(\lambda), Q(\lambda) \leq a^{*} \lambda^{a^{*}}$. As $\iota$ is a surjection, there exists a $\kappa^{*}$ such that $\iota(\kappa^{*}) = (j^{*}, a^{*})$. Let $\Aa_{n_{\kappa^{*}}}$ be the quantum circuit for inputs of length $n_{\kappa^{*}}$. Since the oracle is defined as being 0 for inputs $\notin \{n_1, n_2, \ldots \}$ and $n_{\kappa^{*}+1} > a^{*} n_{\kappa^{*}}^{a^{*}}$, each query gate for inputs of length $> n_{\kappa^{*}}$ is an identity gate. Thus the circuit $\Bb_{n_{\kappa^{*}}}$ has the exact same output as $\Aa_{n_{\kappa^{*}}}$ on inputs of size $n_{\kappa^{*}}$. However, using \Cref{lemma:qcma_hardness}, we constructed an oracle $\Oo_{n_{\kappa^{*}}}$ that $\Bb_{n_{\kappa^{*}}}$ will misclassify. Therefore, $\Aa_{n_{\kappa^{*}}}$ will answer incorrectly on input $1^{n_{\kappa^{*}}}$, completing the proof.
\end{proof}

\begin{proof}[Proof of \Cref{thm:advice_separation}]
Our proof will closely follow \cite{LLPY23}. Suppose that for each $\lambda$ we generate $(G_{\lambda}, \Oo'_{\lambda}) \gets \Dd_{\lambda}$ and define a language $\Ll^{\Oo'} := \bigsqcup_{\lambda \in \NN} G^{-1}_{\lambda}(1)$ and an oracle $\Oo'$ that returns $\Oo'_{|x|}(x)$ on a query $x \in \{0, 1\}^{*}$. It suffices to show that $\Ll^{\Oo'} \in \BQP^{\Oo'}/\qpoly \cap \NP^{\Oo'} \cap \mathsf{coNP}^{\Oo'}$ and $\Ll^{\Oo'} \notin \BQP^{\Oo'}/\poly$ with probability 1.

To see that $\Ll^{\Oo'} \in \BQP^{\Oo'}/\qpoly$ with probability 1, observe that Item 1 of \Cref{lemma:bqp_distributional_separation} implies that there is a $\BQP$ machine $\Aa^{\Oo'}$ with polynomial-size quantum advice that decides $\Ll^{\Oo'}$ on all $x$ of length $\lambda$ with probability at least $1-\frac{1}{\lambda^2}$ for sufficiently large $\lambda$. As $\sum_{\lambda = 1}^{\infty} \frac{1}{\lambda^2} = \frac{\pi^2}{6} < \infty$, the Borel–Cantelli lemma (\Cref{lemma:borel_cantelli}) implies that $\Aa^{\Oo'}$ decides $\Ll^{\Oo'}$ for all but finitely many $\lambda$ with probability 1. By hard-coding all $x$'s where $\Aa$ and $\Ll$ disagree, $\Aa$ can be modified into a $\BQP/\qpoly$ machine $\Aa'^{\Oo'}$ that decides $\Ll^{\Oo'}$ on all $x \in \{0, 1\}^{*}$ with probability 1.

In addition, for any $x \in \{0, 1\}^{\lambda}$, we can give any $\bfv$ where $(x \| 0^{q-\lambda}, \bfv) \in R_{C_{\lambda}, H}$ to certify $G(x)$. Thus, by a Chernoff/union bound, we know that with probability $1-\negl(\lambda)$ over $\Dd_{\lambda}$, there exists an $\NP/\coNP$ certificate for all $x \in \{0, 1\}^{\lambda}$. We conclude via an identical argument that $\Ll^{\Oo'} \in \NP^{\Oo'} \cap \coNP^{\Oo'}$ with probability 1.

For a $\BQP$ machine $\Bb$ that takes $\poly(\lambda)$-bit classical advice, we define $S_{\Bb}(\lambda)$ to be the event over the choice of $(G, \Oo')$ that there is a $\poly(\lambda)$-bit classical advice family $\{z_{\Oo'}\}_{\Oo'}$ such that
    \[ \Pr[\forall x \in \{0, 1\}^{\lambda}, \Bb^{\Oo'}(z_{\Oo'}, x) = G(x)] \geq \frac{2}{3}. \]
By Item 2 of \Cref{lemma:bqp_distributional_separation}, there exists $\lambda_0 \in \NN$ such that for all $\BQP$ machines $\Bb$, $\Pr_{\Dd_{\lambda}}[S_{\Bb}(\lambda)] \leq \frac{9}{10}$ for all $\lambda \geq \lambda_0$.

We will consider a sequence of input lengths $\lambda_1, \lambda_2, \ldots$ defined by $\lambda_i := T(\lambda_{i-1})+1$, where $T(\lambda)$ is the running time of $\Bb$ on input of length $\lambda$. This means that when $\Bb$'s input length is $\lambda_{i-1}$, it cannot query the oracle on input lengths $\geq \lambda_i$, so it must be the case that
\begin{align*}
    \Pr[S_{\Bb}(\lambda_i) | S_{\Bb}(\lambda_0) \land \ldots \land S_{\Bb}(\lambda_{i-1})] &= \Pr[S_{\Bb}(\lambda_i)] \\
    \implies \Pr[S_{\Bb}(1) \land S_{\Bb}(2) \land \ldots ] \leq \Pr\left[\bigwedge_{i = 0}^{\infty} S_{\Bb}(\lambda_i)\right] &= \prod_{i = 0}^{\infty} \Pr[S_{\Bb}(\lambda_i) | S_{\Bb}(\lambda_0) \land \ldots \land S_{\Bb}(\lambda_{i-1})] \leq \prod_{i = 0}^{\infty} \frac{9}{10} = 0.
\end{align*}
But there are countably many $\BQP$ machines, so $\Pr[\exists \Bb: S_{\Bb}(1) \land S_{\Bb}(2) \land \ldots] = 0$. We conclude that $\Ll^{\Oo'} \notin \BQP^{\Oo'}/\poly$ with probability 1 over the choice of $(G, \Oo')$, as desired.
\end{proof}
\end{document}